\documentclass[11pt]{article}
\usepackage{authblk}
\title{Data Fusion with Distributional Equivalence Test-then-pool}
\author[1]{Linying Yang\thanks{Corresponding email: linying.yang@stats.ox.ac.uk}}
\author[2]{Xing Liu}
\author[1,3]{Robin J.~Evans}
\affil[1]{Department of Statistics, University of Oxford}
\affil[2]{QuantCo}
\affil[3]{Pioneer Centre for SMARTbiomed, University of Oxford}
\date{\today}

\usepackage{amsmath, amssymb, amsthm, mathrsfs, bm, bbm, mathtools}
\usepackage{geometry, extarrows, graphicx, enumitem, booktabs, subfigure, xcolor}
\usepackage{algorithm, algpseudocode}
\algrenewcommand\alglinenumber[1]{\footnotesize #1:}


\usepackage{natbib}
\usepackage{titlesec}
\usepackage{hyperref}
\usepackage{cleveref}
\usepackage{array}
\usepackage{booktabs,makecell}
\usepackage{subcaption}

\usepackage{fullpage}

\usepackage{soul}

\def\ind{\mathbbm{1}}
\def\E{\mathbb{E}}
\def\R{\mathbb{R}}

\def\cN{\mathcal{N}}
\def\cX{\mathcal{X}}

\DeclareMathOperator{\Var}{Var}

\def\cH{\mathcal{H}}

\def\cA{\mathcal{A}}

\newcommand{\iid}{\overset{\mathrm{i.i.d.}}{\sim}}
\newcommand{\Dnml}{\mathcal{D}_{n,m,\ell}}
\newcommand{\tildet}{\widetilde{T}_{n,m,\ell}}
\newcommand{\tildetstar}{\widetilde{T}_{n,m,\ell}^\ast}
\newcommand{\tnml}{T_{n,m,\ell}}
\newcommand{\tnmlstar}{T_{n,m,\ell}^\ast}

\newcommand{\Qcm}{Q_c^m}
\newcommand{\Qtn}{Q_t^n}
\newcommand{\Qhl}{Q_h^\ell}
\newcommand{\Qfml}{Q_f^{m+\ell}}
\newcommand{\opr}{\operatorname{o}_{\Pr}}
\newcommand{\Opr}{\operatorname{O}_{\Pr}}

\newcommand{\tagaligneq}{\refstepcounter{equation}\tag{\theequation}}

\theoremstyle{plain}
\newtheorem{theorem}{Theorem}[section]
\newtheorem{proposition}[theorem]{Proposition}
\newtheorem{lemma}[theorem]{Lemma}
\newtheorem{corollary}[theorem]{Corollary}
\newtheorem{assumption}[theorem]{Assumption}

\newtheorem{remark}[theorem]{Remark}

\begin{document}

\maketitle
\begin{abstract}
Randomized controlled trials (RCTs) are the gold standard for causal inference,  yet practical constraints often limit the size of the concurrent control arm. Borrowing control data from previous trials offers a potential efficiency gain, but naive borrowing can induce bias when historical and current populations differ. Existing test-then-pool (TTP) procedures address this concern by testing for equality of control outcomes between historical and concurrent trials before borrowing; however, standard implementations may suffer from reduced power or inadequate control of the Type-I error rate.

We develop a new TTP framework that fuses control arms while rigorously controlling the Type-I error rate of the final treatment effect test. Our method employs kernel two-sample testing via maximum mean discrepancy (MMD) to capture distributional differences, and equivalence testing to avoid introducing uncontrolled bias, providing a more flexible and informative criterion for pooling. To ensure valid inference, we introduce partial bootstrap and partial permutation procedures for approximating null distributions in the presence of heterogeneous controls. We further establish the overall validity and consistency.  We provide empirical studies demonstrating that the proposed approach achieves higher power than standard TTP methods while maintaining nominal error control, highlighting its value as a principled tool for leveraging historical controls in modern clinical trials.
\end{abstract}
\section{Introduction}
\label{sec:introduction}
A central objective in causal inference is to determine whether an intervention produces a measurable effect in a target population. Although randomized controlled trials (RCTs) are the gold standard, they can be time-consuming and sometimes face ethical or feasibility constraints. Such challenges can usually affect the recruitment of patients into the placebo arm, inflating the variances of estimation and reducing the test power. At the same time, substantial amounts of clinical data are often available prior to the start of a study, particularly for control arms of similar trials \citep{viele2014use}. Thus, a common remedy is to supplement current data with individual patient data from earlier control arms, which would reduce costs, shorten trial duration, and ease recruitment \citep{schmidli2014robust, galwey2017supplementation, schmidli2020beyond}. This follows the valid incentive that, despite each trial’s unique characteristics, multiple previous trials often share similar control treatments \citep{galwey2017supplementation}.

Nevertheless, borrowing historical controls without accounting for potential discrepancies between studies can introduce bias. Differences such as regional bias or assessment bias can be problematic \citep{burger2021use}. To address this, various methods have been proposed to examine the homogeneity before combining historical or external control data with current control arms in settings where fully powered randomized controlled trials are not feasible, including Bayesian approaches such as meta-analytic-predictive priors \citep{schmidli2014robust}, frequentist techniques such as matching \citep{busgang2022selecting}  and hybrid strategies like Bayesian dynamic borrowing \citep{viele2014use}. We refer to \citet{viele2014use}, \citet{galwey2017supplementation} and \citet{kojima2023dynamic} for the review of these methods.

Another mainstream approach is the test-then-pool (TTP) approach \citep{viele2014use, meng2020survey, gao2023pretest}. This approach proceeds in two stages: a fusion test and a causality test.
The fusion test evaluates whether historical and current data are compatible. If there is insufficient evidence to reject compatibility, the historical data are pooled with the current data, and a causal analysis is performed on the combined sample. 

While relatively straightforward, a major criticism of the classic TTP is its limited power in identifying heterogeneity between historical and current controls. As a result, heterogeneous data may be inappropriately pooled, introducing additional bias, and the probability of such happening is \emph{not} controlled by a test targeting the standard point hypotheses.

%that failing to reject null in the fusion test, - the historical and current control are homogeneous - does not actually imply no difference between the means of the two distributions, but showing there is not enough evidence to reject the null  especially if the current placebo data is small. 

To mitigate this problem, \citet{li2020revisit} replaces the fusion test with an equivalence test. This modification provides the Type-I error control when concluding that the two controls are similar. However, since ``similar'' does not imply ``identical'',  the subsequent causality test requires an appropriate adjustment to account for the differences between the control arms to guarantee the Type-I error control for the overall procedure. Such an adjustment was not explicitly derived or formally specified in their framework. 
Moreover, existing TTP methods remain limited to testing the similarity via mean, thereby ignoring the richer information contained in the full distributions. A detailed review of these approaches is provided in \Cref{sec:ttp}.

In this paper, we propose a new TTP framework for fusing control arms, aimed at increasing the power to detect \emph{distributional treatment effects} (DTE) while rigorously controlling the Type-I error rate. Our method addresses a common critique of classic TTP procedures that they can inflate Type-I errors by providing a formal guarantee of validity for the final treatment effect test. Unlike traditional mean-based equivalence tests, our approach captures the full distributional differences between arms via the \textit{Maximum Mean Discrepancy} (MMD), offering a more comprehensive evaluation of treatment effect heterogeneity. We further provide practical guidance for choosing the equivalence margin to achieve desired power levels.

Our key contributions include:
\begin{itemize} 
    \item Extension of TTP to distributional testing, allowing detection of treatment effects beyond mean differences and enabling broader applications.
    \item Partial bootstrap and partial permutation procedures for valid approximation of the null distribution when fused control arms contain differences.
    \item Equivalence-based fusion test using MMD, enabling principled pooling decisions.
    \item Overall TTP validity proof, establishing that the treatment effect test after fusion retains the nominal Type-I error.
\end{itemize}

In \Cref{sec:background}, we provide background review on MMD and two-sample testing based on it, as well as existing TTP procedures. The proposed TTP method is detailed in \Cref{sec:proposed_method}. In \Cref{sec:synthetic_experiment} we present synthetic experiment results, evaluating the performance of the proposed TTP on Type-I error control and demonstrating the affect of different hyperparameter choices. We present the application on a program, \textit{Prospera}, in \Cref{sec:prospera}. We conclude the paper with potential directions in \Cref{sec:discussion}.

\section{Background}
\label{sec:background}

% We use $Q_c$, $Q_h$, $Q_t$ to denote the outcome distributions of the current control, historical control  and treatment groups, respectively. For any distribution $Q$, the superscript $Q^n$ denotes the empirical distributions of $\{x_i\}_{i=1}^n\iid Q$, i.e.~$Q^n = \frac{1}{n}\sum_{i=1}^n \delta_{x_i}$. We use $\xi$ to denote the kernel mean embedding. Our ultimate goal is to test the presence of DTE, namely as a testing problem where the null hypothesis is $\cH_0:Q_c=Q_t$. We refer to it as the \emph{causality} test. For two sets of observations, $\{x_i^{c}\}_{i=1}^m\sim Q_c$ and $\{x_i^{h}\}_{i=1}^\ell\sim Q_h$, the empirical distribution of their pooled sample, $\{x_i^{c}\}_{i=1}^m \bigcup \{x_i^{h}\}_{i=1}^\ell$ is denoted  $\Qfml= \frac{m}{m+\ell}\Qcm+\frac{\ell}{m+\ell}\Qhl$. We also have observations from treatment group, $\{x_i^{t}\}_{i=1}^n$.

Let $Q_c$, $Q_h$, $Q_t$ be probability measures on $\R$, which denote the outcome distributions of the current control, historical control and treatment groups, respectively. We assume we observe independent random variables $\{x_i^{c}\}_{i=1}^m\sim Q_c$, $\{x_i^{h}\}_{i=1}^\ell\sim Q_h$ and $\{x_i^{t}\}_{i=1}^n \sim Q_t$. For any distribution $Q$, its empirical distribution based on independent samples $\{x_i\}_{i=1}^n \sim Q$ is denoted by $Q^n = \frac{1}{n} \sum_{i=1}^n \delta_{x_i}$, where $\delta_x$ denotes the Dirac measure at $x$. The empirical distribution based on $\{x_i^{c}\}_{i=1}^m$ is denoted by $Q_c^m = \frac{1}{m} \sum_{i=1}^m \delta_{x_i^c}$, and similarly for $Q_t^n$ and $Q_h^\ell$. Our main goal is to test for statistically significant difference between the treatment group $Q_t$ and the control group $Q_c$, possibly after augmenting by the historical control $Q_h$. This is phrased as testing the DTE by setting $\cH_0:Q_c=Q_t$. We refer to it as the \emph{causality} test. For two sets of observations, $\{x_i^{c}\}_{i=1}^m\sim Q_c$ and $\{x_i^{h}\}_{i=1}^\ell\sim Q_h$, the empirical distribution of their pooled sample, $\{x_i^{c}\}_{i=1}^m \bigcup \{x_i^{h}\}_{i=1}^\ell$, is denoted  $\Qfml= \frac{m}{m+\ell}\Qcm+\frac{\ell}{m+\ell}\Qhl$.

\subsection{Distributional treatment effect}
% \xl{
% 1. Our main goal is to test for statistically significant difference between the treatment group $Q_t$ and the control group $Q_c$, possibly after incorporating the historical control $Q_h$. In the TTP literature, the causal effect is typically defined to be the average treatment effect (ATE) by targeting the null hypothesis $\cH_0: \E_{X\sim Q_c}[X] = \E_{Y \sim Q_t}[Y]$. While the ATE is the most widely studied causal estimand in the literature, it is limited to only the first moments and ignore higher moments. 
% 2. For this reason, we focus on DTE in this work.
% 3. List advantages of DTE compared to ATE
% 4. Ours is the first TTP that concerns DTE. Moreover, by choosing suitable kernels, our test can be specialized into detecting ATE too, thus unifying ATE and DTE inference.
% } 

While the average treatment effect (ATE) is the most widely studied causal estimand, it summarizes impact only through differences in means. This focus can obscure important heterogeneity: two treatments may have the same mean effect but very different impacts on the distribution of outcomes. Thus, ATE provides only limited insights. In contrast, DTEs address this limitation by characterizing how an intervention shifts the entire outcome distribution, rather than just its first moment. Increasing attention is being paid to distributional treatment effects, which offer more comprehensive information beyond the mean \citep{abadie2002bootstrap,kennedy2023semiparametric}.

A growing body of work develops nonparametric methods for estimating DTEs. One strand leverages kernel embeddings of probability distributions, which represent distributions in a reproducing kernel Hilbert space (RKHS). This approach allows treatment effects to be evaluated through differences between embedded distributions, making it possible to capture complex shifts without imposing strong parametric assumptions. Kernel-based methods also provide a natural way to compare entire distributions, and can be integrated with flexible machine learning models \citep{muandet2017kernel, muandet2021counterfactual, park2021conditional}. 

By moving beyond comparing only the first moments, estimation of this kind offers a more flexible and informative framework, particularly in settings where heterogeneity or tail behavior is crucial, or when the outcomes are of high-dimensional.

\begin{remark}[Connection with ATE]
An appealing feature of the proposed kernel two-sample testing framework is its flexibility. By choosing a characteristic kernel, the method captures full distributional differences and thus enables inference on distributional treatment effects. At the same time, if we choose the linear kernel $k(x,y) = x^\top y$, the framework reduces to a test for mean differences, recovering the classical average treatment effect. This dual capability highlights that our approach unifies ATE and DTE inference within a single methodological framework. 
\end{remark}

\subsection{MMD and two-sample testing}
\label{sec:MMD_background}
Our TTP framework requires a notion of dissimilarity that quantifies distributional differences. We use the Maximum Mean Discrepancy \citep[MMD;][]{gretton2006kernel,gretton2012kernel}, which we review next. 

Let $k:\cX \times \cX \rightarrow\R$ be a bounded, positive definite kernels. Denote by $\cH_k$ be its associated Reproducing Kernel Hilbert Space \citep[RKHS;][]{berlinet2011reproducing}, equipped with the RKHS inner product $\langle \cdot, \cdot \rangle_{\cH_k}$. The MMD between two distributions $Q$ and $P$ is the maximal difference between their expectations of all functions in the unit ball of $\cH_k$, defined as
\begin{align*}
    D(Q, P) \coloneqq \sup_{f \in \cH_k: \| f \|_{\cH_k} \leq 1} \big| \E_{X \sim Q}f(X) - \E_{Y \sim P}f(Y) \big|
    ,
\end{align*}
where $\| f\|_{\cH_k} = \langle f, f \rangle_{\cH_k}^{1/2}$ is the induced RKHS norm. Under regularity conditions, the MMD can be rewritten as the RKHS-norm of the difference between the so-called \emph{kernel mean embeddings} of the two distributions $Q$ and $P$. For a probability measure $R$, we define its kernel mean embedding as $\xi_R \coloneqq \E_{X \sim R}\,k(\cdot, X)$. Throughout, we will assume the kernel $k$ satisfies the following condition, which guarantees the existence of $\xi_Q, \xi_P$ \citep[Lemma 3]{gretton2012kernel}.
\begin{assumption}[Finite first moment]
\label{ass:kernel_moment}
    The kernel $k$ satisfies $\E_{X \sim R}\sqrt{k(X, X)} < \infty$, for both $R = Q$ and $R=P$.
\end{assumption} 
Under \Cref{ass:kernel_moment}, the \emph{squared} MMD admits the following equivalent forms \citep[Lemma 4, 6]{gretton2012kernel}
\begin{align*}
    D^2(Q, P) 
    &= \big\| \xi_Q - \xi_P \big\|_{\cH_k}^2
    = \E_{X, X' \sim Q}k(X, X') + \E_{Y, Y' \sim P}k(Y, Y') - 2\E_{X \sim Q, Y \sim P}k(X, Y) .
\end{align*}
We further assume that $k$ is \emph{characteristic} \citep{sriperumbudur2011universality} throughout the paper, meaning that the mapping $Q \mapsto \xi_Q$ is injective. In this case, the associated MMD is a statistical metric. In particular, MMD separates distributions, meaning that $D(Q, P) = 0$ if and only if $Q = P$ \citep[Theorem 5]{gretton2012kernel}. Most common kernels, such as Gaussian kernels, Inverse Multi-Quadric (IMQ) kernels, and Mat\'{e}rn kernels, are characteristic. The former two kernels also satisfy \Cref{ass:kernel_moment} for all $Q, P$, since they are bounded.

The above identities also suggest a natural finite-sample estimator for MMD using Monte-Carlo estimation. Given random samples $\{X_i\}_{i=1}^n$ and $\{Y_j\}_{j=1}^m$ drawn independently from $Q$ and $P$, respectively, the MMD can be estimated by the following biased but consistent estimator, formed by replacing the expectations with empirical averages
\begin{align}
    D^2(Q^n, P^m) =
    \frac{1}{n^2}\sum_{i=1}^n\sum_{j=1}^n k(X_i, X_j) + \frac{1}{m^2}\sum_{i=1}^m \sum_{j=1}^m k(Y_i, Y_j) - \frac{2}{nm}\sum_{i=1}^n \sum_{j=1}^m k(Y_i, Y_j).
\label{eq:mmd_v_stat}
\end{align}
It can be shown that \eqref{eq:mmd_v_stat} can be rewritten as a \emph{two-sample V-statistic of degree $(2,2)$} \citep{gretton2012kernel,shekhar2022permutation}, whose asymptotic properties are well-known \citep[Chapter 5]{serfling2009approximation}.

\begin{remark}[U-statistics]
An alternative estimator in the literature is the following estimator
\begin{align}
    U_{nm}=
    \frac{1}{n(n-1)}\sum_{i=1}^n\sum_{\substack{j=1 \\ j\neq i}}^n k(X_i, X_j) + \frac{1}{m(m-1)}\sum_{i=1}^m \sum_{\substack{j=1 \\ j\neq i}}^m k(Y_i, Y_j) - \frac{2}{nm}\sum_{i=1}^n \sum_{j=1}^m k(X_i, Y_j).
    \label{eq:mmd_u_stat}
\end{align}
Similarly to \eqref{eq:mmd_v_stat}, the estimator above can be rewritten as a two-sample U-statistic \citep{hoeffding1948class}. U-statistics share similar statistical properties as V-statistics and have the advantage of being unbiased \citep[Chapter 5]{serfling2009approximation}. For these reasons, \eqref{eq:mmd_u_stat} is the more commonly used estimator in the MMD literature. However, U-statistics can take negative values, whereas V-statistics are always non-negative. As will become clear in \Cref{sec:equivalence_test}, non-negativity is essential for the fusion test used in our proposed method. Therefore, we will focus on V-statistics in this work, and defer a discussion on extensions to U-statistics to \Cref{app:u_stats}.
\end{remark}

MMD has been widely applied in two-sample testing for the homogeneity of two distributions \citep{gretton2006kernel,gretton2012kernel, schrab2023mmd,shekhar2022permutation,chau2024credal}. To test the null hypothesis $\cH_0: Q = P$, MMD checks first the separation property to reformulate the null hypothesis to the equivalent form $\cH_0: D(Q, P) = 0$. It then rejects the null hypothesis if the estimate $D(Q^n, P^m)$ exceeds a critical value, which is the $(1-\alpha)$-quantile of the distribution of $D(Q^n, P^m)$ under $\cH_0$, for a prescribed significance level $\alpha$.

One major challenge in MMD tests is that the critical value is typically intractable and thus needs to be estimated. Common approaches include permutation and bootstrapping, of which both aim to construct empirical samples from the null distribution. The permutation method proceeds by randomly permuting the pooled sample $\big\{Z_\ell\big\}_{\ell=1}^{n+m}=\big\{X_i\big\} \cup\big\{Y_j\big\}$ to form the permuted sample $\big\{Z_\ell^\pi\big\}_{\ell=1}^{n+m}$, and computing the permuted MMD statistic 
\begin{align}
    T_\pi\coloneqq D^2\bigg(\frac{1}{n}\sum_{i=1}^n \delta_{Z_i^\pi}, \frac{1}{m}\sum_{i=n+1}^{n+m} \delta_{Z_i^\pi} \bigg).
\end{align}
Repeating this over $B$ random permuted samples yields the empirical quantity $\big\{T_{\pi^{(b)}}\big\}_{b=1}^B$. The critical value  $\hat{q}^{m,n}_{1-\alpha}$ is then given by the Monte Carlo estimate based on this sample. 
% The test $\phi: \R^m \times \R^n \rightarrow\{0,1\}$ is defined such that $\phi(Q^m,P^n)=1$ if $\cH_0$ is rejected and $\phi(Q^m,P^n)=0$ otherwise, namely
% \begin{align*}
%     \phi(Q^m,P^n)=\mathbbm{1}\Big\{D^2(Q^m,P^n)>\hat{q}_{1-\alpha}^{m,n}\Big\}.
% \end{align*}

Alternatively, the critical value can be estimated using bootstrapping approaches, with the most common method being wild bootstrapping \citep{shao2010dependent,schrab2023mmd}, which constructs the bootstrap samples using Rademacher weights. In our work, we instead follow \citet{liu2026kerneltestsequivalence} to use Efron's bootstrapping, where the bootstrapped statistics are formed by drawing samples from $\{Z_l\}_{l=1}^{n+m}$ with replacement. The critical value is then the $(1-\alpha)$-quantile based on the bootstrapped statistics. As argued in \citet[Section 2.2]{liu2026kerneltestsequivalence}, Efron's bootstrapping gives similar empirical performance while allowing existing theoretical tools to be used to prove statistical properties for their equivalence MMD test.

\subsection{Existing mean-based TTP }
\label{sec:ttp}
We provide a more detailed review of existing TTP methods. As introduced in \Cref{sec:introduction}, TTP is a general two-stage framework consisting of a fusion test and a causality test. The fusion test determines whether a historical control group with outcome distribution $Q_h$ is sufficiently compatible with the current control group $Q_c$ to justify pooling. If compatibility is not rejected, the external and current controls are merged, and the resulting pooled control is compared with the treatment group  $Q_t$ in the causal stage.

The classic TTP does so by performing a statistical test targeting $\check{\cH}_0^{f}: \mu_c = \mu_h$, where $\mu_c$, $\mu_h$ are the means of $Q_c, Q_h$, respectively. If the fusion test fails to be rejected, then the external control group is merged with the current control group to form a pooled group.  The causality test then compares the resulting control group---either alone or pooled---with the treatment group, $Q_t$, to assess the presence of a treatment effect. This is formalized by performing the causality test targeting $\check{\cH}_0: \mu_c = \mu_t$.

While this approach is simple, it is often insufficient to conclude equivalence between the control groups. Indeed, failure to reject $\check{\cH}_0^{f}$ does not imply that $\mu_c$ and $\mu_h$ are equivalent; instead, it can simply be a result of lack of statistical power. Consequently, the TTP framework can incorrectly merge heterogeneous historical control arms into the causal analysis, especially when the current control size $m$ is small relative to the treatment size $n$. This, in turn, can then lead to inflated Type-I error in the causality test \citep{viele2014use}. 

To address this limitation, \citet{li2020revisit} proposed to replace the fusion test with an equivalence test, targeting instead the null hypothesis $ \check{\cH}_0^{f}: |\mu_c - \mu_h| \geq \delta$ for some pre-specified margin $\delta > 0$. Equivalence testing is more natural for assessing the similarity between the control arms, because it provides probabilistic guarantee on the error rate of falsely merging heterogeneous historical controls. However, this equivalence-based TTP test till relies solely on \emph{mean} comparisons and does not account for differences in the broader outcome distributions, which can be important when the goal is to assess effects beyond the mean.

\subsection{Warm-up: Distributional TTP}
The TTP framework is general in nature: any statistical test can be used in the fusion stage. To align with the classic TTP procedure and facilitate comparison, we extend it here from mean-based to distribution-based testing, allowing the framework to target the DTE instead of the ATE. This formulation is not proposed in prior work; rather, we introduce it to formalize a ``classic TTP'' analogue at the distributional level, which will serve as a conceptual baseline and benchmark for our later developments. Specifically, we define the causal null hypothesis as
\begin{align*}
    \cH_0: Q_c=Q_t \quad \text{against} \quad \cH_1:Q_c\neq Q_t.
\end{align*}
When using an MMD two-sample test with a characteristic kernel, this is equivalent to 
\begin{align*}
   \cH_0: D(Q_c, Q_t)=0 \quad \text{against} \quad  \cH_1: D(Q_c, Q_t) > 0.
\end{align*}

Without merging from external control group, the causality test uses merely $\Qcm$ and $\Qtn$. When one sample size is small---here we assume $m\ll n$---the test may have insufficient power to detect differences in treatment effects. This motivates fusing the current control $\Qcm$ with an external or historical control, whose response distribution $Q_h$ is sufficiently similar to $Q_c$. 

The fusion test assesses
\begin{align*}
    \tilde{\cH}_0^{f}: Q_c=Q_h \quad \text{against} \quad  \tilde{\cH}_1^{f}: Q_c\neq Q_h.
\end{align*}
or equivalently,
\begin{align}
    \tilde{{\cH}}_0^{f}: D\big(Q_h,Q_c\big)=0 \quad \text{against} \quad  \tilde{\cH}_1^{f}: D\big(Q_h,Q_c\big) > 0.
     \label{eq:classic_fusion_hypothesis}
\end{align}
The corresponding test function for the fusion stage is
\begin{align}
    \phi_f^0\big(\Qcm,\Qhl\big) = \mathbbm{1}\Big\{D^2\big(\Qcm,\Qhl\big)> \hat{q}_{1-\alpha_f}^{m,\ell}\Big\}
    ,
    \label{eq:classic_fusion_test}
\end{align}
where $\alpha_f$ is the prescribed significance level. Here, $\hat{q}_{1-\alpha_f}^{m,\ell}$ is the permutation-based $(1-\alpha_f)$-quantile of the test statistic.

The intuition is straightforward: if  $\tilde{\cH}_0^f$ is rejected, $\Qhl$ is excluded from pooling, and the causality test remains to be $\phi^0\big(\Qcm,\Qtn\big)$. Otherwise, the fused control-arm response distribution is $\Qfml = \frac{m}{m+\ell}\Qcm+\frac{\ell}{m+\ell}\Qhl$, which is then compared to $\Qtn$ in the causality testing stage using an MMD test based on $D^2\big(\Qfml, \Qtn\big)$.

We denote the overall TTP decision rule by $\phi^0\big(\Qcm,\Qhl,\Qtn\big): \R^m \times \R^l \times \R^n \to \{0,1\}$, where $\phi^0=1$ indicates rejection of $\cH_0$. The function $\phi^0$ is defined conditional on the fusion stage test function $\phi_f^0$, written as
\begin{align*}
    \phi^0\big(\Qcm,\Qhl,\Qtn\big)= \begin{cases}
   \mathbbm{1}\Big\{D^2\big(\Qcm,\Qtn\big)>\hat{q}_{1-\alpha}^{m,n}\Big\}\,, & \phi_f^0\big(\Qcm,\Qhl\big)=1,\\
    \\[-6pt]
    \mathbbm{1}\Big\{D^2\big(\Qfml, \Qtn\big)>\hat{q}_{1-\alpha}^{m+\ell,n}\Big\}\,, & \phi_f^0\big(\Qcm,\Qhl\big)=0.
\end{cases}
\end{align*} 
In both cases, the causality test decision at significance level $\alpha$ is made via the permutation-based test.

This formulation generalizes the classical TTP to a distributional setting. However, the fundamental issue remains: the fusion test $\tilde{\cH}_f$ may be underpowered when the current control sample size $m$ is small, leading to an incorrect merge and inflated Type-I error in the causal stage. This limitation motivates the development of our proposed method.

\section{Method}
\label{sec:proposed_method}
To address the aforementioned problems, we propose a new TTP procedure in this paper. For the fusion stage, we replace the classic two-sample test with an equivalence test. Conditional on the fusion test outcome, the causality test is conducted using either a traditional permutation test if the historical sample is not fused, or a partial bootstrap/permutation test otherwise. The complete procedure is given in \Cref{alg:equiv_ttp}. 

\begin{algorithm}[t]
    \caption{Equivalence TTP}
    \label{alg:equiv_ttp}
    \begin{algorithmic}[1]
        % \State {\bfseries Inputs:} 
        \Require Data $\Qcm$, $\Qhl$ and $\Qtn$;
        equivalence radius $\theta > 0$; 
        test levels $\alpha$ for the causality test and $\alpha_f$ for the fusion test.
        \State  Conduct equivalence test  $\cH_0^f: D\big(Q_h,Q_c\big) \geq \theta$ at level $\alpha_f$.
        \If{$\cH_0^f$ is \textbf{not} rejected}
            \State Conduct a  permutation test between $\Qcm$ and $\Qtn$ for $\cH_0$ at level $\alpha$.
        \Else
            \State Form the pooled control sample 
            \[
                \Qfml = \frac{m}{m+\ell} \Qcm + \frac{\ell}{m+\ell} \Qhl.
            \]
            \State Conduct a partial bootstrap or partial permutation causality test between $\Qfml$ and $\Qtn$ for $\cH_0$ at level $\alpha$.
        \EndIf
    \end{algorithmic}
\end{algorithm}

Unlike the equivalence-based TTP proposed by \citet{li2020revisit}, our method flexibly detects treatment effects at the distributional level, is not restricted to normal distribution, and does not require the variance to be known. In addition, the proposed partial permutation and partial bootstrap procedures ensure the asymptotic validity and consistency of the entire framework, even when the fused control arms are not identical. These properties provide formal theoretical guarantees for Type-I error control of the overall framework, which were not established in \citet{li2020revisit}.

\subsection{Fusion test: MMD equivalence testing}
\label{sec:equivalence_test}
We introduce the fusion test in this section. We propose to test
\begin{align}
    \cH_0^f: D(Q_c,Q_h)\geq \theta \quad\text{against}\quad \cH_1^f:D(Q_c,Q_h)<\theta,
    \label{eq:equiv_fusion_hypothesis}
\end{align}
where $\theta>0$ is the \emph{equivalence radius}.

The benefits of testing the above hypotheses are two-fold. First, unlike \eqref{eq:classic_fusion_hypothesis} where the null hypothesis is that $Q_c$ and $Q_h$ are identical, we employ an equivalence testing setting, where the null hypothesis is that they differ by at least $\theta$ in MMD. With this formulation, we reject only when there is statistical evidence for $D(Q_c,Q_h)<\theta$. This ensures control over the probability of an ``incorrect merge'', where an external control $Q_h$ is distinct from $Q_c$ but the fusion test still suggests merging (namely, not rejecting the $\cH_0^f$ in \eqref{eq:classic_fusion_hypothesis}) due to insufficient power. Meanwhile, we use MMD as a notion of dissimilarity
between $Q_c$ and $Q_h$, which captures not only differences in their means, but in their entire distributions. 

To test \eqref{eq:equiv_fusion_hypothesis}, we employ the MMD equivalence test proposed in \citet[Section 4.2]{liu2026kerneltestsequivalence}. This test uses the test statistic
\begin{align*}
    \Delta_{m,\ell}^f \,\coloneqq\, \theta - D(Q_c^m, Q_h^\ell) ,
\end{align*}
and rejects $\cH_0^f$ in \eqref{eq:equiv_fusion_hypothesis} if $\Delta_{m,\ell}^f > q^{m,\ell}_{1-\alpha_f}$, where $q^{m,\ell}_{1-\alpha_f}$ is the $(1-\alpha_f)$-th quantile of the distribution of $S_{m,\ell} \coloneqq D(Q_c^m, Q_c) + D(Q_h^\ell, Q_h)$.

The intuition behind $S_{m,\ell}$ is that, under $\cH_0^f: D\big(Q_h,Q_c\big) \geq \theta$, it upper bounds $\Delta_{m,\ell}^f$ through a triangle inequality. It can also be shown that $S_{m,\ell}$ is close to 0 with high probability under $\cH_0^f$. Hence, the MMD equivalence test rejects the equivalence null hypothesis if $\Delta_{m,\ell}^f$ is sufficiently large, or, equivalently, if $D(Q_c^m, Q_h^\ell)$ is sufficiently smaller than $\theta$. Using this intuition, \citet[Theorem 8]{liu2026kerneltestsequivalence} establishes the validity and consistency of this equivalence MMD test. We restate their result below with slight adaptations to suit our setting. Its proof is in \Cref{sec:proof_mmd_equiv_test}.

\begin{theorem}[Calibration and consistency of MMD equivalence test]
    \label{thm:mmd_equiv_test}
    Let $\alpha_f \in (0, 1)$. Denote by $q_{1-\alpha_f}^{m,\ell}$ the $(1-\alpha_f)$-th quantile of the distribution of $S_{m,\ell} = D(Q_c^m, Q_c) + D(Q_h^\ell, Q_h)$. Assume $\E_{X, X' \sim Q_c}\big| k(X, X') \big|^2 < \infty$ and $\E_{Y, Y' \sim Q_h}\big| k(Y, Y') \big|^2 < \infty$. Then there exists $a \in (0, \alpha_f]$ such that
    \begin{align*}
        \lim_{n \to \infty} \Pr\big(\Delta_{m,\ell}^f > q_{1-\alpha_f}^{m,\ell}\big)
        \;=\;
        \begin{cases}
            0 , &D\big(Q_h,Q_c\big) > \theta , \\
            a , &D\big(Q_h,Q_c\big) = \theta , \\
            1 , &D\big(Q_h,Q_c\big) < \theta .
        \end{cases}
    \end{align*}
\end{theorem}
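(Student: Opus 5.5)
The plan is to reduce everything to the joint asymptotic behaviour of the rescaled empirical embeddings, working in the regime $m,\ell\to\infty$ with $m/(m+\ell)\to\rho\in(0,1)$ (the ``$n\to\infty$'' in the statement being read as all sample sizes growing). Write $\xi_c,\xi_h$ for the embeddings of $Q_c,Q_h$, and $\xi_c^m,\xi_h^\ell$ for the empirical embeddings. The moment assumptions $\E|k(X,X')|^2<\infty$ make $k(\cdot,X)$ square-integrable in $\cH_k$. A Hilbert-space CLT then gives $\sqrt{m}(\xi_c^m-\xi_c)\Rightarrow G_c$ and $\sqrt{\ell}(\xi_h^\ell-\xi_h)\Rightarrow G_h$, with $G_c,G_h$ independent centred Gaussian elements of $\cH_k$. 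Consequently
\[
\sqrt{N}\,S_{m,\ell}\Rightarrow \rho^{-1/2}\|G_c\|+(1-\rho)^{-1/2}\|G_h\|=:S,\qquad N=m+\ell .
\]
This yields $q^{m,\ell}_{1-\alpha_f}=O(N^{-1/2})$, and indeed $\sqrt{N}\,q^{m,\ell}_{1-\alpha_f}\to q_S$, the $(1-\alpha_f)$-quantile of $S$. This last convergence needs continuity of the law of $S$ at $q_S$; I would argue it holds because $S$ is a sum of norms of nondegenerate Gaussians. The degenerate case $Q_c,Q_h$ Dirac is trivial and gives $S\equiv0$, which is handled separately.

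For the cases $D(Q_h,Q_c)\neq\theta$, I would use only consistency. By the triangle inequality, $|D(Q_c^m,Q_h^\ell)-D(Q_c,Q_h)|\le S_{m,\ell}\to0$ in probability, so $\Delta^f_{m,\ell}\to\theta-D(Q_c,Q_h)$ in probability, while $q^{m,\ell}_{1-\alpha_f}\to0$.
\begin{itemize}
\item If $D>\theta$, the statistic converges to a negative constant and the rejection probability tends to $0$.
\item If $D<\theta$, the statistic converges to a positive constant and the rejection probability tends to $1$.
\end{itemize}

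The boundary case $D(Q_c,Q_h)=\theta>0$ is the main obstacle. Here I would linearize the norm at the nonzero point $\xi_c-\xi_h$. Set $u=(\xi_c-\xi_h)/\theta$; then
\[
\sqrt{N}\big(D(Q_c^m,Q_h^\ell)-\theta\big)=\big\langle u,\sqrt{N}(\xi_c^m-\xi_c)-\sqrt{N}(\xi_h^\ell-\xi_h)\big\rangle+o_P(1).
\]
The remainder is controlled by $\big|\|a+h\|-\|a\|-\langle a/\|a\|,h\rangle\big|\le\|h\|^2/\|a\|$ together with $\|h\|=O_P(N^{-1/2})$. Hence $\sqrt{N}\Delta^f_{m,\ell}\Rightarrow -\langle u,\rho^{-1/2}G_c-(1-\rho)^{-1/2}G_h\rangle=:L$. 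The key point is that $|L|\le S$ almost surely, by Cauchy–Schwarz with $\|u\|=1$. Moreover, $L$ is jointly the limit of the same Gaussian pair, so it is coupled with $S$.

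Then I would show $\Pr(\sqrt{N}\Delta^f>\sqrt{N}q)\to \Pr(L>q_S)=:a$. This uses Slutsky together with the continuity of the Gaussian law of $L$; if $L$ is degenerate, then $a=0$ directly. Finally, $a\le\Pr(S>q_S)\le\alpha_f$ follows from $L\le S$.

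The statement claims $a\in(0,\alpha_f]$, and strict positivity needs nondegeneracy: $L$ must have positive variance, and the quantile $q_S$ must be finite. I would prove $a>0$ by noting that $L$ is a nondegenerate centred Gaussian whenever $\Var_{Q_c}\langle u,k(\cdot,X)\rangle>0$ or the corresponding variance under $Q_h$ is positive. If both variances vanished, $\langle u,k(\cdot,x)\rangle$ would be constant on the supports, and I would check whether that contradicts $D=\theta>0$. If it does not, I would note the degenerate case as needing the mild condition that the statement implicitly assumes, or else allow $a=0$.
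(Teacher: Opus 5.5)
Your argument is correct and takes a genuinely different route from the paper. The paper gives no direct proof. It defers to \citet[Theorem 8]{liu2026kerneltestsequivalence} and notes only that their proof first treats the oracle quantile $q^{m,\ell}_{1-\alpha_f}$ before controlling the bootstrap error, so only that first half is needed. You instead prove the result from scratch. For the two off-boundary cases you use consistency of the empirical embeddings together with $|D(Q_c^m,Q_h^\ell)-D(Q_c,Q_h)|\le S_{m,\ell}$. At the boundary you use a first-order expansion of the norm at $\xi_c-\xi_h\neq 0$ and the Cauchy--Schwarz coupling $|L|\le S$.

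The citation buys brevity; your route buys transparency about which hypotheses are actually used. Note also that $a\le\alpha_f$ needs no limit theory at all. Under $D(Q_c,Q_h)\ge\theta$ the same triangle inequality gives $\Delta^f_{m,\ell}\le S_{m,\ell}$ pointwise, so $\Pr(\Delta^f_{m,\ell}>q^{m,\ell}_{1-\alpha_f})\le\alpha_f$ for every $m,\ell$. The CLT is needed only for existence of the limit and for its positivity.

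On positivity, your pending check resolves against the statement: degeneracy is compatible with $D(Q_c,Q_h)=\theta>0$. Take the RBF kernel, $Q_c=\delta_0$, $Q_h=\frac{1}{2}(\delta_{-1}+\delta_{1})$ and $\theta=D(Q_c,Q_h)$. By symmetry the witness $\xi_c-\xi_h$ takes the same value at $\pm 1$, so $\langle \xi_c-\xi_h,\xi_h^\ell-\xi_h\rangle_{\cH_k}=0$ identically. Hence $D^2(Q_c^m,Q_h^\ell)=\theta^2+\|\xi_h^\ell-\xi_h\|_{\cH_k}^2\ge\theta^2$, so $\Delta^f_{m,\ell}\le 0$ and the rejection probability is exactly $0$ for every $\ell$. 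Two Dirac measures (your ``handled separately'' case) likewise give $a=0$.

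So what your proof establishes, and what is actually true, is this: the limit exists and equals $a=\Pr(L>q_S)\in[0,\alpha_f]$, with $a>0$ if and only if $\Var_{X\sim Q_c}[(\xi_c-\xi_h)(X)]+\Var_{Y\sim Q_h}[(\xi_c-\xi_h)(Y)]>0$. State this as an added hypothesis rather than trying to derive it.

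Two smaller corrections:
\begin{enumerate}
\item The convergence $\sqrt{N}\,q^{m,\ell}_{1-\alpha_f}\to q_S$ requires the distribution function of $S$ to be strictly increasing at $q_S$, not merely continuous there. Your nondegenerate-Gaussian-norm argument does supply this.
\item The Hilbert-space CLT needs $\E k(X,X)<\infty$ (the trace of the second-moment operator). This is not implied by $\E|k(X,X')|^2<\infty$ for independent $X,X'$, which is only its squared Hilbert--Schmidt norm. It is covered by the paper's standing assumption that $k$ is bounded.
\end{enumerate}
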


However, a challenge is that $q_{1-\alpha_f}^{m,\ell}$ is often intractable, because the distribution of $S_{m,\ell}$ is typically unknown. \citet{liu2026kerneltestsequivalence} proposed to use a bootstrap procedure to approximate this unknown quantile. Define 
\begin{align*}
    D^2_W(Q_c^m)
    \coloneqq
    \frac{1}{m^2} \sum_{i=1}^m \sum_{j=1}^m (W_{i} - 1) (W_{j} - 1) k(x_i, x_j)
    ,
\end{align*}
where $W = (W_{1}, \ldots, W_{n}) \sim \textrm{Multinomial}(n; 1/n, \ldots, 1/n)$, and similarly define $D^2_W(Q_h^\ell)$. We then define the bootstrap statistic
\begin{align}
    S^b_{m,\ell}
    \,\coloneqq\, D_{W^b}(Q_c^m) + D_{\widetilde{W}^b}(Q_h^\ell)
    ,
    \label{eq:mmd_equiv_boot_sample}
\end{align}
where $\{ W^b \}_{b=1}^B$ and $\{ \widetilde{W}^b \}_{b=1}^B$ are drawn independently from $\textrm{Multinomial}(n; 1/n, \ldots, 1/n)$. \citet[Lemma 7]{liu2026kerneltestsequivalence} shows that $q^{m,\ell}_{1-\alpha_f}$ can be approximated by the $(1-\alpha_f)$-th quantile of the distribution of the bootstrap sample $\{ S_{m,\ell}^b \}_{b=1}^B$, namely
\begin{align}
    q^{m,\ell}_{1-\alpha_f, B}
    =
    \inf\Big\{u \in \R: \, \frac{1}{B} \sum_{b = 1}^{B} \mathbbm{1}\big\{ S_{m,\ell}^b \leq u \big\} \geq 1 - \alpha \Big\}
    .
    \label{eq: bootstrap quantile}
\end{align}
To summarize, the MMD equivalence test rejects $\cH_0^f$ if $\Delta_{m,\ell}^f > q^{m,\ell}_{1-\alpha_f, B}$. We will denote this test by $\phi_f: \R^m \times \R^\ell \to \{0, 1\}$, where $\phi_f\big(\Qcm, \Qhl\big) = \ind\Big\{ \Delta_{m,\ell}^f > q^{m,\ell}_{1-\alpha_f} \Big\}$.

\subsection{Causality test: partial bootstrap and partial permutation}
Let $\phi_\ast: \R^m \times \R^\ell \times \R^n \to \{0, 1\}$ denote the causality test for $\cH_0: D\big(Q_c, Q_t\big) = 0$, so that $\phi_\ast(\Qcm, \Qhl, \Qtn) = 1$ if $\cH_0$ is rejected, and $0$ otherwise. In particular, conditional on $\phi_f\big(\Qcm, \Qhl\big) = 0$, no merging is performed, and hence the causality test $\phi_\ast\big(\Qcm, \Qhl, \Qtn\big)$ does \emph{not} depend on $\Qhl$.

One of the key step in the causal stage is to approximate the null distribution of the test statistic under $\cH_0$, so that we can estimate the testing threshold. Unlike traditional TTP, the $\Qhl$ we decide to merge can be different from $\Qcm$ (by at most $\theta)$. Consequently, the merged control-arm distribution $\Qfml$ may also differ from $\Qcm$. This creates a problem for directly applying a permutation test between $\Qfml$ and $\Qtn$, as such permutation approximates the distribution of $D^2\big(\Qfml, \Qtn\big)$ when $Q_f=Q_t$, where $Q_f$ is the population-level mixture 
\begin{align*}
    Q_f=\frac{m}{m+\ell}Q_c+ \frac{\ell}{m+\ell}Q_h.
\end{align*}

 The null distribution of $D^2\big(\Qfml, \Qtn\big)$ coincides with that under $Q_f=Q_c$ only when $Q_h=Q_c$. If $Q_h\neq Q_c$, even by a small margin, the direct permutation approach no longer yields the correct null distribution, leading to potential size distortion. We thus propose the following methods to mitigate this problem. For the convenience of the following discussion, we impose the assumption of fixed ratios:

 \begin{assumption}[Fixed sample ratios]
 \label{ass:fixed_ratios}
The samples $\Qcm$, $\Qhl$ and $\Qtn$ are independent with $c_1\coloneqq\frac{m}{n}\in(0,1]$ and $c_2\coloneqq\frac{\ell}{n}\in(0,1]$, where $c_1$ and $c_2$ are constants.
 \end{assumption}
 
\begin{remark}[Generalization of fixed ratios]
\label{remark:generalize_fixed_ratio}
Although we assume fixed sample size ratios $\ell/n$ and $m/n$, the results below extend to cases where these ratios converge to finite limits, as long as the limits are greater than $0$. The proofs remain valid under this relaxation by adding error terms controlling the convergence of ratios. We provide a simple example in \Cref{sec:example_varying_ratio}.
\end{remark}
\subsubsection{Partial bootstrap}
To approximate the distribution of $D^2\big(\Qfml, \Qtn\big)$ under $\cH_0$ in the fused setting, we propose the \emph{partial bootstrap} algorithm. This procedure resamples the current control and treatment groups from the same source distribution $Q_c$ under the null, while independently resamples the historical control group $Q_h$. This preserves the correct dependence structure of the fused control under $\cH_0$ even when $Q_h\neq  Q_c$. The details are provided in \Cref{alg:partial_bootstrap}.

\label{sec:partial_bootstrap}

\begin{algorithm}[!tb]
    \caption{Partial Bootstrap.}
    \label{alg:partial_bootstrap}
    \begin{algorithmic}[1]
        \Require Data $Q_c^m, Q_h^\ell, Q_t^n$; number of bootstrap replicates $B$; test level $\alpha$.
       \For{$b \gets 1$ to $B$}
       \State Bootstrap $m$, $n$ samples from $Q_c^m$ to get $Q_{c,b}^m$, $Q_{t,b}^n$. \label{line:bootstrap}
       \State Bootstrap $\ell$ samples from $Q_h^\ell$ to get $Q_{h,b}^\ell$.
       \State Form $Q_{f,b}^{m+\ell} = \frac{m}{m+\ell}Q_{c,b}^m+ \frac{\ell}{m+\ell}Q_{h,b}^\ell$.
       \State Calculate $\tnmlstar\coloneqq D^2\big(Q_{t,b}^n,Q_{f,b}^{m+\ell}\big)$ and $\tildetstar \coloneqq D^2\big(Q_{c,b}^m,Q_{f,b}^{m+\ell}\big).$
       \EndFor
       
    \State \Return The empirical $(1-\alpha)$-quantile  
    \begin{align}
        \hat{q}_{1-\alpha,B}^{m+\ell,n}=\inf\Bigg\{u\in\mathbb{R}:\frac{1}{B}\sum_{b=1}^B\mathbbm{1}\bigg\{\sqrt{n}\Big[D^2\big(Q_{t,b}^n,Q_{f,b}^{m+\ell}\big)-D^2\big(Q_{c,b}^m,Q_{f,b}^{m+\ell}\big)\Big]\leq u\bigg\}\geq 1-\alpha\Bigg\}.
        \label{eq:partial_boot_quantile}
    \end{align}
    \end{algorithmic}
\end{algorithm}

The test statistic in partial bootstrap is 
\begin{align*}
    \Delta \coloneqq \sqrt{n}\Big(D^2\big(\Qfml,Q_t^n\big) - D^2\big(\Qfml,Q_c^m\big)\Big), 
\end{align*}
and we reject $\cH_0$ at test level $\alpha$ when $\Delta > \hat{q}_{1-\alpha,B}^{m+\ell,n}$, with $\hat{q}_{1-\alpha,B}^{m+\ell,n}$ defined in \eqref{eq:partial_boot_quantile}.

For notational brevity, we write $\tnml = D^2\big(\Qfml,\Qtn\big)$ and $\tildet  = D^2\big(\Qfml,\Qcm\big)$, so that $\Delta =\sqrt{n}\big(\tnml-\tildet\big)$. Furthermore, let $\tnmlstar\coloneqq D^2\big(Q_{f,b}^{m+\ell},Q_{t,b}^{n}\big)$ be the partial bootstrap statistic constructed as in \Cref{alg:partial_bootstrap}, and let $\tildetstar\coloneqq D^2\big(Q_{f,b}^{m+\ell},Q_{c,b}^{m}\big)$ be its centering.  In particular, $\tildet$ is ``partially'' bootstrapped because, although $Q_{f,b}^{m+\ell}$ is sampled with replacement from $Q_f^{m+\ell}$, the other sample $Q_{t,b}^n$ is bootstrapped from the current control arm $Q_c^m$, instead of the standard choice $Q_t^n$. The latter construction is crucial to ensure that the proposed bootstrapping procedure can approximate the null distribution of $\Delta$, even if $Q_c \neq Q_h$. Let ${\Pr}^\ast$ be the bootstrap probability conditional on the observed data $\Dnml\coloneqq\{x_i^{t}\}_{i=1}^n\,\cup\, \{x_i^{c}\}_{i=1}^m\,\cup\, \{x_i^{h}\}_{i=1}^\ell$ that governs the resampling steps. Define
\begin{align*}
    \Delta^\ast = \sqrt{n}\big(\tnmlstar - \tildetstar\big).
\end{align*}

\begin{proposition}[Weak convergence of the partial bootstrap under non-identical control groups]
\label{prop:bootstrap_convergence} 
Suppose \Cref{ass:kernel_moment} holds under $R = Q_c, Q_h, Q_t$, and \Cref{ass:fixed_ratios} is satisfied.  Under the causal null $\cH_0:Q_c=Q_t$ and assuming $Q_c\neq Q_h$, we have
\begin{align*}
   \Delta \rightsquigarrow_{\Pr} \cN\big(0,4(1+1/c_1)\sigma_c^2\big)\quad \text{and} \quad \Delta^\ast \rightsquigarrow_{\Pr} \cN\big(0,4(1+1/c_1)\sigma_c^2\big) 
   ,
\end{align*}
where \begin{align*}
    \sigma_c^2 &\coloneqq\Var_{X\sim Q_c}\Big[\E_{Y\sim Q_f}k(X, Y) - \E_{Z\sim Q_c}k(X, Z)\Big].
\end{align*}
We thus obtain
\begin{align*}
    \sup_{z\in\R} \Big|{\Pr}^\ast\big\{\Delta^\ast\leq z\big\} - \Pr\big\{\Delta\leq z\big\} \Big|\overset{a.s.}{\longrightarrow}0.
\end{align*}
\end{proposition}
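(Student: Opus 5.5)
The plan is to write $\Delta$ as a difference of two squared MMDs that share the pooled embedding, linearize it into a sum of independent centred terms, and then show that the same expansion holds for the resampled data. Write $\xi_t^n,\xi_c^m,\xi_h^\ell,\xi_f^{m+\ell}$ for the empirical kernel mean embeddings and $\xi_c,\xi_h,\xi_f$ for the population ones. Expanding the squared norms gives
\begin{align*}
T_{n,m,\ell}-\widetilde{T}_{n,m,\ell}
&= \|\xi_t^n-\xi_f^{m+\ell}\|_{\cH_k}^2-\|\xi_c^m-\xi_f^{m+\ell}\|_{\cH_k}^2 \\
&= \langle \xi_t^n-\xi_c^m,\; \xi_t^n+\xi_c^m-2\xi_f^{m+\ell}\rangle_{\cH_k}.
\end{align*}
Under $\cH_0$ we have $\xi_t=\xi_c$. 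Decompose $\xi_t^n+\xi_c^m-2\xi_f^{m+\ell} = 2(\xi_c-\xi_f) + R$, where $R$ collects the centred fluctuations $(\xi_t^n-\xi_t)+(\xi_c^m-\xi_c)-2(\xi_f^{m+\ell}-\xi_f)$. This yields
\begin{align*}
\Delta = 2\sqrt{n}\,\langle \xi_t^n-\xi_c^m,\, \xi_c-\xi_f\rangle_{\cH_k} + \sqrt{n}\,\langle \xi_t^n-\xi_c^m, R\rangle_{\cH_k}.
\end{align*}

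The second term should be $\operatorname{O}_{\Pr}(n^{-1/2})$. Each of $\|\xi_t^n-\xi_c^m\|$ and $\|R\|$ is $\operatorname{O}_{\Pr}(n^{-1/2})$, because an empirical embedding in a Hilbert space satisfies $\E\|\xi_R^n-\xi_R\|^2=\bigl(\E k(X,X)-\|\xi_R\|^2\bigr)/n$; for this I would use the second moment $\E k(X,X)<\infty$, slightly strengthening \Cref{ass:kernel_moment} if needed. The leading term is
\begin{align*}
2\sqrt{n}\Bigl(\tfrac1n\textstyle\sum_i g(x_i^t) - \tfrac1m \sum_j g(x_j^c)\Bigr), \qquad g(x) = \E_{Z\sim Q_c}k(x,Z)-\E_{Y\sim Q_f}k(x,Y),
\end{align*}
which is a scaled difference of two independent sample means. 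Both samples come from $Q_c$ and $\Var\, g = \sigma_c^2$. Using $m=c_1 n$ and the classical CLT, its limit is $\cN\bigl(0, 4\sigma_c^2(1+1/c_1)\bigr)$. The assumption $Q_c\neq Q_h$ is what makes $\xi_c\neq\xi_f$, so that $\sigma_c^2$ is generally positive and the limit is non-degenerate. Slutsky then gives the first claim. One caveat: in the paper $Q_f$ depends on $m/(m+\ell)$, which is fixed by \Cref{ass:fixed_ratios}.

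For the bootstrap claim I would repeat the same algebra conditionally on $\Dnml$. Under $\Pr^\ast$, $Q^n_{t,b}$ and $Q^m_{c,b}$ are i.i.d.\ draws from $Q_c^m$, so their conditional "population" embeddings coincide and both equal $\xi_c^m$. Likewise $Q^\ell_{h,b}$ is drawn from $Q_h^\ell$, with conditional embedding $\xi_h^\ell$, and the bootstrap analogue of $\xi_f$ is $\hat\xi_f = \tfrac{m}{m+\ell}\xi_c^m+\tfrac{\ell}{m+\ell}\xi_h^\ell$. Thus the structure is exactly that of the null case, and
\begin{align*}
\Delta^\ast = 2\sqrt n\,\langle \xi_{t,b}^n-\xi_{c,b}^m,\, \xi_c^m-\hat\xi_f\rangle + \sqrt{n}\,\langle \xi_{t,b}^n-\xi_{c,b}^m, R^\ast\rangle.
\end{align*}
The remainder is $\operatorname{O}_{\Pr^\ast}(n^{-1/2})$ almost surely, because conditional second moments are empirical averages of $k(x,x)$, which converge a.s.\ by the SLLN. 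The leading term is a difference of bootstrap means of $\hat g(x)=\langle k(\cdot,x),\xi_c^m-\hat\xi_f\rangle$ under resampling from $Q_c^m$. I would apply the bootstrap CLT for sample means (Bickel--Freedman / Lindeberg for triangular arrays). Its conditional variance, the empirical variance of $\hat g$ over $\{x_i^c\}$, converges a.s.\ to $\sigma_c^2$ by the SLLN together with $\|\hat\xi_f-\xi_f\|,\|\xi_c^m-\xi_c\|\to0$ a.s. The Lindeberg condition also has to be checked, and it follows from $|\hat g(x)|\le \sqrt{k(x,x)}\,\|\xi_c^m-\hat\xi_f\|$ together with the finite second moment.

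Finally, the limit is continuous, so Pólya's theorem upgrades both weak convergences to uniform convergence of the CDFs, and the triangle inequality gives the sup-norm statement a.s. I expect the main obstacle to be the a.s.\ conditional control in the bootstrap step. It requires the bootstrap remainder to vanish in $\Pr^\ast$-probability for almost every data sequence, not merely in probability, and the random centring function $\hat g$ to converge uniformly enough for Lindeberg to hold. I would handle both through second-moment bounds expressed via empirical averages of $k(x,x)$ and the SLLN.
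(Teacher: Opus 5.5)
Your proposal is correct and takes essentially the same route as the paper: your leading term $2\sqrt{n}\langle \xi_t^n-\xi_c^m,\,\xi_c-\xi_f\rangle$ is exactly the paper's $-2\sqrt{n}\langle A, B_n\rangle$, and its bootstrap analogue is the paper's $-2\sqrt{n}\langle A_n, B_n^\ast\rangle$, each followed by a classical or bootstrap CLT for a difference of independent means. Your extra care fills in details the paper leaves implicit. This includes the second-moment condition $\E k(X,X)<\infty$, which the Hilbert-space CLT and finiteness of $\sigma_c^2$ tacitly need beyond \Cref{ass:kernel_moment}. It also includes the Lindeberg check with a.s.\ convergence of the conditional variance, and the use of P\'olya's theorem, which needs $\sigma_c^2>0$ strictly.
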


We provide the proof in \Cref{sec:proof_bootstrap_convergence}. Note that the ratio $c_1 = m/n$ governs the asymptotic variance of $\Delta$: larger values of $c_1$ yield a less variable distribution. This matches the intuition that a larger proportion of current control observations improves stability. Meanwhile, $Q_f$ and $\Qfml$ affect the limiting distributions of $\Delta$ and $\Delta^\ast$ via $\sigma_c^2$  as shown in its definition.

With \Cref{prop:bootstrap_convergence}, we obtain the validity and consistency of partial bootstrap, presented in \Cref{thm:partial_boots_validity,thm:partial_boots_consistency}. Their proofs are deferred to \Cref{sec:proof_partial_boots_validity,sec:proof_partial_boots_consistency}, respectively.

\begin{theorem}[Partial bootstrap validity under non-identical control groups]
\label{thm:partial_boots_validity}
    Suppose \Cref{ass:kernel_moment} holds under $R = Q_c, Q_h, Q_t$, and \Cref{ass:fixed_ratios} is satisfied. Let $\hat{q}_{1-\alpha,B}^{m+\ell,n}$ be the partial bootstrap critical value defined in \eqref{eq:partial_boot_quantile}. Under the causal null hypothesis $\cH_0:Q_c=Q_t$ and assuming $Q_c\neq Q_h$, the test has asymptotic level $\alpha$:
\begin{align*}
    \limsup_{n\to\infty}\lim_{B\to\infty}\Pr \Big\{\Delta>\hat{q}_{1-\alpha,B}^{m+\ell,n}\,\Big|\,\cH_0\Big\}\leq\alpha.
\end{align*}
\end{theorem}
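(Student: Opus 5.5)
The plan is to derive \Cref{thm:partial_boots_validity} directly from \Cref{prop:bootstrap_convergence}, in three stages: first pass $B\to\infty$ for fixed data, then identify the limit of the bootstrap quantile, and finally use continuity of the Gaussian limit.

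\emph{Step 1 ($B\to\infty$).} Condition on the observed data $\Dnml$. The replicates $\{\Delta^{\ast}_b\}_{b=1}^B$ are then i.i.d.\ draws from the conditional law ${\Pr}^\ast$ of $\Delta^\ast$. By the Glivenko--Cantelli theorem, the empirical distribution function $\hat F_B$ converges uniformly to $F^\ast_n(z)\coloneqq{\Pr}^\ast\{\Delta^\ast\le z\}$ almost surely. Let $\hat q^{\ast}_{1-\alpha}$ denote the $(1-\alpha)$-quantile of $F^\ast_n$. Then $\hat{q}_{1-\alpha,B}^{m+\ell,n}$ converges to $\hat q^{\ast}_{1-\alpha}$, except possibly at flat pieces of $F_n^\ast$. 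To avoid this issue, I will instead work with the following sandwich, valid for every $\epsilon>0$ and all $B$ large enough:
\[
\hat{q}_{1-\alpha,B}^{m+\ell,n}\ \ge\ \hat q^{\ast}_{1-\alpha-\epsilon}.
\]
Dominated convergence then gives
\[
\lim_{B\to\infty}\Pr\{\Delta>\hat{q}_{1-\alpha,B}^{m+\ell,n}\}\ \le\ \Pr\{\Delta>\hat q^{\ast}_{1-\alpha-\epsilon}\}.
\]
Here $\lim$ may be read as $\limsup$, since only an upper bound is needed.

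\emph{Step 2 (quantile consistency).} Write $\tau^2=4(1+1/c_1)\sigma_c^2$ and let $\Phi_\tau$ be the $\cN(0,\tau^2)$ distribution function. Assume $\sigma_c^2>0$ for now; the degenerate case is treated in Step 3. \Cref{prop:bootstrap_convergence} gives two facts. First, $\sup_z|F^\ast_n(z)-\Pr\{\Delta\le z\}|\to0$ almost surely. Second, $\Delta\rightsquigarrow\cN(0,\tau^2)$, and because the limit is continuous, Pólya's theorem upgrades this to $\sup_z|\Pr\{\Delta\le z\}-\Phi_\tau(z)|\to0$. Combining the two, $\sup_z|F_n^\ast(z)-\Phi_\tau(z)|\to0$ almost surely. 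Since $\Phi_\tau$ is continuous and strictly increasing, the quantiles converge: $\hat q^{\ast}_{1-\alpha-\epsilon}\to z_{1-\alpha-\epsilon}\coloneqq\Phi_\tau^{-1}(1-\alpha-\epsilon)$ almost surely.

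\emph{Step 3 (conclusion).} Since $\Delta\rightsquigarrow\cN(0,\tau^2)$ and $\hat q^{\ast}_{1-\alpha-\epsilon}\to z_{1-\alpha-\epsilon}$ almost surely, Slutsky's lemma gives $\Delta-\hat q^{\ast}_{1-\alpha-\epsilon}\rightsquigarrow\cN(-z_{1-\alpha-\epsilon},\tau^2)$. Therefore
\[
\Pr\{\Delta>\hat q^{\ast}_{1-\alpha-\epsilon}\}\to 1-\Phi_\tau(z_{1-\alpha-\epsilon})=\alpha+\epsilon.
\]
This yields $\limsup_n\lim_B\Pr\{\cdot\}\le\alpha+\epsilon$, and letting $\epsilon\downarrow0$ completes the argument.

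The main obstacle is the degenerate case $\sigma_c^2=0$. The limit is then a point mass at $0$, so Pólya's theorem and quantile continuity both fail, and the rejection probability might even approach $1$ rather than stay at most $\alpha$. To handle it, I would first check that the proof of \Cref{prop:bootstrap_convergence} actually provides the sup-norm statement. Where that statement is needed, I would assume $\sigma_c^2>0$ as a nondegeneracy condition, which holds whenever $x\mapsto \xi_{Q_f}(x)-\xi_{Q_c}(x)$ is non-constant $Q_c$-a.s. Because $Q_h\neq Q_c$ and $k$ is characteristic, this is the generic situation. If $\sigma_c^2=0$ must be covered, the route would be to rescale by $n$ rather than $\sqrt n$ and redo Steps 1--3 against a degenerate V-statistic limit. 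I expect, however, that the intended proof assumes the nondegenerate Gaussian case implicitly.
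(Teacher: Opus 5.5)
Your argument is correct, and it is organized differently from the paper's.

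\textbf{The paper's route.} The paper never identifies the limit of the critical value. It splits the rejection probability as $[\Pr\{\Delta>\hat q\}-{\Pr}^\ast\{\Delta^\ast>\hat q\}]+{\Pr}^\ast\{\Delta^\ast>\hat q\}$. It bounds the bracket by $\sup_z|F_n^\ast(z)-F_n(z)|$, taken from \Cref{prop:bootstrap_convergence}. It bounds the second term by $\alpha+\delta_B$, using an explicit Dvoretzky--Kiefer--Wolfowitz rate. The Gaussian form of the limit therefore enters only through the Proposition.

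\textbf{Your route.} You remove $B$ by Glivenko--Cantelli with an $\epsilon$-sandwich on quantiles; this is the analogue of the DKW step, without a rate. You then use P\'olya's theorem to show that the bootstrap quantile converges to the deterministic quantile $\Phi_\tau^{-1}(1-\alpha-\epsilon)$. You finish with Slutsky.

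\textbf{What your route buys.} It handles cleanly a point the paper glosses over. Bounding the bracket by $\sup_z|F^\ast_n-F_n|$ treats the data-dependent $\hat q$ as if it were fixed, although $\hat q$ and $\Delta$ are built from the same sample. Reducing to a deterministic critical value removes this dependence. The paper's comparison can also be made rigorous without the Gaussian identification: on $\{\sup_z|F^\ast_n-F_n|\le\eta\}$, the bootstrap quantile is at least the deterministic quantile $q_{n,1-\alpha-\eta}$ of $F_n$.

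\textbf{Degeneracy.} Your caveat here is well taken. Since $\sigma_c^2=(1-\gamma)^2\Var_{X\sim Q_c}[(\xi_h-\xi_c)(X)]$, the assumption $Q_c\neq Q_h$ does not by itself rule out $\sigma_c^2=0$ (for example, when $Q_c$ is a point mass). The paper's proof relies on the same nondegeneracy implicitly. The uniform bootstrap consistency in \Cref{prop:bootstrap_convergence} is obtained from convergence to a common continuous Gaussian limit, which needs $\sigma_c^2>0$.
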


\Cref{thm:partial_boots_validity} assumes $Q_c \neq Q_h$. When $Q_c=Q_h=Q_t$, $\sigma_c^2$ in \Cref{prop:bootstrap_convergence} becomes $0$, resulting in the limiting distribution  being trivial. Our empirical results in \Cref{fig:test_statistic_approximation} suggest that the resulting critical values are greater than or equal to the true ones, implying that the test may be conservative but remains valid. In practice, the case $Q_h=Q_c$ is less common than $Q_t=Q_c$, since temporal shifts or changes in the environment or population typically separate $Q_h$ from $Q_c$. A theoretical analysis of this special case remains an interesting direction for further investigation.

We next provide the consistency of the partial bootstrap test, irrespective of whether $Q_c$ and $Q_h$ are identical or not. Let $\gamma \coloneqq c_1/(c_1+c_2) $, which equals to $m/(m+\ell)$ under \Cref{ass:fixed_ratios}. We denote $\cos \beta \coloneqq  \langle \xi_h-\xi_c\,,\, \xi_t-\xi_c\rangle/\big(\|\xi_h-\xi_c\|\|\xi_t-\xi_c\|\big)$ when $Q_c\neq Q_h$. 

\begin{theorem}[Partial bootstrap test consistency]
\label{thm:partial_boots_consistency} Suppose \Cref{ass:kernel_moment} holds under $R = Q_c, Q_h, Q_t$, and \Cref{ass:fixed_ratios} is satisfied. Suppose $Q_t\neq Q_c$. Regardless of whether $Q_c=Q_h$, we have
\begin{align*}
    \lim_{n\to\infty}\lim_{B\to\infty} {\Pr}\Big\{\Delta>\hat{q}_{1-\alpha,B}^{m+\ell,n}\,\big|\,\cH_1\Big\} = 1,
\end{align*}
if and only if $Q_c=Q_h$, or $Q_c \neq Q_h$ and $2(1-\gamma)D\big(Q_h,Q_c\big)\cos \beta < D(Q_c, Q_t)$. 
\end{theorem}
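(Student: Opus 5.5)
The plan is to compare the population limit of $\Delta/\sqrt{n}$ with the scale of the bootstrap critical value. The key observation is that the bootstrap statistic $\Delta^\ast$ is always generated \emph{as if} $\cH_0$ held: both $Q_{c,b}^m$ and $Q_{t,b}^n$ are resampled from $\Qcm$. So $\Delta^\ast$ should stay $\Opr(1)$ under the alternative, while $\Delta$ grows like $\sqrt{n}$ times a signed population quantity. Consistency then holds exactly when that quantity is strictly positive.

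\textbf{Step 1: the critical value is $\Opr(1)$.} Write $\xi_c,\xi_h,\xi_t$ for the kernel mean embeddings and $\xi_f=\gamma\xi_c+(1-\gamma)\xi_h$. Conditional on the data, the partial bootstrap world has $Q_{t,b}^n$ and $Q_{c,b}^m$ drawn from the same law $\Qcm$, whether or not $Q_t=Q_c$. Hence the argument behind \Cref{prop:bootstrap_convergence} applies conditionally with the empirical laws in place of the population ones.
\begin{itemize}
\item If $Q_c\neq Q_h$, this gives a Gaussian limit with finite variance.
\item If $Q_c=Q_h$, the first-order term vanishes. A crude bound suffices here: expand $\tnmlstar-\tildetstar = \|\xi_{t,b}-\xi_{f,b}\|^2-\|\xi_{c,b}-\xi_{f,b}\|^2$. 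By standard Hilbert-space CLT bounds for bootstrap empirical embeddings, each deviation $\xi_{\cdot,b}-\xi_{\cdot}^{\text{emp}}$ is $\Opr(n^{-1/2})$.
\end{itemize}
In both cases $\Delta^\ast=\Opr(1)$ conditionally, and therefore $\lim_{B\to\infty}\hat q_{1-\alpha,B}^{m+\ell,n}=\Opr(1)$.

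\textbf{Step 2: the limit of $\Delta/\sqrt n$.} By the law of large numbers in $\cH_k$ (using \Cref{ass:kernel_moment}), $\Delta/\sqrt{n}=\tnml-\tildet \to \|\xi_t-\xi_f\|^2-\|\xi_c-\xi_f\|^2$ in probability. Set $a=\xi_t-\xi_c$ and $b=\xi_h-\xi_c$, so that $\xi_f-\xi_c=(1-\gamma)b$. Then
\begin{align*}
\|a-(1-\gamma)b\|^2-(1-\gamma)^2\|b\|^2=\|a\|^2-2(1-\gamma)\langle a,b\rangle = D(Q_c,Q_t)\big[D(Q_c,Q_t)-2(1-\gamma)D(Q_h,Q_c)\cos\beta\big].
\end{align*}
Call this limit $\kappa$. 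When $Q_c=Q_h$ we have $b=0$, so $\kappa=D^2(Q_c,Q_t)>0$. When $Q_c\neq Q_h$, since $D(Q_c,Q_t)>0$ under $\cH_1$, $\kappa>0$ holds exactly under the stated angle condition.

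\textbf{Step 3: conclude both directions.}
\begin{itemize}
\item \emph{Sufficiency.} If $\kappa>0$, then $\Delta=\sqrt n\,\kappa+\opr(\sqrt n)\to\infty$ while the critical value is $\Opr(1)$, so the rejection probability tends to $1$.
\item \emph{Necessity when $\kappa<0$.} Here $\Delta\to-\infty$, so the rejection probability tends to $0$.
\item \emph{Necessity when $\kappa=0$ (boundary case).} Show $\Delta=\Opr(1)$ and is asymptotically Gaussian. Use a first-order (delta-method/Hoeffding) expansion of $\sqrt n(\tnml-\tildet-\kappa)$, whose linear term is $2\sqrt n\langle \xi_t^{n}-\xi_t,\,a-(1-\gamma)b\rangle$ plus analogous terms. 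This limit differs from that of $\Delta^\ast$, and there is no reason for the rejection probability to converge to $1$. Indeed, it stays bounded away from $1$ because $\Delta$ has a nondegenerate (or zero) Gaussian limit while the critical value converges to a finite quantile.
\end{itemize}

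\textbf{Main obstacle.} I expect the hardest part to be Step 1 in the degenerate case $Q_c=Q_h$, together with making the boundary case rigorous. There the bootstrap critical value need not converge to a Gaussian quantile, and I would rely only on tightness, via $\E^\ast\|\xi_{\cdot,b}-\xi^{\text{emp}}\|^2\le \frac{1}{n}\cdot\frac1m\sum_i k(x_i,x_i)$ and Markov's inequality. For the boundary case I would argue that $\liminf$ of the non-rejection probability is positive whenever the Gaussian limit of $\Delta$ has positive mass below the (finite, positive) limiting critical value.
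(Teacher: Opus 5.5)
Your proposal is correct and follows essentially the same route as the paper: show the bootstrap critical value is $O_{\Pr}(1)$ because the bootstrap world always mimics $\cH_0$; identify $\Delta/\sqrt{n}\to\|\xi_t-\xi_c\|^2-2(1-\gamma)\langle\xi_h-\xi_c,\xi_t-\xi_c\rangle$ (the paper's $\Delta_\infty$); conclude sufficiency from its positivity; and handle the boundary case $\Delta_\infty=0$ by a first-order linearization that gives $\Delta$ a Gaussian limit. You are slightly more careful than the paper in two places: you treat $\Delta_\infty<0$ explicitly, and when $Q_c=Q_h$ (where the Gaussian limit of $\Delta^\ast$ degenerates) you bound the critical value by a tightness argument.
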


 Note the consistency comes with conditions regarding $Q_c$ and $Q_h$.  We provide geometry interpretation of such conditions in \Cref{sec:null_approx_comparison}. 

\begin{remark}
    In the later discussion, we assume $B\to \infty$, i.e., we consider the population bootstrap quantile $\hat{q}_{1-\alpha,\infty}^{m+\ell,n}$, so that we can drop $\lim_{B\to\infty}$ in \Cref{thm:partial_boots_validity} and \Cref{thm:partial_boots_consistency} for brevity.
\end{remark}

\subsubsection{Normal approximation}
\label{sec:normal_approximation}
We introduce partial bootstrap to approximate the asymptotic distribution of the test statistic $\Delta$. Since we have the form of limiting distribution of $\Delta$ in \Cref{prop:bootstrap_convergence}, it is straightforward to have an alternative approach to approximate the limiting normal distribution. We estimate the variance $\sigma_c^2$ with empirical observed samples:
\begin{align*}
\hat{\sigma}_c^2 &= \frac{(1-\gamma)^2}{m-1}\sum_{i=1}^m\Big( \hat{g}_i - \frac{1}{m}\sum_{i=1}^m\hat{g}_i\Big)^2,\\ \text{where}
\qquad\hat{g}_i &= \frac{1}{\ell}\sum_{j=1}^\ell k(x_i^c, x_j^h)
- \frac{1}{m-1}\sum_{\substack{j=1 \\ j\neq i}}^{m} k(x_i^c, x_j^c).
\end{align*}
We then take the $(1-\alpha)$-quantile, $\hat{q}_{1-\alpha}^{\prime}$, from $\cN\big(0, (1+1/c_1)\hat{\sigma}_c^2\big)$ as the critical value, and we reject $\cH_0$ if $\Delta > \hat{q}_{1-\alpha}^{\prime}$. Since we directly take the critical value by estimating the limiting distribution, the asymptotic validity and consistency of such normal approximation is straightforward. The proofs are similar to \Cref{sec:proof_partial_boots_validity} and \Cref{sec:proof_partial_boots_consistency} excluding the discussions associated with bootstrap. However, we still recommend partial bootstrap for better approximation of the null distribution, if computation cost is not a concern. Details are provided in \Cref{sec:null_approx_comparison}.

\subsubsection{Partial permutation}
\label{sec:partial_permutation}
Alternatively, one may treat $\Qhl$ as an ancillary sample entering only through the test statistic, and adapt the classical kernel two-sample test with a permutation procedure. We thus propose an alternative  algorithm named partial permutation, which replaces the step \ref{line:bootstrap} of \Cref{alg:partial_bootstrap} by a permutation step: 
\begin{enumerate}
    \item Permute the pooled observations from $\Qcm$ and $\Qtn$ to obtain $Q_{c,\pi^{b}}^m$ and $Q_{t,\pi^{b}}^n$, respectively.
    % \item Independently bootstrap $\ell$ samples from $\Qhl$ to obtain $Q_{h,b}^\ell$.
    % \item Form the fused control  $Q_{f,\pi^b}^{m+\ell} = \frac{m}{m+\ell}Q^{m}_{c,\pi^b}+ \frac{\ell}{m+\ell}Q_{h,b}^\ell$.
    \item Form the fused control  $Q_{f,\pi^b}^{m+\ell} = \frac{m}{m+\ell}Q^{m}_{c,\pi^b}+ \frac{\ell}{m+\ell}Q_{h}^\ell$.
\end{enumerate}
Denote $\tnml^b\coloneqq D^2\Big(Q_{f,\pi^b}^{m+\ell},Q_{t,\pi^{b}}^n\Big) $.  
Similar to classic permutation test, we use $\tnml$ as the test statistic, which is different from $\Delta$ used in the proposed partial bootstrap.
The critical value $\hat{c}_{1-\alpha,B}^{m+\ell,n}$ is set to be 
\begin{align*}
    \hat{c}_{1-\alpha,B}^{m+\ell,n} = \inf\Big\{u:\frac{1}{B+1}\sum_{b=0}^B \mathbbm{1}\big\{\tnml^b\leq u\big\}\geq 1-\alpha\Big\},
\end{align*} where $\tnml^0 = \tnml $.

In the next result, we show the validity and consistency results of partial permutation.

\begin{theorem}[Partial permutation validity]
\label{thm:partial_perm_validity}
Suppose \Cref{ass:kernel_moment} holds under $R = Q_c, Q_h, Q_t$, and \Cref{ass:fixed_ratios} is satisfied. Under the null hypothesis $H_0: Q_c=Q_t$,
\begin{align*}
\Pr\Big\{\tnml \ge\hat{c}_{1-\alpha,B}^{m+\ell,n}\Big\} \le\alpha.
\end{align*}
% Consequently, unconditionally $\Pr\big\{\tnml \ge \hat{c}_{1-\alpha,B}^{m+\ell,n}\big\}\le \alpha$.
\end{theorem}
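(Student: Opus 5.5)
The plan is the standard exchangeability argument for Monte Carlo permutation tests, run conditionally on the historical sample $\Qhl$. Under $\cH_0: Q_c = Q_t$, the $m+n$ observations $\{x_i^c\}_{i=1}^m \cup \{x_i^t\}_{i=1}^n$ are i.i.d.\ from $Q_c$. Since the three samples are independent (\Cref{ass:fixed_ratios}), they remain i.i.d.\ from $Q_c$ conditionally on $\{x_j^h\}_{j=1}^\ell$. I would therefore fix $\{x_j^h\}$ throughout and work with the pooled vector $Z = (x_1^c,\dots,x_m^c,x_1^t,\dots,x_n^t)$. Its joint law is invariant under every permutation $\pi$ of $\{1,\dots,m+n\}$.

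Define the map $F(Z) \coloneqq D^2\big(\frac{m}{m+\ell}\frac{1}{m}\sum_{i\le m}\delta_{Z_i} + \frac{\ell}{m+\ell}\Qhl,\ \frac1n\sum_{i>m}\delta_{Z_i}\big)$. With $\Qhl$ held fixed, this is a deterministic function of $Z$. Then $\tnml = F(Z)$ and $\tnml^b = F(Z^{\pi^b})$, where $\pi^0 = \mathrm{id}$ and $\pi^1,\dots,\pi^B$ are i.i.d.\ uniform random permutations drawn independently of the data. The key point is that $\Qhl$ enters $F$ identically in every permuted copy, because it is not resampled in the partial permutation. So $F$ is simply a fixed measurable statistic of $Z$.

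The core step is to show that $(F(Z^{\pi^0}),\dots,F(Z^{\pi^B}))$ is exchangeable. The standard trick is as follows. Let $\sigma$ be uniform and independent of everything else. Write $\tilde Z = Z^{\sigma}$, which has the same law as $Z$ by invariance, and set $\tau^b = \pi^b\circ\sigma^{-1}$. Then $Z^{\pi^b} = \tilde Z^{\tau^b}$, and $(\tau^0,\dots,\tau^B)$ is an i.i.d.\ uniform collection independent of $\tilde Z$. Since $\tau^0=\sigma^{-1}$ is uniform too, conditional on $\tilde Z$ the vector $(F(\tilde Z^{\tau^b}))_{b=0}^B$ consists of i.i.d.\ draws and is therefore exchangeable. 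Unconditioning, and then integrating over $\Qhl$, gives exchangeability of $(\tnml^0,\dots,\tnml^B)$.

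Given exchangeability, I would finish with the usual rank bound. By the definition of $\hat c_{1-\alpha,B}^{m+\ell,n}$ as the empirical quantile over all $B+1$ values including $\tnml^0$, the event $\tnml^0 > \hat c$ means $\tnml^0$ lies strictly above at least $\lceil (1-\alpha)(B+1)\rceil$ of the values. With ties counted conservatively, this happens for at most $\lfloor \alpha(B+1)\rfloor$ of the indices $b$. Exchangeability then gives probability at most $\alpha$.

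The main obstacle is the stated event $\tnml \ge \hat c$, with a non-strict inequality. Ties, including $\tnml^0$ trivially equalling itself when it is the quantile, can push this probability above $\alpha$. I would handle this in one of two ways. The first is to interpret the rejection as strict, or equivalently to define the p-value $\frac{1}{B+1}\sum_b \ind\{\tnml^b\ge\tnml\}$ and reject when it is $\le\alpha$; the standard bound $\Pr(p\le\alpha)\le\alpha$ then follows from exchangeability with ties handled. The second is to use randomized tie-breaking. I would state this convention explicitly. Note that no assumption on $Q_h$ beyond independence is needed, which is why the result holds for finite samples and requires no condition $Q_c\neq Q_h$.
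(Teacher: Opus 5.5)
Your proof follows the same route as the paper's. Both use the randomization hypothesis $(Z,\Qhl)\stackrel{d}{=}(\pi\cdot Z,\Qhl)$, with $\Qhl$ held fixed and never permuted, and then the Monte Carlo rank argument. Yours is more careful in two places.

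First, your $\sigma$-trick actually establishes exchangeability of $(\tnml^0,\dots,\tnml^B)$. The paper instead asserts that the rank of $\tnml$ is uniform \emph{conditional on the data}, which holds only conditionally on $\Qhl$ and the unordered pooled sample.

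Second, your switch to the strict inequality is necessary, not cosmetic. The critical value $\hat{c}_{1-\alpha,B}^{m+\ell,n}$ is the $\lceil(1-\alpha)(B+1)\rceil$-th order statistic of a list that contains $\tnml$ itself. Exchangeability under $\cH_0$ therefore forces $\Pr\{\tnml\ge\hat{c}_{1-\alpha,B}^{m+\ell,n}\}\ge(\lfloor\alpha(B+1)\rfloor+1)/(B+1)>\alpha$. So the bound, and the paper's claim that at most an $\alpha$-fraction of the values are $\ge$ the critical value, hold only with $>$, which is your p-value formulation. Your second option also works only if it is a genuinely randomized test on $\{\tnml=\hat{c}_{1-\alpha,B}^{m+\ell,n}\}$. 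Mere tie-breaking does not help, because the unavoidable tie is $\tnml$ with itself.
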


The proof is provided in \Cref{sec:proof_partial_perm_validity}. Note that in partial permutation, we no longer require $Q_c$ and $Q_h$ to be non-identical for the validity to hold.

Writing $\lambda \coloneqq 1/(1+c_1)$, which equals to $n/(m+n)$ under  \Cref{ass:fixed_ratios}, we obtain the consistency of partial permutation:

\begin{theorem}[Partial permutation consistency]
\label{thm:partial_perm_consistency}
Suppose \Cref{ass:kernel_moment} holds under $R = Q_c, Q_h, Q_t$, and \Cref{ass:fixed_ratios} is satisfied. When $Q_t\neq Q_c$, with the critical value $\hat{c}_{1-\alpha, B}^{m+\ell,n}$ at the test level $\alpha$, we have 
\begin{align*} 
    \lim_{n\to\infty}\Pr \Big\{\tnml>\hat{c}_{1-\alpha,B}^{m+\ell,n}\,\Big|\,\cH_1\Big\}= 1,
\end{align*}
if and only if $Q_c=Q_h$, or $Q_c \neq Q_h$ and $2D\big(Q_h,Q_c\big)\cos\beta <\Big(\frac{1}{1-\gamma}+\lambda\Big)D\big(Q_t,Q_c\big)$.
\end{theorem}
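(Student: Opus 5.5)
The plan is to compare the limit of the observed statistic $\tnml$ with the limit of the permutation statistics $\tnml^b$. Consistency holds if and only if the first limit strictly exceeds the second. Write $\xi_c,\xi_h,\xi_t$ for the mean embeddings and $\xi_f=\gamma\xi_c+(1-\gamma)\xi_h$.

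\textbf{Step 1: the observed statistic.} By the law of large numbers in $\cH_k$ (Assumption~\ref{ass:kernel_moment}), each empirical embedding converges to its population counterpart. Hence
\[
\tnml \to \|\xi_f-\xi_t\|^2=\|(1-\gamma)(\xi_h-\xi_c)-(\xi_t-\xi_c)\|^2
\]
in probability. Expanding with $\cos\beta$ gives
\[
\|\xi_f-\xi_t\|^2=(1-\gamma)^2D_{hc}^2+D_{tc}^2-2(1-\gamma)D_{hc}D_{tc}\cos\beta ,
\]
where $D_{hc}=D(Q_h,Q_c)$ and $D_{tc}=D(Q_t,Q_c)$.

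\textbf{Step 2: the permutation statistics.} Under a uniform random permutation of the pooled current-control and treatment sample, both $Q_{c,\pi}^m$ and $Q_{t,\pi}^n$ concentrate around the embedding of the pooled mixture,
\[
\xi_p=(1-\lambda)\xi_c+\lambda\xi_t .
\]
I would show this via a Hoeffding-type bound for sampling without replacement in a Hilbert space, conditionally on the data. The conditional deviation is $\Opr(n^{-1/2})$ uniformly over permutations in probability. The historical sample $\Qhl$ is not permuted, so its embedding stays at $\xi_h$. Therefore, conditionally,
\[
\tnml^b\to\|\gamma\xi_p+(1-\gamma)\xi_h-\xi_p\|^2=(1-\gamma)^2\|\xi_h-\xi_p\|^2 .
\]
Writing $\xi_h-\xi_p=(\xi_h-\xi_c)-\lambda(\xi_t-\xi_c)$, this limit equals
\[
(1-\gamma)^2\bigl[D_{hc}^2+\lambda^2D_{tc}^2-2\lambda D_{hc}D_{tc}\cos\beta\bigr].
\]
Since the permutation distribution collapses to this point mass, the critical value $\hat c_{1-\alpha,B}^{m+\ell,n}$ also converges in probability to it. This needs care with the $1/(B+1)$ inclusion of $\tnml^0$: for fixed $B$ with $\alpha(B+1)\ge1$ the quantile is a permuted value, or I take $B\to\infty$.

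\textbf{Step 3: comparing the limits.} Rejection has probability tending to $1$ if the observed limit exceeds the permutation limit, and tending to $0$ if it is smaller. The boundary case of equality is where the "only if" direction needs some argument: either fluctuations of order $n^{-1/2}$ give a nondegenerate limit bounded away from $1$, or one interprets the condition as strict and handles equality separately. Subtracting the two limits, the $D_{hc}^2$ terms cancel and the difference is
\[
D_{tc}^2\bigl(1-(1-\gamma)^2\lambda^2\bigr)-2(1-\gamma)D_{hc}D_{tc}\cos\beta\bigl(1-(1-\gamma)\lambda\bigr).
\]
Factoring out $(1-(1-\gamma)\lambda)>0$ and $D_{tc}>0$, the difference is positive if and only if
\[
2(1-\gamma)D_{hc}\cos\beta<(1+(1-\gamma)\lambda)D_{tc},
\]
that is,
\[
2D_{hc}\cos\beta<\Bigl(\tfrac1{1-\gamma}+\lambda\Bigr)D_{tc},
\]
which is exactly the stated condition. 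When $Q_c=Q_h$ we have $D_{hc}=0$, and the difference is $D_{tc}^2(1-(1-\gamma)^2\lambda^2)>0$, so consistency holds.

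The main obstacle is Step 2: establishing uniform conditional concentration of the permuted empirical embeddings. This is also what makes the critical value converge, and I expect the remaining steps to be routine algebra.
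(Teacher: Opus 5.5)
Your proposal is correct and follows essentially the same route as the paper. Both find the limit $\|(1-\gamma)(\xi_h-\xi_c)-(\xi_t-\xi_c)\|^2$ of $\tnml$ and the limit $(1-\gamma)^2\|\xi_h-\xi_p\|^2$ of the partially permuted statistics, then use the same factorisation through $1-\lambda(1-\gamma)>0$ to show the former exceeds the latter exactly under the stated condition. The differences are only technical: you use concentration for sampling without replacement in $\cH_k$ (only convergence, not the $n^{-1/2}$ rate, is actually needed), whereas the paper uses a finite-population Glivenko--Cantelli theorem plus Hadamard differentiability; and you explicitly flag the requirement $\alpha(B+1)\ge 1$ and the equality boundary case, both of which the paper's proof passes over.
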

The proof can be found in \Cref{sec:proof_partial_perm_consistency}.

\begin{remark}
\label{rem:reason_ttp}
   We show that the merged samples remain valid and consistent under $\cH_0$, regardless of the difference between $Q_c$ and $Q_h$ in the partial permutation or bootstrap algorithms. However, empirically (\Cref{sec:synthetic_experiment}) and theoretically, we find that merging $Q_h$ distributions that are sufficiently close to $Q_c$ improves both test consistency and power. Therefore, we choose a small $\theta$ so that only $\Qhl$ distributions closely matching $\Qcm$ are merged, making the fusion step essential.
\end{remark} 

% \begin{remark}[Conditioning on fixed $\Qhl$]

%     Another natural alternative is to fix $\Qhl$ in the bootstrap process of partial permutation and partial bootstrap, i.e., we only bootstrap $Q_{c,b}^{m}$ and $Q_{t,b}^{n}$ from $\Qcm$, and fuse $Q_{f,b}^{m+\ell} = \frac{m}{m+\ell}Q_{c,b}^{m}+\frac{\ell}{m+\ell}\Qhl$. The $\tnmlstar$ and $\tildetstar$ thus should change with the new $Q_{f,b}$ correspondingly. This modification changes the target distribution from $(Q_t,Q_c, Q_h)$ but $(Q_t,Q_c, \Qhl)$. The special case $Q_t=Q_c=Q_h$ is not problematic, because with high probability $Q_c=Q_h\neq \Qhl$ unless is a point mass, making the partial permutation validity exact in finite samples.
%     However, as $n \to \infty$, $\ell/n  \to 0$ with fixed $\Qhl$, the proposed TTP procedure reduces to the standard test of $Q_c=Q_t$ without $\Qhl$; the validity under $Q_c=Q_t$ using partial bootstrap no longer stand, and the joint procedure validity also fails.
% \end{remark}

\subsubsection{Comparison between partial bootstrap and partial permutation}
\label{sec:null_approx_comparison}
The partial bootstrap and partial permutation procedures generate treatment- and control-arm samples under the causal null from the same empirical source distribution $\Qcm$. The former samples from $\Qcm$ to form the bootstrap samples $Q_{t,b}^n$. This ensures that, under $\cH_0$, the resampled $Q_{t,b}^n$ and the fused $Q_{f,b}^{m+\ell}$ have the correct joint distribution, so the empirical distribution of the test statistic converges to its true null distribution $Q_c=Q_t$ when $Q_c\neq Q_h$. This is the key distinction between partial bootstrap and partial permutation lies in what role 
$\Qhl$ plays in them. When $\Qcm$ and $\Qtn $ are drawn from different distributions, treating 
$\Qhl$ merely as an auxiliary sample can lead to a poor approximation of the null distribution of $\tnml$. The issue arises because permutation also implicitly enforces exchangeability between $\Qcm$  and $\Qhl$ , effectively breaking the natural distance between them. In doing so, the permutation step ignores the distributional gap that must be preserved under the null, resulting in inflated critical values and weaker consistency. By contrast, partial bootstrap maintains the correct dependence structure, leading to a more reliable approximation. A summary of the comparison is provided in \Cref{tab:compare_two_partials}.

\begin{table}[t]
    \centering
    \begin{tabular}{c|c|c}
         &  Partial Bootstrap & Partial Permutation\\
         \hline
     Validity    & Asymptotic & Finite exact \\
     \hline
     Consistency & \makecell{Requires $\xi_h=\xi_c$ or\\ $2D\big(Q_h,Q_c\big)\cos\beta < \frac{1}{1-\gamma}D(Q_t, Q_c)$} &  \makecell{Requires $\xi_h=\xi_c$ or \\$2D\big(Q_h,Q_c\big)\cos\beta < \Big(\frac{1}{1-\gamma}+\lambda \Big)D\big(Q_t,Q_c\big)$}\\
     \hline
     \makecell{Degeneracy \\($Q_t=Q_c$)} & \checkmark (if $Q_h\neq Q_c$) & \checkmark \\
     \hline
     \makecell{Null distribution\\ approximation} & \checkmark & Fail when $Q_t\neq Q_c$
    \end{tabular}
    \caption{Comparison between partial bootstrap and partial permutation.}
    \label{tab:compare_two_partials}
\end{table}

\begin{figure}[t]
    \centering
    \includegraphics[width=0.3\linewidth]{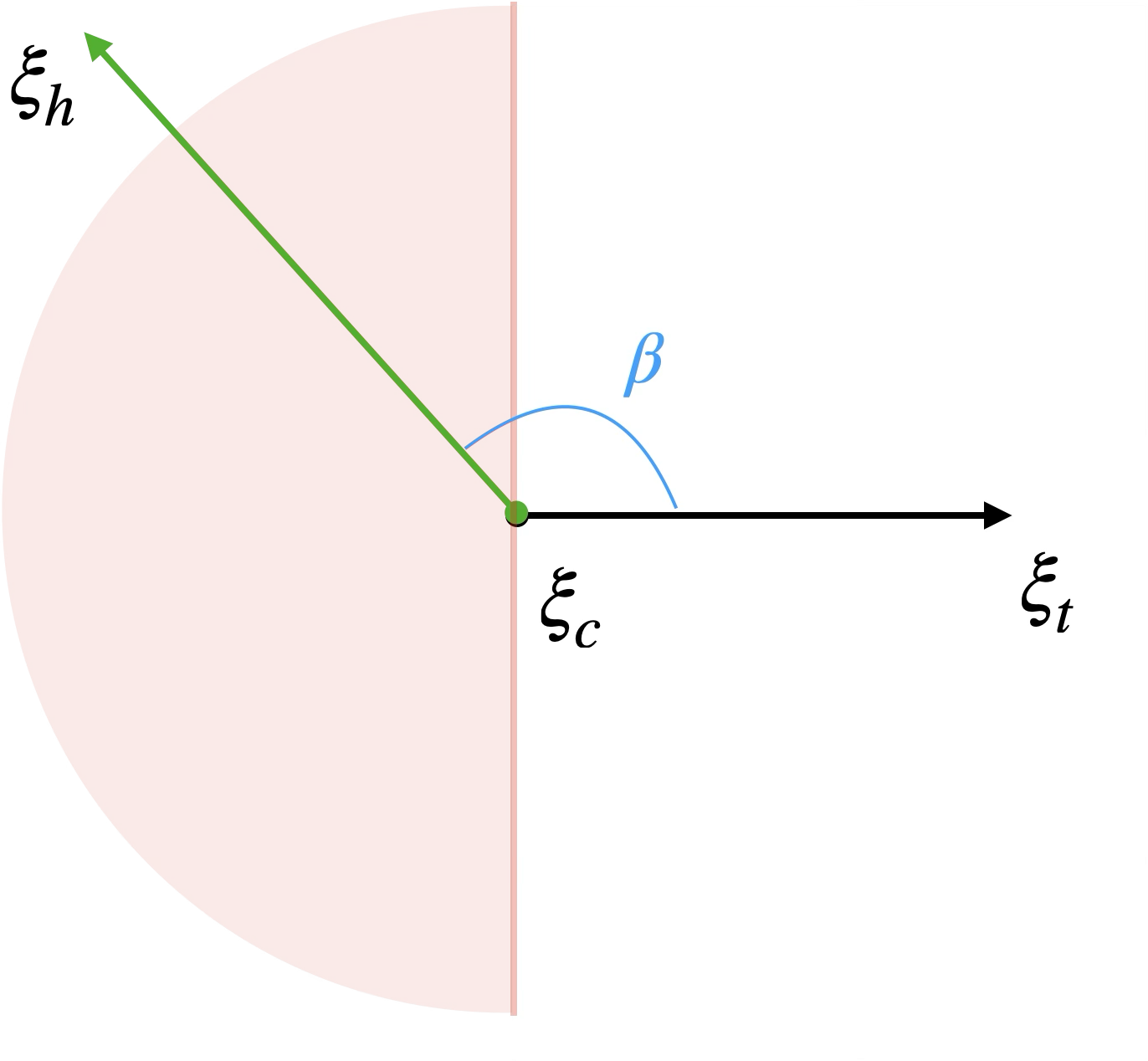}
    \caption{Geometric explanation of $\cos\beta$. The region with $\cos\beta \leq 0$ is pink.}
    \label{fig:consistency_geometry}
\end{figure}

\Cref{fig:consistency_geometry} shows a geometric interpretation of the results, which uses the definition that the kernel mean embedding maps a probability measure to a point in the RKHS $\cH_k$ \citep{song2013kernel}. In our setting, the condition for consistency depends on $D\big(Q_h,Q_c\big)$. When $\cos\beta \leq 0$, there is no restriction. However, when $\cos\beta > 0$ specifically, we require the distance 
$D\big(Q_h,Q_c\big)$ to be relatively small, which is controllable via setting the value of the radius (margin) $\theta$.

\begin{remark}[Conditional consistency]
\label{remark:conditional_consistency}
The consistency of partial permutation and partial bootstrap holds only under certain conditions involving the unknown quantities $D\big(Q_h,Q_c\big)$, $D\big(Q_t,Q_c\big)$, and $\cos \beta$. Since these quantities are not observable in practice, it is unclear whether the conditions are satisfied. A conservative check can be performed by estimating $D\big(Q_h,Q_c\big)$ and $D\big(Q_t,Q_c\big)$ and taking the largest plausible value $\cos \beta = 1$. Bounding $\cos \beta$ by $1$ in \Cref{thm:partial_boots_consistency} yields a sufficient condition for consistency: $2(1-\gamma) D(Q_h, Q_c) < D(Q_c, Q_t)$. This condition provides intuition about the family of consistent alternatives: the causality test $\cH_0$ is consistently rejected if $Q_t$ sufficiently ``further away'' (more different) from $Q_c$ than the merged distribution $Q_h$ by a factor of $2(1-\gamma)$, reflecting the cost of merging samples from potentially different distributions. In particular, when $Q_h = Q_c$, no extra bias is introduced, and the test remains consistent against all $Q_t$.
In practice, a more intuitive approach is to choose a relatively small threshold $\theta$ that upper bounds $D\big(Q_h,Q_c\big)$ for the fused control; when $\gamma$ is relatively large, a larger $\theta$ may be permissible. 
% We further analyze the case where $\theta$ decays with $n$ under fixed $m/n$ in \Cref{sec:decaying_theta}.
    
\end{remark}

\Cref{fig:test_statistic_approximation} illustrates the true null distribution of the test statistic and its approximations using either partial bootstrap or partial permutation under different data generating processes. We consider the setting where $Q_c=\cN(0,1)$. In the setting where $Q_t\neq Q_c$, $Q_t = \cN(-1,1)$. When $Q_c\neq Q_h$, we set $Q_h = \cN(2,1)$. We use sample sizes $n=600$ and $m=\ell=300$. The MMD is computed using the RBF kernel with the bandwidth chosen by the median heuristic, defined in \eqref{eqn:med_heuristic}. The middle plot of the second row highlights the weaker power of partial permutation, which produces unnecessarily large critical values when $Q_t\neq Q_c$.

The last row of \Cref{fig:test_statistic_approximation} shows the distribution obtained via the normal approximation. While the approximation is valid, it performs slightly worse than the partial bootstrap, as the asymptotic distribution becomes accurate only when sample sizes are sufficiently large. We provide more experiment results with $n$ growing in \Cref{sec:further_exp_normal_approx}. In contrast, partial bootstrap achieves a much closer approximation to the true distribution of the test statistic. This comparison highlights a key distinction: while  consistency and validity are addressed in the asymptotic behavior, in finite samples partial bootstrap can deliver substantially better power than the normal approximation.
 
\begin{figure}[t]
    \centering
    \includegraphics[width=1\linewidth]{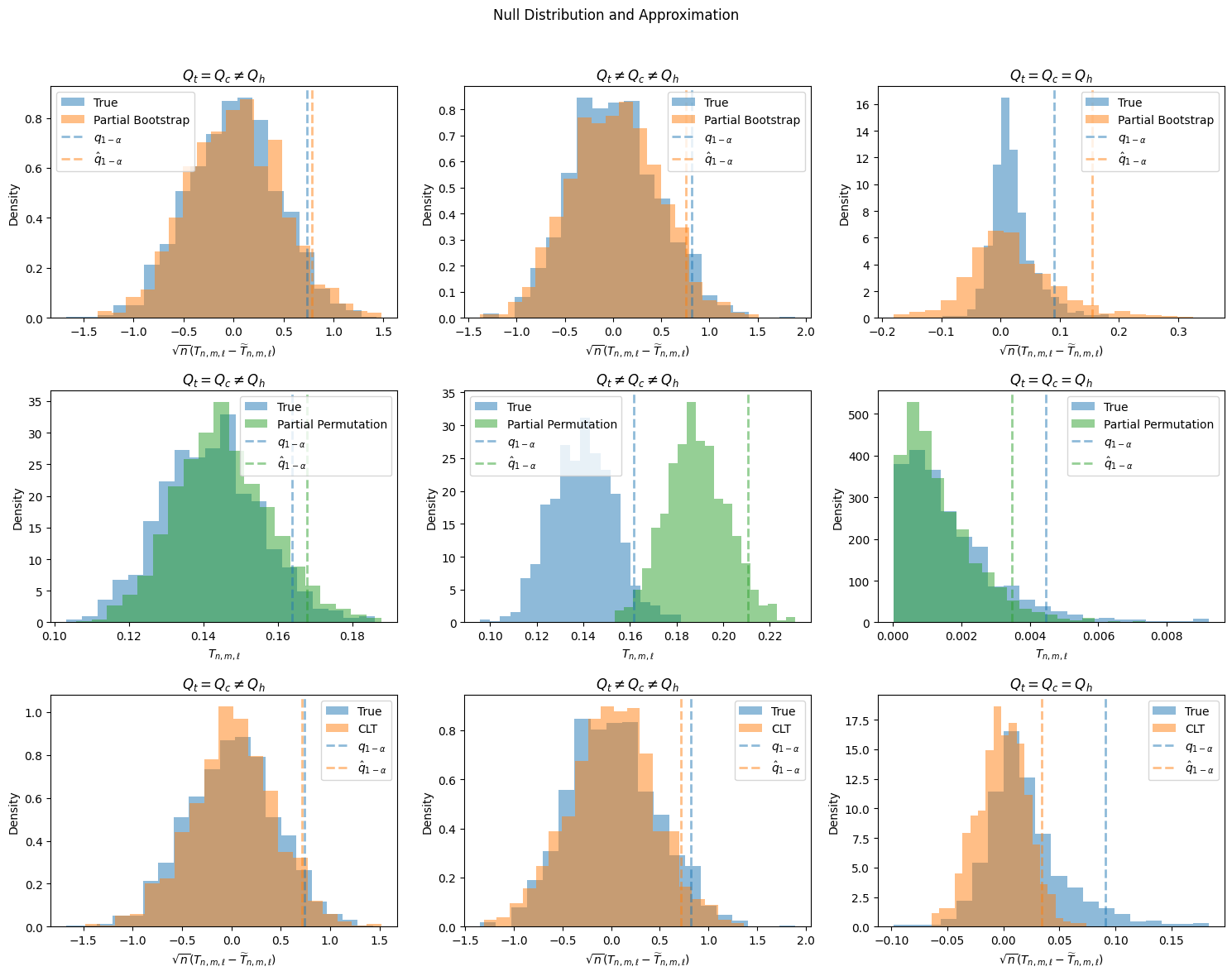}
    \caption{Comparison of the true and approximate distributions of the test statistic under partial bootstrap (first row), partial permutation (middle row) and normal approximation (last row). Results from $1000$ simulations. Within each simulation, $B=1000$ bootstraps/permutations are performed.}
    \label{fig:test_statistic_approximation}
\end{figure}

\subsection{The overall procedure}
\label{sec:overall}
Having introduced the proposed fusion and causality tests, we now show the validity of the overall TTP testing procedure.

The following result shows that the overall equivalence TTP testing procedure is well-calibrated, so long as the causality test $\phi_\ast$ and the fusion test $\phi_f$ are ``well-behaved''. Its proof is in \Cref{pf:thm:validity_general}.
\begin{theorem}
\label{thm:validity_general}
    Let $\alpha, \alpha_f \in (0, 1)$. Consider the equivalence test-then-pool procedure in \Cref{alg:equiv_ttp}. Given probability measures $Q_c, Q_h, Q_t$, assume the fusion test $\phi_f$ satisfies 
    \begin{align}
        \lim_{n \to \infty} \Pr\Big(\phi_f\big(\Qcm, \Qhl\big) = 1\Big)
        =
        \begin{cases}
            0 , &D\big(Q_h,Q_c\big) > \theta , \\
            a , &D\big(Q_h,Q_c\big) = \theta , \\
            1 , & D\big(Q_h,Q_c\big) < \theta ,
        \end{cases}
        \label{eq:fusion_test_validity} 
    \end{align}
    for some $a \in (0, \alpha_f]$. 
    Moreover, assume the causality test $\phi_\ast$ satisfies
    \begin{align}
        \limsup_{n \to \infty} \Pr\Big(\phi_\ast\big(\Qcm, \Qhl, \Qtn\big) = 1 \,\big|\, \phi_f\big(\Qcm, \Qhl\big) = 0\Big) \leq \alpha 
        , \quad \textrm{if $Q_c = Q_t $},
        \label{eq:causal_test_validity_no_merge}
    \end{align}
    and
    % \small
    \begin{align}
        \limsup_{n \to \infty} \Pr\Big(\phi_\ast\big(\Qcm, \Qhl, \Qtn\big) = 1 \,\big|\, \phi_f\big(\Qcm, \Qhl\big) = 1\Big) \leq \alpha 
        , \quad \textrm{if $Q_c = Q_t $ and $D\big(Q_h,Q_c\big) \leq \theta$}.
        \label{eq:causal_test_validity}
    \end{align}
    % \normalsize
    Then the overall equivalence TTP test satisfies 
    \begin{align*}
        \limsup_{n \to \infty} \Pr\Big(\phi_\ast\big(\Qcm, \Qhl, \Qtn) = 1 \,\big|\, \cH_0\Big) \leq \alpha .
    \end{align*}
\end{theorem}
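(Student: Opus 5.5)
We will decompose the overall rejection probability by the law of total probability over the outcome of the fusion test. Under $\cH_0$ (so $Q_c = Q_t$), write
\begin{align*}
    \Pr\big(\phi_\ast = 1\big)
    = \Pr\big(\phi_\ast = 1 \,\big|\, \phi_f = 0\big)\Pr\big(\phi_f = 0\big)
    + \Pr\big(\phi_\ast = 1 \,\big|\, \phi_f = 1\big)\Pr\big(\phi_f = 1\big),
\end{align*}
where we abbreviate $\phi_\ast = \phi_\ast(\Qcm,\Qhl,\Qtn)$ and $\phi_f = \phi_f(\Qcm,\Qhl)$. If either conditioning event has probability zero, we simply drop that term. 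We will then take $\limsup_{n\to\infty}$ and split into the three cases $D(Q_h,Q_c) > \theta$, $=\theta$, and $<\theta$, following \eqref{eq:fusion_test_validity}.

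\emph{Case $D(Q_h,Q_c) \le \theta$.} Both conditional bounds are available: \eqref{eq:causal_test_validity_no_merge} bounds the first conditional probability and \eqref{eq:causal_test_validity} bounds the second. Write $p_n = \Pr(\phi_f = 1)$. The total is then a convex combination $(1-p_n)A_n + p_n B_n$. Since $\limsup$ is subadditive and $p_n \in [0,1]$, we get
\begin{align*}
    \limsup_n \big[(1-p_n)A_n + p_n B_n\big] \le \limsup_n \max(A_n, B_n) \le \max\big(\limsup_n A_n, \limsup_n B_n\big) \le \alpha .
\end{align*}
This argument does not even need the limit of $p_n$.

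\emph{Case $D(Q_h,Q_c) > \theta$.} Here \eqref{eq:causal_test_validity} is not assumed, but \eqref{eq:fusion_test_validity} gives $p_n \to 0$. The second term is therefore at most $p_n \to 0$, since a conditional probability is bounded by one. The first term is at most $\Pr(\phi_\ast = 1 \mid \phi_f = 0)$, whose $\limsup$ is at most $\alpha$ by \eqref{eq:causal_test_validity_no_merge}. Hence the $\limsup$ of the total is at most $\alpha$.

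The only delicate point is measure-theoretic bookkeeping: conditional probabilities are undefined when the conditioning event has probability zero for some $n$. We handle this by working with the joint probabilities $\Pr(\phi_\ast = 1, \phi_f = j)$ and interpreting the hypotheses as bounds on the ratio whenever $\Pr(\phi_f = j) > 0$. A second subtlety is that the conditional hypotheses carry $\limsup$ rather than $\lim$, which is exactly why we will use the max/convex-combination bound instead of multiplying limits. Beyond this, the argument is elementary. Note that the value $a \in (0,\alpha_f]$ plays no role in the validity conclusion; only the vanishing merge probability in the case $D(Q_h,Q_c) > \theta$ is used.
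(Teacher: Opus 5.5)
Your proof is correct and follows the paper's route: you condition on the fusion outcome via the law of total probability, and when $D(Q_h,Q_c)>\theta$ you bound the merged branch by $1$ (its probability tends to $0$) and control the unmerged branch by \eqref{eq:causal_test_validity_no_merge}. The one difference is that you handle $D(Q_h,Q_c)=\theta$ and $D(Q_h,Q_c)<\theta$ together through the convex-combination bound $(1-p_n)A_n+p_nB_n\le\max(A_n,B_n)$, whereas the paper treats them separately using $\Pr(\phi_f=1)\to a$ and $\Pr(\phi_f=0)\to 0$; your version makes explicit that only the $D(Q_h,Q_c)>\theta$ branch of \eqref{eq:fusion_test_validity} is needed for validity.
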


\begin{remark}
    Equation \eqref{eq:fusion_test_validity} is a condition on the asymptotic validity and consistency of the fusion test. It is met by, for example, the MMD equivalence test reviewed in \Cref{sec:equivalence_test}, as shown in \Cref{thm:mmd_equiv_test}.
\end{remark}
\begin{remark}
    \Cref{thm:validity_general} does not require the significance level of the fusion test, $\alpha_f$, to coincide with that of the causality test, $\alpha$. This flexibility follows from the assumptions \eqref{eq:causal_test_validity_no_merge} and \eqref{eq:causal_test_validity}, which require the causality test to achieve the correct asymptotic level when applied to the unmerged and merged samples, respectively. 
\end{remark}
\begin{remark}
    Conditions \eqref{eq:causal_test_validity_no_merge} and \eqref{eq:causal_test_validity} \emph{cannot} be shown by directly applying the validity results for partial bootstrap (\Cref{thm:partial_boots_validity}) and partial permutation (\Cref{thm:partial_perm_validity}). This is because \eqref{eq:causal_test_validity_no_merge} and \eqref{eq:causal_test_validity} are \emph{conditional} on the outcome of the fusion test, whereas \Cref{thm:partial_boots_validity} and \Cref{thm:partial_perm_validity} only concern the \emph{unconditional} Type-I error, which is equivalent to always merging the samples. Nonetheless, we show that these conditions still hold for partial bootstrap and partial permutation, with proofs provided in \Cref{sec:conditional_validity}, thus establishing the validity of our proposed TTP procedure. Similarly, as we discuss below, we extend the consistency results, \Cref{thm:partial_boots_consistency,thm:partial_perm_consistency}, to the setting conditioning on the result of the fusion test.
\end{remark}

Next, we show that the proposed TTP procedure is consistent whenever the causality tests are. The proof is in \Cref{sec:proof_thm_consistency_general}.
\begin{theorem}
\label{thm:consistency_general}
     Consider the equivalence test-then-pool procedure in \Cref{alg:equiv_ttp}. Given probability measures $Q_c, Q_h, Q_t$, assume that the fusion test satisfies \eqref{eq:fusion_test_validity}, and that
     \begin{align}
         \lim_{n \to \infty} \Pr\Big(\phi_\ast\big(\Qcm, \Qhl, \Qtn\big) = 1 \,\big|\, \phi_f\big(\Qcm, \Qhl\big) = 0\Big)
         = 1 ,
         \label{eq:causal_test_consistency_no_merge}
    \end{align}
    and
    \begin{align}
         \lim_{n \to \infty} \Pr\Big(\phi_\ast\big(\Qcm, \Qhl, \Qtn\big) = 1 \,\big| \,\phi_f\big(\Qcm, \Qhl\big) = 1\Big)
         = 1, \quad \textrm{if $D\big(Q_h,Q_c\big) \leq \theta$}.
         \label{eq:causal_test_consistency}
     \end{align}
     Then the overall equivalence TTP test satisfies
     \begin{align*}
         \lim_{n \to \infty}\Pr\Big(\phi_\ast\big(\Qcm, \Qhl, \Qtn\big) = 1 \,\big|\, \cH_1\Big) = 1 .
     \end{align*}
\end{theorem}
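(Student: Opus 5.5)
The plan is to decompose the overall rejection probability according to the outcome of the fusion test, using the law of total probability. Under $\cH_1$ we write
\begin{align*}
    \Pr\big(\phi_\ast = 1\big)
    = \Pr\big(\phi_\ast = 1 \,\big|\, \phi_f = 0\big)\Pr\big(\phi_f = 0\big) + \Pr\big(\phi_\ast = 1 \,\big|\, \phi_f = 1\big)\Pr\big(\phi_f = 1\big),
\end{align*}
with the convention that a conditional term is dropped if its conditioning event has probability zero. Equivalently, we show $\Pr(\phi_\ast = 0) \to 0$. To do so, bound $\Pr(\phi_\ast=0)$ by
\begin{align*}
    \Pr\big(\phi_\ast = 0 \,\big|\, \phi_f = 0\big)\Pr\big(\phi_f=0\big) + \Pr\big(\phi_\ast = 0 \,\big|\, \phi_f = 1\big)\Pr\big(\phi_f=1\big).
\end{align*}
We then split into cases according to the value of $D(Q_h,Q_c)$ relative to $\theta$.

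\textbf{Case $D(Q_h,Q_c) > \theta$.} By \eqref{eq:fusion_test_validity}, $\Pr(\phi_f = 1) \to 0$, so the second term is at most $\Pr(\phi_f=1) \to 0$. The first term is at most $\Pr(\phi_\ast=0 \mid \phi_f=0)$, which tends to $0$ by \eqref{eq:causal_test_consistency_no_merge}.

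\textbf{Case $D(Q_h,Q_c) \le \theta$.} Both conditional non-rejection probabilities tend to zero, by \eqref{eq:causal_test_consistency_no_merge} and \eqref{eq:causal_test_consistency} respectively. Since the marginal weights $\Pr(\phi_f = 0)$ and $\Pr(\phi_f=1)$ are bounded by one, the sum tends to zero. This argument does not need to know whether $\Pr(\phi_f=1)$ converges to $1$ or to $a$: in the boundary case $D = \theta$ the fusion test's limit is $a\in(0,\alpha_f]$, but the bound holds regardless. Hence we never need the exact fusion limit here, only consistency of each branch.

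The only delicate point is conditioning on events of vanishing or zero probability. For example, when $D<\theta$ and $\Pr(\phi_f=0)\to 0$, the conditional probability in \eqref{eq:causal_test_consistency_no_merge} may be ill-defined along some subsequence of $n$. I would handle this by interpreting each product $\Pr(A\mid B)\Pr(B)$ as $\Pr(A\cap B)$, which is always well defined and bounded by both factors. In each case, I would then bound $\Pr(A\cap B)$ by whichever of $\Pr(B)$ or $\Pr(A\mid B)$ is known to vanish, taking $\limsup$ throughout. Combining the cases gives $\lim_{n\to\infty}\Pr(\phi_\ast=1\mid\cH_1)=1$.
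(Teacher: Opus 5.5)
Your proof is correct and follows the paper's route: the paper also decomposes $\Pr(\phi_\ast=1)=T_1T_2+T_3(1-T_2)$ over the fusion outcome, with $T_2=\Pr(\phi_f=1)$ and $T_1,T_3$ the conditional rejection probabilities given merging and not merging. It then combines the limit of $T_2$ from \eqref{eq:fusion_test_validity} with the two conditional consistency assumptions. Your split on $D(Q_h,Q_c)>\theta$ versus $D(Q_h,Q_c)\le\theta$, and your reading of each product as a joint probability, makes explicit a point the paper glosses over. The paper asserts $\lim_n T_1=1$ in every case, although \eqref{eq:causal_test_consistency} is only assumed for $D(Q_h,Q_c)\le\theta$; this is harmless only because $T_2\to0$ when $D(Q_h,Q_c)>\theta$.
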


\Cref{thm:validity_general,thm:consistency_general} work with arbitrary fusion and causality tests, and outline sufficient conditions on these components for such consistency to hold for the corresponding TTP procedure. We argue in Appendices \ref{pf:cor:validity} and \ref{sec:proof_cor_consistency} that these regularity conditions hold for the proposed MMD equivalence fusion test and the causality test with either partial bootstrap or partial permutation, which gives us \Cref{cor:validity,cor:consistency}. Notably, the significance level of the fusion test, $\alpha_f$, need \emph{not} be identical to that of the causality test, $\alpha$.

\begin{corollary}
\label{cor:validity}
    Suppose \Cref{ass:kernel_moment} holds under $R = Q_c, Q_h, Q_t$, and \Cref{ass:fixed_ratios} is satisfied. Let $\alpha, \alpha_f \in (0, 1)$, and consider the equivalence TTP test described in \Cref{alg:equiv_ttp}, where the fusion test has level $\alpha_f$. 
    Under $\cH_0: Q_c = Q_t$, 
    \begin{enumerate}
        \item Using the partial bootstrap causality test with level $\alpha$, the equivalence TTP test satisfies the following asymptotic level if $Q_h \ne Q_c$:
        \begin{align}
            \limsup_{n \to \infty}  \Pr\Big(\phi_\ast\big(\Qcm, \Qhl, \Qtn\big) = 1 \,\big|\, \cH_0\Big) \leq \alpha .
            \label{eq:overall_type_I}
        \end{align}
        \item Using the partial permutation causality test with level $\alpha$, the equivalence TTP test satisfies \eqref{eq:overall_type_I} for all $Q_h$.
    \end{enumerate}
\end{corollary}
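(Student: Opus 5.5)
The plan is to derive \Cref{cor:validity} from \Cref{thm:validity_general}. Its fusion-test hypothesis \eqref{eq:fusion_test_validity} is provided by \Cref{thm:mmd_equiv_test}, since bounded (or square-integrable) kernels satisfy the moment conditions there. What remains is to check the two conditional level conditions \eqref{eq:causal_test_validity_no_merge} and \eqref{eq:causal_test_validity}.

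A conditional Type-I error can be converted into unconditional quantities through
\begin{align*}
\Pr(\phi_\ast=1,\ \phi_f=j)\le \Pr(\text{causality test rejects on the relevant samples}).
\end{align*}
Rather than dividing by $\Pr(\phi_f=j)$, I will use the decomposition directly:
\begin{align*}
\Pr(\phi_\ast=1)=\Pr(\phi_\ast=1,\phi_f=0)+\Pr(\phi_\ast=1,\phi_f=1).
\end{align*}
Under $\cH_0$ I split into cases according to $D(Q_h,Q_c)$ versus $\theta$.

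If $D(Q_h,Q_c)>\theta$, then $\Pr(\phi_f=1)\to0$. Hence $\Pr(\phi_\ast=1)\le \Pr(\text{permutation test of }\Qcm\text{ vs }\Qtn\text{ rejects})+\opr(1)$. The first term is at most $\alpha$ by exactness of the classical permutation test.

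If $D(Q_h,Q_c)<\theta$, then $\Pr(\phi_f=0)\to0$. Hence $\Pr(\phi_\ast=1)\le \Pr(\text{merged test rejects})+\opr(1)$. The merged test here is the unconditional partial test, which always merges, so its rejection probability is controlled by \Cref{thm:partial_boots_validity} when $Q_h\ne Q_c$, or by \Cref{thm:partial_perm_validity} for all $Q_h$, including $Q_h=Q_c$. This is exactly why item 2 needs no restriction on $Q_h$. For item 1, the case $Q_h=Q_c$ falls in this branch and is excluded by hypothesis.

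The boundary case $D(Q_h,Q_c)=\theta$ is the main obstacle. There both fusion outcomes have non-vanishing probability, and the fusion decision is dependent on $\Qcm$, which also enters the causality statistic. The crude bound $\Pr(A\cap\{\phi_f=0\})+\Pr(B\cap\{\phi_f=1\})\le\Pr(A)+\Pr(B)\le 2\alpha$ is too weak.

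My plan for this case is to establish asymptotic independence between the fusion event and the causality rejection event. The fusion statistic is a smooth functional of $(\Qcm,\Qhl)$ through $D(\Qcm,\Qhl)$; at the boundary it is driven by the $\sqrt n$-fluctuation of the linear term $\langle \Qcm-Q_c,\xi_c-\xi_h\rangle$ and the analogous $Q_h$ term. By \Cref{prop:bootstrap_convergence}, $\Delta$ is asymptotically a linear statistic in $(\Qtn-Q_c)$ and $(\Qcm-Q_c)$, projected onto a different direction $\xi_f-\xi_c$. The permutation statistic under $\cH_0$ is degenerate, of order $n^{-1}$, and depends only on second-order fluctuations.

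I would write joint CLTs and compute the asymptotic covariance between the fusion linear term and $\Delta$. For permutation, I would instead invoke the standard fact that the degenerate V-statistic limit is asymptotically independent of first-order linear terms; equivalently, conditioning on the pooled multiset, the permutation $p$-value is uniform.

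For item 2, the cleanest route is exactly that conditioning argument. Condition on the unordered pool $\{x^c\}\cup\{x^t\}$ and on $\Qhl$. Under $\cH_0$ the assignment of the pool to arms is uniform, so the permutation $p$-value is super-uniform. The fusion decision, however, depends on which points landed in $\Qcm$; this is precisely where the argument needs the asymptotic-independence step. If the independence cannot be shown exactly, I would settle for showing that the joint limit law factorizes. That gives $\limsup\Pr(\phi_\ast=1)\le a\alpha+(1-a)\alpha=\alpha$.
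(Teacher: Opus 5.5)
\textbf{Gap at the boundary $D(Q_h,Q_c)=\theta$.} Your treatment of $D(Q_h,Q_c)\neq\theta$ is correct. There you use the union bound with $\Pr(\phi_f=j)\to0$, together with unconditional validity (exactness of the plain permutation test, \Cref{thm:partial_boots_validity}, \Cref{thm:partial_perm_validity}). This is what \Cref{thm:validity_general} effectively does in those cases, and it avoids conditioning.

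At the boundary, however, the asymptotic independence you plan to prove is false. The direction $\xi_f-\xi_c=(1-\gamma)(\xi_h-\xi_c)$ is not ``different'' from the fusion direction; it is the same. Put $v=\xi_h-\xi_c$, $s^2=\Var_{X\sim Q_c}\langle v,k(\cdot,X)\rangle$ (with $s_h^2$ the analogue under $Q_h$), and $U_c=\sqrt{n}\langle v,\hat\xi_{c,m}-\xi_c\rangle$, with $U_t,U_h$ defined analogously. These are independent with variances $s^2/c_1$, $s^2$, $s_h^2/c_2$. Then
\begin{align*}
\sqrt{n}\,\bigl(\theta-D(\Qcm,\Qhl)\bigr)=\theta^{-1}(U_c-U_h)+\opr(1),
\qquad
\Delta=2(1-\gamma)(U_c-U_t)+\opr(1).
\end{align*}
Let $\kappa$ be the limit of $\sqrt{n}\,q^{m,\ell}_{1-\alpha_f}$. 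Merging is asymptotically $\{U_c-U_h>\theta\kappa\}$, and partial-bootstrap rejection is $\{U_c-U_t>z_{1-\alpha}\sqrt{1+1/c_1}\,s\}$. These are upper sets of a bivariate Gaussian with correlation $(s^2/c_1)/\sqrt{(1+1/c_1)s^2(s^2/c_1+s_h^2/c_2)}>0$. So the limit does not factorize, and $\lim\Pr(R_1\cap M)>a\alpha$.

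Item 2 fails the same way. At the boundary $Q_h\neq Q_c$, so $\tnml\to(1-\gamma)^2\theta^2>0$ is not degenerate. Given the unordered pool and $\Qhl$, the partial-permutation rejection and the merge event are, to first order, upper sets of the same scalar $\langle\hat\xi_{h,\ell}-\bar\xi,\hat\xi_{c,m}-\hat\xi_{t,n}\rangle$, where $\bar\xi$ is the pooled embedding. They are therefore nearly nested, giving $\Pr(R_1\cap M\mid\cdot)\approx\min\{\alpha,\Pr(M\mid\cdot)\}\ge\alpha\Pr(M\mid\cdot)$, strictly on average. Your ``standard fact'' also fails for the unmerged degenerate test. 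The Gaussian limit $G$ of $\sqrt{n}(\hat\xi_{t,n}-\hat\xi_{c,m})$ has $\mathrm{Cov}(G,U_c)\neq0$, so $\|G\|^2$ and $U_c$ are uncorrelated but dependent.

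\textbf{Consequences, including for the paper's argument.} This is not only a defect of your route. Let $R_0,R_1$ be the unmerged and merged rejection events and $M$ the merge event. The overall error equals $\Pr(R_0)+\Pr(R_1\cap M)-\Pr(R_0\cap M)$. Because $R_1$ is strongly associated with $M$ and $R_0$ only weakly, this exceeds $\alpha$ in simple cases. A one-dimensional caricature with $c_1=1/2$, $c_2=1$, $s=s_h$, $\alpha=0.05$ and $M$ an upper $5\%$ event gives roughly $0.05+0.018-0.011\approx0.057$.

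The paper reaches the boundary through condition \eqref{eq:causal_test_validity} of \Cref{thm:validity_general}. That condition is supplied by \Cref{thm:partial_boots_validity_conditional,thm:partial_perm_validity_conditional} using only $\tau=a>0$, and those proofs skip exactly the dependence you flagged:
\begin{itemize}
\item The bootstrap proof re-checks only $A_n-A=\Opr(n^{-1/2})$ under $\Pr(\cdot\mid\cA_n)$. It does not re-establish the CLT for $\sqrt{n}\langle A,B_n\rangle$ under that conditioning.
\item The permutation proof needs $Z$ ordered for $\cA_n$ to be $\sigma(Z,\Qhl)$-measurable, but unordered for \eqref{eq:partial_perm_conditional_validity} to hold.
\end{itemize}

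So your argument proves \eqref{eq:overall_type_I} whenever $D(Q_h,Q_c)\neq\theta$. At the boundary your decomposition only yields $\limsup\le\alpha+\min\{a,\alpha\}$. Closing that case would require excluding it, or making the fusion decision independent of the data used by the causality test (e.g.\ by sample splitting).
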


Next, we show that the proposed TTP procedure consistently rejects alternatives whenever the causality test is capable of doing so. 
\begin{corollary}
\label{cor:consistency}
     Suppose \Cref{ass:kernel_moment} holds under $R = Q_c, Q_h, Q_t$, and \Cref{ass:fixed_ratios} is satisfied.  Let $\alpha, \alpha_f \in (0, 1)$, and consider the equivalence TTP test described in \Cref{alg:equiv_ttp}, where the fusion test has level $\alpha_f$. Under $\cH_1: Q_c \neq Q_t$, 
     \begin{enumerate}
         \item With the partial bootstrap causality test, the equivalence TTP test satisfies
         \begin{align}
             \lim_{n \to \infty}\Pr\Big(\phi_\ast\big(\Qcm, \Qhl, \Qtn\big) = 1 \,\big|\, \cH_1\Big) = 1 ,
             \label{eq:overall_power}
         \end{align}
         if $Q_c=Q_h$, or if $Q_c \neq Q_h$ and $2(1-\gamma)D\big(Q_h,Q_c\big)\cos \beta < D(Q_c, Q_t)$.
         \item With the partial permutation causality test, the equivalence TTP test satisfies \eqref{eq:overall_power} if $Q_c=Q_h$, or if $Q_c \neq Q_h$ and $2D\big(Q_h,Q_c\big)\cos\beta<\Big(\frac{1}{1-\gamma}+\lambda \Big)D\big(Q_t,Q_c\big)$.
     \end{enumerate}
\end{corollary}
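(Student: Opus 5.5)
The plan is to deduce \Cref{cor:consistency} from \Cref{thm:consistency_general}. That theorem needs three inputs: the fusion property \eqref{eq:fusion_test_validity}, and the two conditional consistency statements \eqref{eq:causal_test_consistency_no_merge} and \eqref{eq:causal_test_consistency}.

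The fusion property is exactly \Cref{thm:mmd_equiv_test}. Its second-moment condition on $k$ follows because $k$ is bounded; the bootstrap quantile approximation is justified by \citet[Lemma 7]{liu2026kerneltestsequivalence}.

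For \eqref{eq:causal_test_consistency_no_merge}, the causality test on $\phi_f=0$ is the ordinary permutation MMD test between $\Qcm$ and $\Qtn$. Its unconditional power tends to one when $Q_t\neq Q_c$, by the standard consistency of permutation MMD tests with a characteristic kernel. Note that on the no-merge branch the causality test uses only $\Qcm$ and $\Qtn$, and its unconditional rejection indicator is the same random variable whichever branch we are on.

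The key step is converting unconditional statements into conditional ones. For an event $A_n$ with $\Pr(A_n)\to 1$ and a conditioning event $E_n$, we have
\begin{align*}
\Pr(A_n^c\mid E_n)\le \Pr(A_n^c)/\Pr(E_n).
\end{align*}
This tends to zero whenever $\liminf \Pr(E_n)>0$. I would split by the value of $D(Q_h,Q_c)$ relative to $\theta$.

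\emph{Case $D(Q_h,Q_c)>\theta$.} Here $\Pr(\phi_f=1)\to0$. The overall rejection probability is $\Pr(\text{reject},\phi_f=0)+\Pr(\text{reject},\phi_f=1)$, and only the no-merge branch matters. Working directly with the joint events, the unpooled test's power tending to one gives overall power tending to one. So I would bypass the conditional formulation and bound
\begin{align*}
\Pr(\text{no reject})\le \Pr(A_n^{(0)c})+\Pr(A_n^{(1)c}),
\end{align*}
where $A_n^{(0)}$ and $A_n^{(1)}$ are the rejection events of the unpooled and pooled tests computed on the full data.

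\emph{Case $D(Q_h,Q_c)\le\theta$.} Both branches may have positive probability, so I need the pooled test's unconditional power to tend to one as well. If $Q_c=Q_h$, this holds by \Cref{thm:partial_boots_consistency} or \Cref{thm:partial_perm_consistency}, which need no extra condition. If $Q_c\neq Q_h$, those theorems give it under precisely the stated geometric conditions: $2(1-\gamma)D(Q_h,Q_c)\cos\beta<D(Q_c,Q_t)$ for partial bootstrap, and the $\lambda$-augmented condition for partial permutation. Since both $\Pr(A_n^{(0)c})$ and $\Pr(A_n^{(1)c})$ vanish, the union bound gives overall consistency. This covers \eqref{eq:causal_test_consistency} whenever the conditioning event has non-vanishing probability, and the joint-event bound handles the rest.

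The main obstacle is subtle. \Cref{thm:consistency_general} is phrased with conditional probabilities, which are ill-behaved when $\Pr(\phi_f=1)\to0$ or $\Pr(\phi_f=0)\to 0$. I would therefore present the argument through the union bound above, which needs only unconditional consistency of each branch's test on the full data. A second point to check is that the consistency theorems apply to the statistics computed on all the data regardless of the fusion outcome. This holds because the rejection events $A_n^{(0)}$ and $A_n^{(1)}$ are defined on the full sample before any selection.
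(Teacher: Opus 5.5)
Your argument is correct, but it is organized differently from the paper's proof.

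\textbf{The paper's route.} The paper verifies the hypotheses of \Cref{thm:consistency_general} literally:
\begin{itemize}
\item the fusion property, via \Cref{thm:mmd_equiv_test};
\item the no-merge condition \eqref{eq:causal_test_consistency_no_merge}, from standard consistency of the unpooled MMD test;
\item the merged condition \eqref{eq:causal_test_consistency}, by feeding the unconditional consistency in \Cref{thm:partial_boots_consistency,thm:partial_perm_consistency} into \Cref{prop:consistency_conditional_merged}.
\end{itemize}
That proposition is exactly your ratio bound $\Pr(A_n^c\mid E_n)\le\Pr(A_n^c)/\Pr(E_n)$, with $\tau=\lim\Pr(\phi_f=1)>0$ because $D(Q_h,Q_c)\le\theta$. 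The paper then combines the two branches through the decomposition $T_1T_2+T_3(1-T_2)$.

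\textbf{Your route.} You use the pathwise inclusion $\{\phi_\ast=0\}\subseteq A_n^{(0)c}\cup A_n^{(1)c}$ and a union bound. This needs only unconditional consistency of each branch's test computed on the full data.

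\textbf{What your route buys.} It is more elementary, and it never conditions on events whose probability may vanish. The paper reads \eqref{eq:causal_test_consistency_no_merge} off the unconditional result, which is justified only when $\liminf\Pr(\phi_f=0)>0$. When $D(Q_h,Q_c)<\theta$ this is harmless only because $T_3$ is multiplied by $1-T_2\to0$.

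Moreover, the corollary assumes the geometric condition whether or not $D(Q_h,Q_c)\le\theta$. So your union bound uses no property of the fusion test at all. Your case split and your appeal to \Cref{thm:mmd_equiv_test} are therefore dispensable.

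\textbf{What the paper's route buys.} It is modular: \Cref{thm:consistency_general} holds for arbitrary fusion and causality tests. It also makes visible that the geometric condition matters only when merging has non-vanishing probability. Your $D(Q_h,Q_c)>\theta$ bound, $\Pr(\phi_\ast=0)\le\Pr(A_n^{(0)c})+\Pr(\phi_f=1)$, recovers that refinement as well.
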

\begin{remark}[Lack of full consistency]
\label{remark:decaying_theta}
    In \Cref{cor:consistency}, the conditions on the historical control arm $Q_h$ are inherited from \Cref{thm:partial_boots_consistency,thm:partial_perm_consistency}; we refer to \Cref{remark:conditional_consistency} for a discussion. In particular, when $Q_h \neq Q_c$, the proposed TTP procedure is not consistent against all alternatives $Q_t$. Intuitively, this is because, when $Q_h \neq Q_c$, merging $\Qhl$ introduces bias into the MMD statistic used in the causality test, and hence it can no longer consistently reject $Q_t$ unless the difference between $Q_t$ and $Q_c$ is substantially larger than that between $Q_h$ and $Q_c$.

    One practical mitigation is to merge with smaller probability as the sample sizes grows. This is sensible because, when the current control arm is sufficiently large, the causality test based only on the current control already has adequate power, making merging unnecessary. One way of achieving this is to let the margin $\theta$ depend on the sample size, namely taking $\theta = \theta_n \in o(1)$. In \Cref{sec:varying_theta}, we assess how varying $\theta$ affects the performance of the proposed TTP procedure. A rigorous treatment of validity and consistency for decaying sequences $\theta_n$ remains an open problem.
\end{remark}

Although both the fusion and the causality tests in our proposed TTP framework are built on MMD, the underlying principle extends naturally to other statistical metrics, such as the Kolmogorov-Smirnov (KS) metric and the Wasserstein metric. Computationally, MMD has a quadratic cost of $\operatorname{O}\!\big((n+m+\ell)^2\big)$ , whereas the KS and Wasserstein metrics require only $\operatorname{O}\!\big((m+\ell)\log(m+\ell) + n\log n\big)$. However, one major advantage of MMD is the flexibility in the kernel selection. By designing suitable kernels, the fusion and the causality tests can down-weight or up-weight certain forms of distributional discrepancy, such as mean shifts or tail behavior; see the next section for a discussion. In contrast, the standard KS statistic does not offer this type of tuning and, because it is based on the supremum norm between empirical distribution functions, it is less sensitive to discrepancies in the tails or to isolated outliers \citep{bryson1974heavy,wang2014falling}, making it more restrictive in practice. We leave the discussion of using Wasserstein metric to \Cref{sec:discussion}.

\subsection{Choosing the kernel and bandwidth}
Our framework allows flexible choices of kernels and bandwidth, which can be selected according to specific objectives. In general, we recommend employing the RBF (radial basis function) kernel $k(x,y) = \exp(-\frac{1}{2\zeta}\|x-y\|^2)$ with the bandwidth $\zeta^2$ set according to the median heuristic \citep{gretton2012kernel}, defined as
\begin{align}
\label{eqn:med_heuristic}\zeta^2_{\mathrm{med}} \coloneqq \mathrm{Median}\big\{ 
    \| v - v'\|_2^2: v, v' \in \mathcal{D}, v \neq v' 
    \big\},
\end{align}
where $\mathcal{D} = \Dnml$ if the historical sample is merged, and $\mathcal{D} = \Qcm \cup \Qtn$. 

It is also possible to tailor the kernel to target particular aspects of distributions. For example, if one is only interested in testing equality of the first moment, namely the ATE, then a linear kernel $k(x, y) = x^\top y$ can be used. Moreover, heavy-tailed or even unbounded kernels, such as the Inverse Multi-Quadric kernel \citep[Example 3]{sriperumbudur2011universality} or polynomial kernels, are more sensitive to differences in the tails \citep[S.13.2]{sadhanala2019higherorder}. However, linear and polynomial kernels are not characteristic: for example, a linear kernel cannot distinguish difference in variances \citep[Example 2]{sriperumbudur2010hilbert}. One possible solution is to combine such a kernel with a characteristic kernel, such as the RBF, and use $k(x, y) = x^\top y + \epsilon \exp(-\frac{1}{2\zeta}\|x-y\|^2)$ for some $\epsilon > 0$. This kernel remains characteristic and thus ensure that the MMD can distinguish all distributions, while still up-weighting first-moment differences. 

In our experiments, we find that bounded kernels, such as the RBF kernel, typically yields better performance, whereas unbounded ones such as the linear kernel often require large sample sizes for the asymptotic properties to hold accurately, thus leading to poor Type-I error rate control with small samples.

\subsection{Choosing the equivalence radius}
\label{sec:choose_theta}
In our setup, $\theta$ controls the tolerance for differences between the historical and current data. A larger $\theta$ allows greater heterogeneity, leading to a higher merging rate. Although the choice of $\theta$ does not affect the validity of the proposed testing procedure, it influences consistency. As discussed in \Cref{sec:varying_theta}, a large $\theta$ may reduce power when $Q_c$, $Q_h$, and $Q_t$ satisfy certain geometric conditions. Therefore, we recommend choosing a small $\theta$ to ensure a power gain. Nevertheless, the key advantage of our method lies in its robustness: it guarantees Type-I error control regardless of how different the merged and original datasets are.

Note that $\theta$ is a hyperparameter that must be specified a priori. In \Cref{remark:decaying_theta}, we discuss a potential extension of choosing decaying $\theta_n$ as a function of sample size $n$. Another alternative is to adjust $\theta$ after observing the test results, but doing so  may violate Type-I error control for $\cH_0^f$ and, consequently, inflate the Type-I error for $\cH_0$. A promising line of future work is to explore adaptive or sequential updating of $\theta$ as the function of observed data, for example, via e-values \citep{evalue}.

\section{Synthetic experiment}
\label{sec:synthetic_experiment}
We conduct synthetic experiments to compare the proposed equivalence-based TTP with the classic TTP for detecting DTE, as defined in \Cref{sec:ttp}. The experiments are designed mainly to demonstrate the validity of our method and the power improvement achieved through data fusion.

We set both $\alpha_f$ and $\alpha$ to be $0.05$. All MMD metrics are computed using the RBF kernel, where the bandwidth is selected by the median heuristic as in \eqref{eqn:med_heuristic}. We present results from $1000$ simulations with $B=1000$ bootstrap or permutation replicates per simulation.

\subsection{Equivalence TTP controls Type-I error and improves power}
\begin{figure}[!t]
    \centering
    \includegraphics[width=1\linewidth]{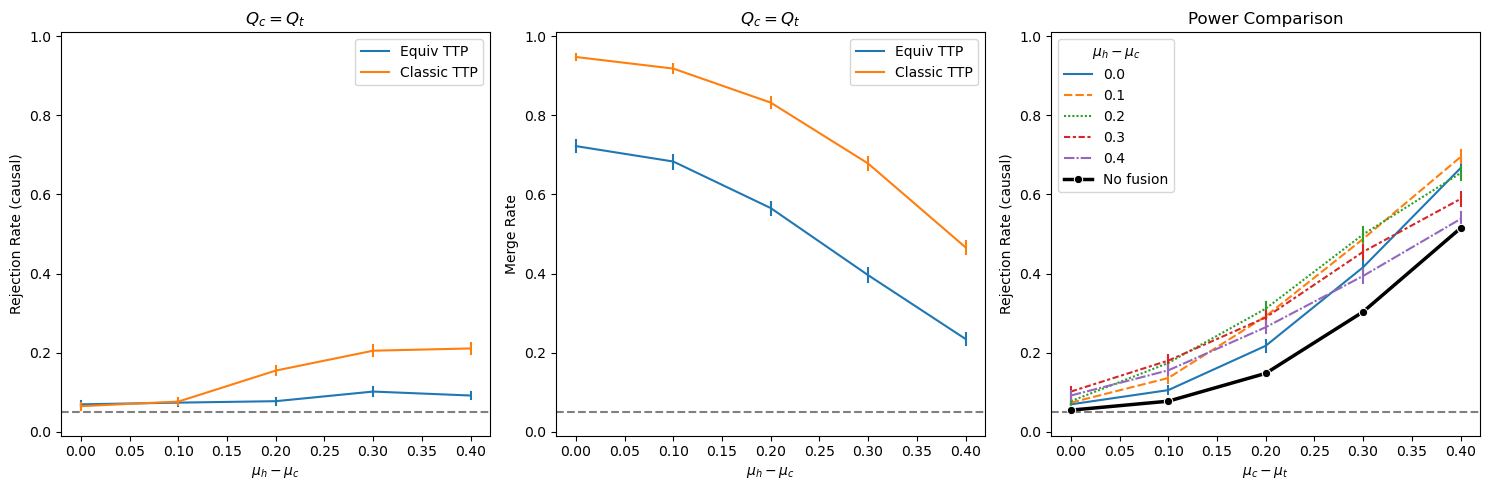}
    \caption{Type-I error (leftmost panel) and power (rightmost panel) under distributional discrepancy induced by mean shifts. The middle panel shows the proportion of the $1000$ simulations in which historical controls were merged.  The parameter $\theta$ is fixed at $0.4$. The RBF kernel and partial bootstrap procedure are used.}
    \label{fig:synthetic_mean_shift}
\end{figure}
% \begin{figure}[!t]
%     \centering
% \includegraphics[width=1\linewidth]{bandwidth_med/synthetic_perm.png}
%     \caption{$\theta$ is fixed to be $0.4$, RBF kernel, mean shift, partial permutation.}
%     \label{fig:synthetic_mean_shift_perm}
% \end{figure}

\begin{figure}[!t]
    \centering
    \includegraphics[width=1\linewidth]{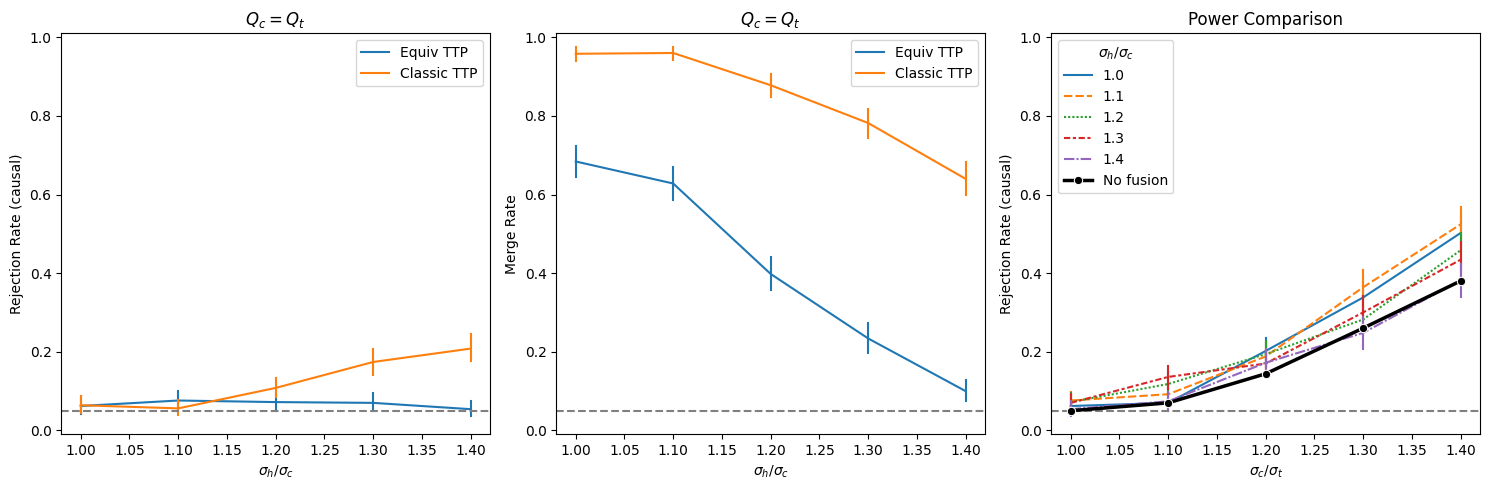}
    \caption{Type-I error (leftmost panel) and power (rightmost panel) under distributional discrepancy induced by variance shifts. The middle panel shows the proportion of the $1000$ simulations in which historical controls were merged. Type-I error and power results when distribution discrepancy is produced by variance shifts.  $\theta$ is fixed as $0.4$. The RBF kernel and partial bootstrap procedure are used.}
    \label{fig:synthetic_var_shift}
\end{figure}
\Cref{fig:synthetic_mean_shift} and \Cref{fig:synthetic_var_shift} compare the performance of the proposed equivalence TTP, the classic TTP, and a test without fusing historical controls under mean-shift and variance-shift settings, respectively. In these experiments, we set $n=100$, $m=50$ and $\ell=100$.  

For the mean-shift scenario (\Cref{fig:synthetic_mean_shift}), we generate data with $Q_t = \cN(0,1)$ with $\mu_t=0$, $Q_c = \cN(\mu_c-\mu_t,1)$, and $Q_h = \cN(\mu_t-\mu_c,1)$, with $\mu_c$, $\mu_h$ varying. For the variance-shift scenario (\Cref{fig:synthetic_var_shift}), we use  $Q_t = \cN(0,1)$ (thus $\sigma_t^2=1$), $Q_c = \cN(0,\sigma_c^2/\sigma_t^2)$, and $Q_h = \cN(0,\sigma_h^2/\sigma_c^2)$. We vary $\sigma_h^2$ and $\sigma_c^2$ as shown in the experiment results.  

The results demonstrate the effectiveness of the proposed equivalence TTP method. Using $\theta=0.4$, the procedure successfully fuses historical controls selectively, as shown in the middle plots, while maintaining control of the Type-I error rate. This highlights one of the main advantages of our proposed method, as it differs from the classic TTP which exhibits clear Type-I error inflation (left panels). At the same time, our method achieves substantial power gains relative to not fusing controls, confirming the benefits of incorporating historical data when it is sufficiently similar to the current control group.

\subsection{Effects of varying $\theta$}
\label{sec:varying_theta}
We also investigate how varying the choice of
 $\theta$ affects both the merge rate and the power of the causality test under different ``directions'' of distribution shifts. We follow the same experiment set up used in \Cref{fig:synthetic_mean_shift}. As shown in \Cref{fig:type1_theta_vary}, larger values of $\theta$ increase the probability of merging even when $Q_h$ and $Q_c$ differ more substantially. Nevertheless, the Type-I error of the causality test remains close to the nominal significance level $\alpha=0.05$, indicating that validity is preserved across a range of thresholds.

\Cref{fig:power_theta_vary} examines the effect of
 $\theta$ on statistical power. Here we allow $\mu_c-\mu_t$ to take negative values. Since $\mu_h-\mu_c>0$, this setting corresponds to $Q_t$ and $Q_h$ moving in the same direction away from $Q_c$, i.e., $\cos\beta>0$. In this case, merging $Q_h$ reduces the apparent difference between  $Q_t$ and $Q_c$ which can lead to reduced power. Indeed, for sufficiently large $\theta$, we observe power dropping below the no-fusion baseline due to this cancellation effect. We highlight this result as a cautionary note: overly permissive choices of $\theta$, can harm power, reinforcing the importance of choosing $\theta$ conservatively in practice.

% \begin{figure}[!t]
%   \begin{subfigure}[b]{0.45\textwidth}
%           \centering    \includegraphics[width=0.8\linewidth]{bandwidth_med/type1_theta_vary.png}
%     \caption{$Q_t = Q_c = \cN(0,1)$, $Q_h = \cN(\mu_t-\mu_c,1)$, with $\theta$ varying; partial bootstrap.}
%     \label{fig:type1_theta_vary}
%   \end{subfigure}
%   \hfill
%   \begin{subfigure}[b]{0.45\textwidth}
%     \centering    \includegraphics[width=0.8\linewidth]{bandwidth_med/perm_type1_theta_vary.png}
%     \caption{$Q_t = Q_c = \cN(0,1)$, $Q_h = \cN(\mu_t-\mu_c,1)$, with $\theta$ varying; partial permutation.}
%     \label{fig:perm_type1_theta_vary}
%   \end{subfigure}
%   \caption{Comparison with varying $\theta$. (a) Partial bootstrap; (b) Partial permutation.}
%   \label{fig:synthetic_combined}
% \end{figure}

\begin{figure}[!t]
    \centering    \includegraphics[width=0.8\linewidth]{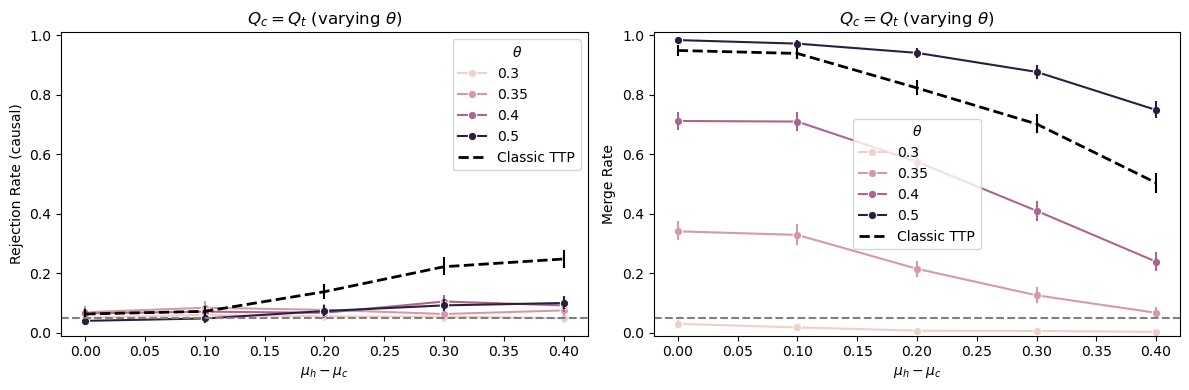}
    \caption{Type-I error (left panel) is controlled across varying $\theta$, despite differing merge rates (right panel). In these experiments, $Q_t = Q_c = \cN(0,1)$, $Q_h = \cN(\mu_h-\mu_c,1)$. The RBF kernel is used with the partial bootstrap procedure.}
    \label{fig:type1_theta_vary}
\end{figure}
\begin{figure}[!t]
    \centering    \includegraphics[width=0.8\linewidth]{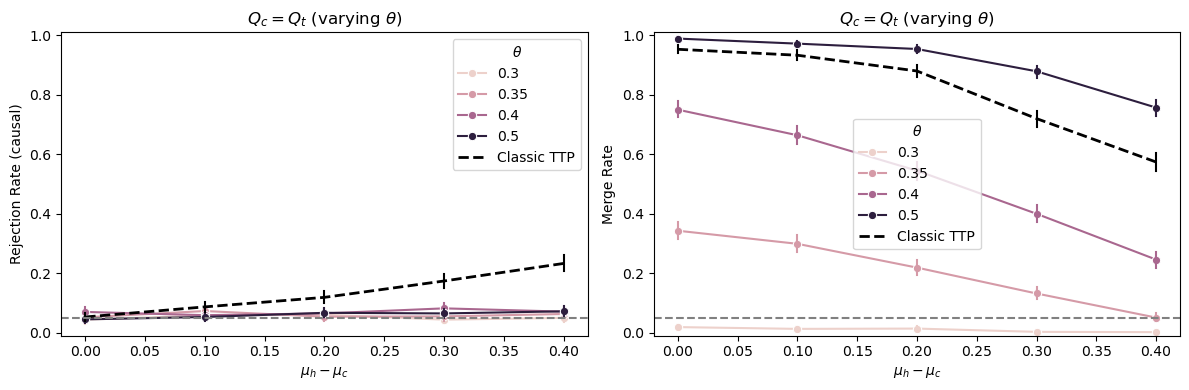}
    \caption{Type-I error (left panel) is controlled across varying $\theta$, despite differing merge rates (right panel). In these experiments, $Q_t = Q_c = \cN(0,1)$, $Q_h = \cN(\mu_h-\mu_c,1)$. The RBF kernel is used with the partial permutation procedure.}
    \label{fig:perm_type1_theta_vary}
\end{figure}

\subsection{Comparison between partial bootstrap and partial permutation}
In Figures 
% \ref{fig:synthetic_mean_shift_perm}, 
\ref{fig:perm_type1_theta_vary} and  \ref{fig:perm_power_theta_vary},  we present results from partial permutation under the same data generation mechanism with mean shift, as in \Cref{fig:type1_theta_vary}. As expected, the Type-I error and the power follow similar patterns to those in the partial bootstrap setting. Meanwhile, as we suggest above, since partial permutation fails to approximate the null distribution when $Q_c\neq Q_t$, the power from partial permutation is indeed smaller than what we have in partial bootstrap---we have the same observation in results of  \Cref{tab:vary_kernel_bandwidth}.

\begin{figure}[!t]
    \centering    \includegraphics[width=1\linewidth]{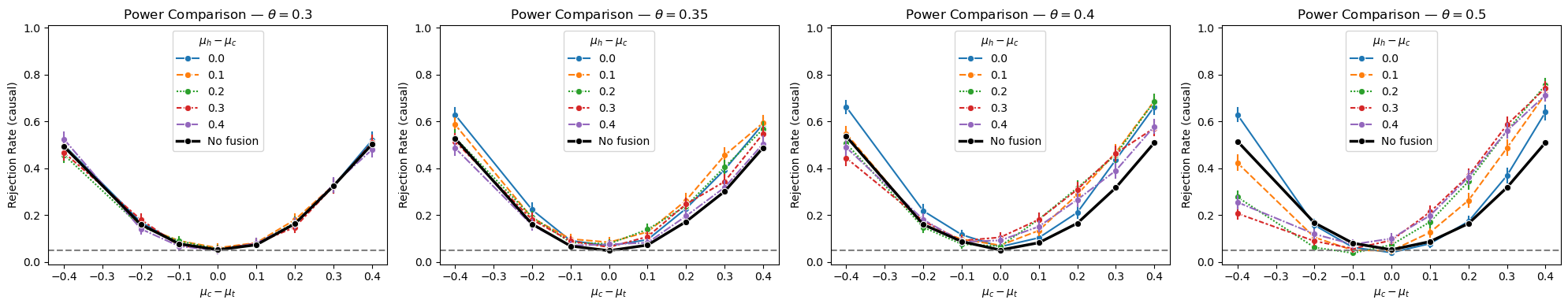}
    \caption{Power comparison with $\theta$ varying. In these experiments, $Q_t = \cN(0,1)$ with $\mu_t=0$, $Q_c = \cN(\mu_c-\mu_t,1)$, and $Q_h = \cN(\mu_h-\mu_c,1)$. The RBF kernel and partial bootstrap are used.}
    \label{fig:power_theta_vary}
\end{figure}

\begin{figure}[!t]
    \centering    \includegraphics[width=1\linewidth]{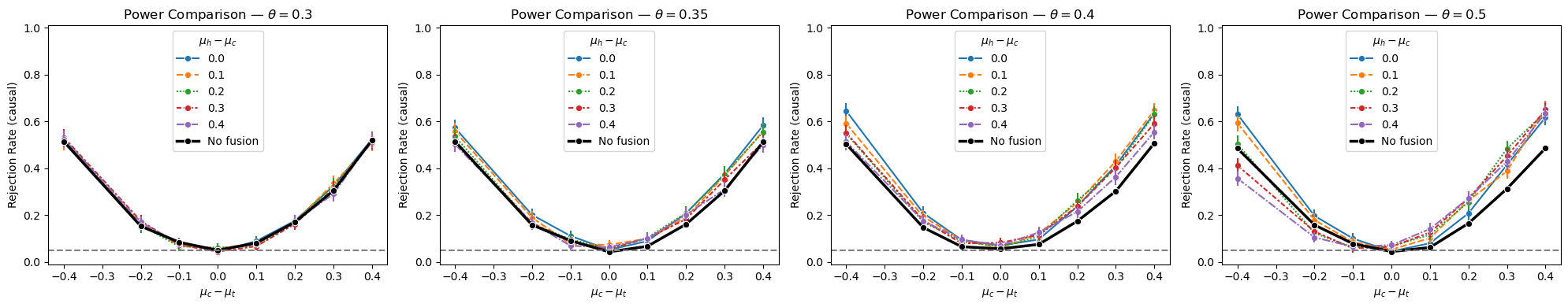}
    \caption{Power comparison with $\theta$ varying. In these experiments, $Q_t = \cN(0,1)$ with $\mu_t=0$, $Q_c = \cN(\mu_c-\mu_t,1)$, and $Q_h = \cN(\mu_h-\mu_c,1)$. The RBF kernel and partial permutation are used.}
    \label{fig:perm_power_theta_vary}
\end{figure}

\subsection{Comparison of varying kernel bandwidth}
We evaluate the performance of the proposed equivalence TTP using partial bootstrap and partial permutation under varying kernel bandwidths of the selected RBF kernel (\Cref{tab:vary_kernel_bandwidth}). For fixed $\theta$, the kernel bandwidth influences the merge rate. Nevertheless, equivalence TTP consistently controls the Type-I error. Moreover, the partial bootstrap method generally achieves higher power than partial permutation, in agreement with our earlier discussion.
\begin{table}[!ht]
\centering
\begin{tabular}{c cc cc cc}
\toprule
& \multicolumn{2}{c}{$\mu_c-\mu_t=-0.4$}
& \multicolumn{2}{c}{$\mu_c=\mu_t$}
& \multicolumn{2}{c}{$\mu_c-\mu_t=0.4$} \\
\cmidrule(lr){2-3}\cmidrule(lr){4-5}\cmidrule(lr){6-7}
\makecell{\textbf{Bandwidth}\\ \textbf{in RBF Kernel}} & \textbf{Merge} & \textbf{Causal} & \textbf{Merge} & \textbf{Causal} & \textbf{Merge} & \textbf{Causal} \\
\midrule
Median Heuristic & 0.285 & 0.577 (0.513) & 0.307 & 0.064 (0.051) & 0.272 & 0.572 (0.535) \\
0.5  & 0.319 & 0.528 (0.511) & 0.342 & 0.070 (0.063) & 0.354 & 0.536 (0.566) \\
1    & 0.673 & 0.572 (0.559) & 0.638 & 0.074 (0.053) & 0.694 & 0.711 (0.629) \\
1.5  & 0.824 & 0.581 (0.566) & 0.807 & 0.074 (0.065) & 0.832 & 0.739 (0.677) \\
\bottomrule
\end{tabular}
\caption{Merge rates and causality test rejection rates by bandwidth and $\mu_c-\mu_t$ blocks by partial bootstrap, averaged over 1000 simulations, with $1000$ bootstraps / permutations in each simulation. $\alpha_f=\alpha = 0.05$, $\theta$ is fixed at $0.4$, $\mu_h-\mu_c=0.2$. For the causality test, rejection rates from the partial bootstrap are reported directly, with the corresponding partial permutation results given in parentheses. All experiments are done with RBF kernels.}
\label{tab:vary_kernel_bandwidth}
\end{table}

\subsection{Comparison with linear kernel}
As discussed earlier, our method accommodates various kernel choices, which should be selected according to the task. The linear kernel effectively captures differences in first moments but may fail to detect higher-order discrepancies such as variance shifts. In \Cref{fig:synthetic_var_shift_linear}, we apply the linear kernel to the variance shift setting from \Cref{fig:synthetic_var_shift}. Although the underlying distributions differ in variance, the MMD two-sample tests fail to detect this difference, regardless of whether data fusion is applied.
\begin{figure}[!t]
    \centering
    \includegraphics[width=1\linewidth]{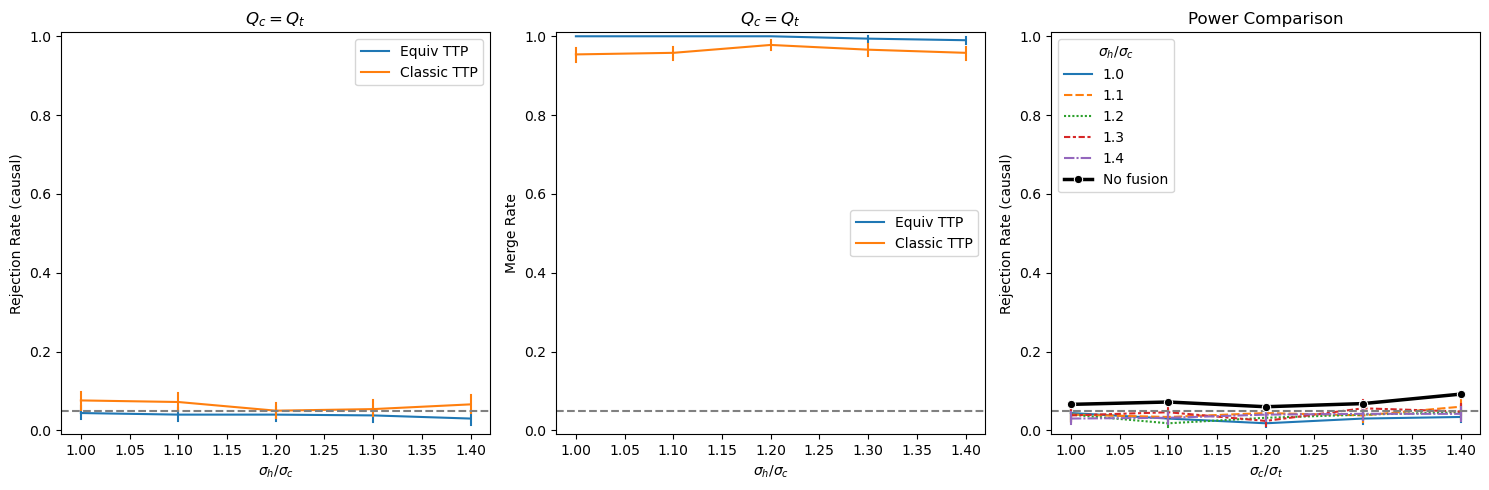}
    \caption{Experiment results when distributional discrepancy produced by shifting variance. $\theta$ is fixed at $1$. The linear kernel and partial bootstrap are used.}
    \label{fig:synthetic_var_shift_linear}
\end{figure}

\section{Prospera program}
\label{sec:prospera}
In this section, we apply the equivalence TTP procedure on the \emph{Prospera} program. Originally launched in 1997 as Progresa, Mexico’s conditional cash transfer (CCT) program provides cash to poor households conditional on school attendance, health visits, and participation in nutrition programs \citep{skoufias2005progresa}. The initial evaluation was conducted as a randomized controlled trial across 505 villages in seven poor states: 320 villages were assigned to treatment (eligible poor households received stipends conditional on compliance) and 185 to control (no program). We use village-level school enrollment rates from 1998 as post-treatment outcomes, with the 1997 baseline serving as pre-treatment information.

We first examine whether the program improved school enrollment. A standard two-sample test for average treatment effects, applied to the complete 1998 data, rejects the null hypothesis $\cH_0$ of equal enrollment rate distributions between treatment and control villages at the $\alpha = 0.05$ level. This confirms that Prospera significantly increased school enrollment, consistent with prior findings on the program’s effectiveness.

To evaluate performance in settings with limited contemporary controls, we randomly subsample $50$ control villages from the 1998 control group and use their corresponding 1997 baseline data as historical controls. We also randomly sample $200$ villages from the treatment group for testing. \Cref{fig:prospera} displays the enrollment rate distribution of treatment, control, and historical groups in one such simulation. We repeat this procedure for $1000$ simulations in total.

\begin{figure}[!t]
    \centering
    \includegraphics[width=1\linewidth]{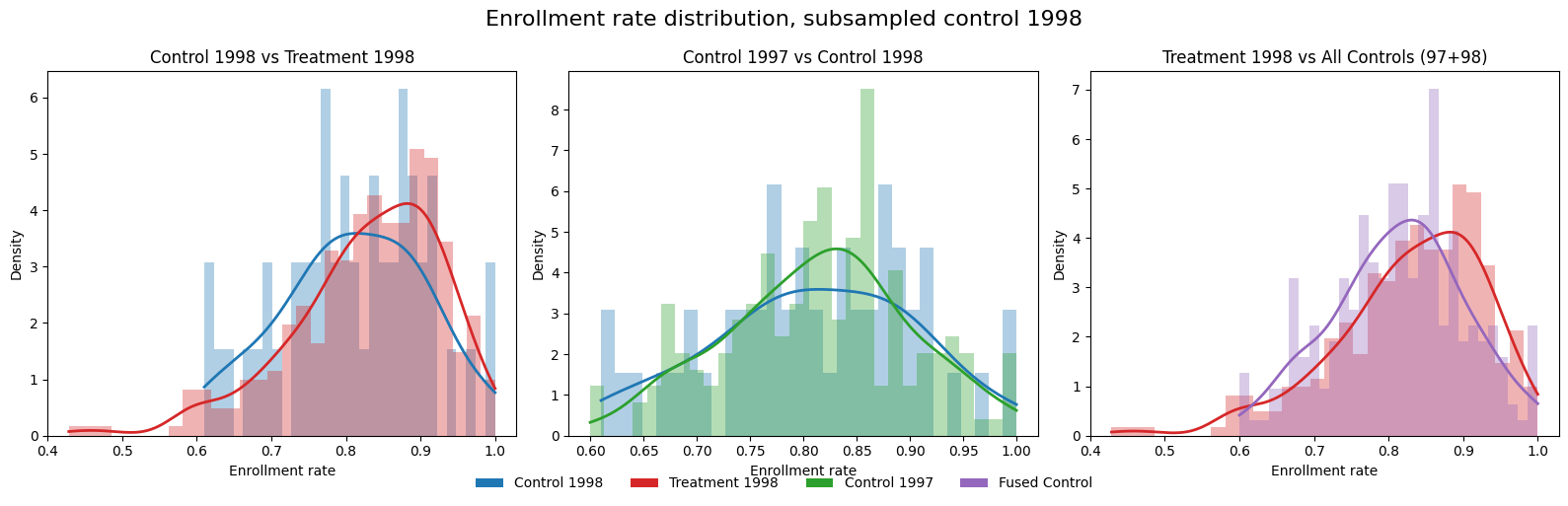}
    \caption{Distribution of enrollment rates across groups.}
    \label{fig:prospera}
\end{figure}

  We set the fusion test level as $\alpha_f = 0.05$ and and the significance level of the causality test to $\alpha = 0.05$. As before, the kernel bandwidth $\zeta$ is chosen by the median heuristic \eqref{eqn:med_heuristic} for the RBF kernel. Under these settings, the proposed equivalence TTP procedure accepts fusing the current control group with historical control group (1997) at a threshold of $\theta = 0.5$ in $99.4\%$ of the 1000 replications. 
  
  \Cref{tab:rejection_rates} reports the rejection rates of the causality test against $\cH_0$ across different methods. The proposed TTP achieves the highest power, outperforming a classical distribution test that does not fuse control groups. Furthermore, compared with their mean-based counterparts, distributional tests demonstrate superior power, as the differences in  distributions additional to the mean are captured. Together, these findings validate the effectiveness of the proposed equivalence TTP framework.

\begin{table}[!t]
\centering
% \begin{minipage}{0.5\textwidth}
%     \centering
    \begin{tabular}{cc}
        \toprule
        Method & Rejection Rates \\
        \midrule
        Equivalence TTP & 0.61 \\
        Distribution test (without fusion) & 0.40  \\
        Mean test (with fusion) & 0.36\\
        Mean test (without fusion) & 0.23 \\
        \bottomrule

    \end{tabular}
    \caption{Rates of rejecting $\cH_0$ of $1000$ simulations.}
    \label{tab:rejection_rates}
% \end{minipage}\hfill
% \begin{minipage}{0.4\textwidth}
%     \centering
%     \begin{figure}[H]
%         \centering
%         \includegraphics[width=1\linewidth]{Prospera_mmd.png}
%         \caption{Distribution of MMD.}
%         \label{fig:prospera_mmd}
%     \end{figure}
% \end{minipage}
\end{table}

\section{Discussion}
\label{sec:discussion}
In this paper, we focus on the fusion of control arms (and, symmetrically, treatment arms) within randomized clinical trials, and therefore make no assumptions such as unconfoundedness. In practice, however, it is increasingly common to consider fusing real-world observational data with experimental data. Extending our framework to such settings requires additional assumptions, most notably ignorability and the comparability of populations---lack of ignorability in the observational data may induce confounding bias; even under ignorability, substantial differences in covariate distributions between datasets may alter the target estimand.

Our proposed method can be adapted to address these concerns. For instance, MMD-based tests can be applied to assess distributional equivalence of covariates between observational and experimental samples, providing a principled diagnostic for population alignment in high-dimensional settings. Similarly, when imbalance is detected, our framework could be combined with covariate adjustment or weighting strategies to mitigate bias before fusion. In this way, the same equivalence-testing tools that validate control-arm pooling in trials can also serve as safeguards when extending pooling to broader causal inference contexts.

The choice of radius $\theta$ presents an interesting direction for future work. As discussed in \Cref{sec:choose_theta}, intuitively, $\theta$ should not be too large, so that the merged historical control distribution $\Qhl$ remains similar to the current control distribution $\Qcm$; however, setting $\theta$ too small may make  the fusion test overly conservative. We provide ideas on potential extensions in \Cref{sec:choose_theta}, \Cref{remark:decaying_theta} and \Cref{sec:varying_theta}.

We discuss at the end of \Cref{sec:overall} on using other distributional distance metrics, and one of the potential direction is to use the Wasserstein metric. Different from the Kolmogorov-Smirnov metric, the Wasserstein distance can in principle be made more flexible through an appropriate choice of cost function, but we do not pursue this direction here, as both the equivalence testing employed in this paper and most existing methods for distributional treatment effects are formulated in terms of MMD. Extending the proposed TTP framework to Wasserstein-based test statistics is therefore a natural direction for future work.

Another potential extension involves merging multiple data sources. Instead of relying on binary ``accept or reject'' decisions, one could assign weights based on their similarity to the current control group. Specifically, the kernel-based distance $D(Q_c,Q_h)$ from  the kernel equivalence test could be used to construct a smooth weighting function, allowing closer sources to contribute more heavily to the fused control distribution while down weighting or excluding more distant sources. Combining information from several candidate sources in this manner would require developing procedures for joint calibration.

Finally, while this work is motivated by causal inference, the proposed framework has potential applications beyond causal settings, such as domain adaptation, transfer learning, and robust statistical testing in machine learning contexts.

\subsection*{Acknowledgements}

The authors thank Wenkai Xu for helpful discussions. Linying Yang is supported by the EPSRC Centre for Doctoral Training in Modern Statistics and Statistical Machine Learning (EP/S023151/1) and Novartis.  Robin J.~Evans has been supported by Novo Nordisk. 

% For the purpose of open access, the authors have applied a CC-BY public copyright licence to any author accepted manuscript (AAM) arising from this submission.

\eject

\appendix
\section{Auxiliary results}
This section collects auxiliary results that support the main theorems.
\subsection{Consequences of a Fixed Pooling Ratio}
The pooled sample $\Qfml$ is formed by combining two mutually independent samples $\Qcm$ and $\Qhl$, which are i.i.d.\ samples from $Q_c$ and $Q_h$, respectively. However, the pooled sample $\Qfml$ is \emph{not} an i.i.d.\ sample from the mixture distribution $\frac{m}{m+\ell}Q_c + \frac{\ell}{m+\ell}Q_h$. This is because the sample sizes $m$ and $l$ are \emph{deterministic}, whereas under i.i.d.\ sampling from the mixture, the number of observations drawn from each component would be random and follow a binomial distribution. In fact, $\Qfml$ is \emph{exchangeable}, meaning that the joint distribution of $\Qfml$ remains unchanged under permutations. 

Despite this distinction, this technical subtlety does not significantly complicate our analysis. For example, the kernel mean embedding of $\Qfml$ coincides exactly with that of an i.i.d.\ sample from the mixture $\frac{m}{m+\ell}Q_c + \frac{\ell}{m+\ell}Q_h$, as shown in the following result. 

\begin{lemma}
\label{lem:merged_sample_kme}
    Let $\{X_i\}_{i=1}^\infty$ and $\{Y_i\}_{i=1}^\infty$ be sequences of independent random variables drawn from $Q_1, Q_2$. For any positive integers $n_1, n_2$, define $n = n_1 + n_2$, and let $Q_n$ be the empirical measure based on $\{Z_i\}_{i=1}^n$, where $Z_i = X_i$  if $1 \leq i \leq n_1$ and $Z_i = Y_i$ if $n_1 < i \leq n$. Moreover, let $Q_{1,n_1}$ and $Q_{2,n_2}$ be the empirical samples based on $\{X_i\}_{i=1}^{n_1}, \{Y_i\}_{i=1}^{n_2}$, respectively, and define the mixture distribution $Q = \rho Q_1 + (1-\rho) Q_2$. Then the kernel mean embedding of $Q_n$ takes the form
    \begin{align*}
        \hat{\xi}_{Q_n}
        = \frac{n_1}{n} \E_{X \sim Q_1}k(\cdot, X) + \frac{n_2}{n} \E_{Y \sim Q_2}k(\cdot, Y)
        = \frac{n_1}{n} \hat{\xi}_{n_1} + \frac{n_2}{n} \hat{\xi}_{n_2}
        .
    \end{align*}
    % If furthermore $n_1, n_2$ are sequences of positive integers with $n_1/ n_2 \to \rho \in [0, 1]$, then $\hat{\xi}_{Q_n} \to $
\end{lemma}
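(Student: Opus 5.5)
The plan is a direct computation from the definition of the kernel mean embedding of an empirical measure, using only linearity. First, I would record that for any finite empirical measure $R = \frac{1}{N}\sum_{i=1}^N \delta_{z_i}$ the embedding is $\xi_R = \E_{Z\sim R} k(\cdot, Z) = \frac{1}{N}\sum_{i=1}^N k(\cdot, z_i)$. This is well defined as an element of $\cH_k$ because it is a finite sum of canonical features, so \Cref{ass:kernel_moment} is not even needed at this stage. Applying this to $Q_n$, I would split the sum over $\{Z_i\}_{i=1}^n$ at the index $n_1$:
\begin{align*}
    \hat{\xi}_{Q_n} = \frac{1}{n}\sum_{i=1}^{n_1} k(\cdot, X_i) + \frac{1}{n}\sum_{i=n_1+1}^{n} k(\cdot, Y_i)
    = \frac{n_1}{n}\,\hat{\xi}_{n_1} + \frac{n_2}{n}\,\hat{\xi}_{n_2},
\end{align*}
where $\hat{\xi}_{n_1}$ and $\hat{\xi}_{n_2}$ are the embeddings of $Q_{1,n_1}$ and $Q_{2,n_2}$. (The indexing of the $Y$'s should be shifted so that the second block consists of $n_2$ draws from $Q_2$; this is only a relabelling.) This gives the second equality in the statement.

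For the first equality, which I read as a statement in expectation (the population-level counterpart), I would take Bochner expectations in $\cH_k$ term by term. Under \Cref{ass:kernel_moment} for $Q_1$ and $Q_2$, each $k(\cdot,X_i)$ is Bochner integrable, since $\|k(\cdot,X)\|_{\cH_k} = \sqrt{k(X,X)}$ has finite mean. Hence $\E\,\hat{\xi}_{n_1} = \xi_{Q_1}$ and $\E\,\hat{\xi}_{n_2} = \xi_{Q_2}$. Linearity then yields
\[
\E\,\hat{\xi}_{Q_n} = \frac{n_1}{n}\xi_{Q_1} + \frac{n_2}{n}\xi_{Q_2}.
\]
Finally, I would observe that the embedding of a mixture is linear in the mixing weights: $\xi_{\rho Q_1 + (1-\rho)Q_2} = \rho\,\xi_{Q_1} + (1-\rho)\,\xi_{Q_2}$, which follows from linearity of the integral. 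Thus, with $\rho = n_1/n$, $\E\,\hat{\xi}_{Q_n}$ coincides with $\xi_Q$. This is exactly the embedding one would obtain in expectation from an i.i.d.\ sample of the mixture, even though $Q_n$ itself is only exchangeable.

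There is no substantive obstacle here. The one point needing care is interpretive: the middle expression in the statement must be understood either as an expectation, or as the embedding of the mixture with $\rho = n_1/n$, rather than as an identity for the random element $\hat{\xi}_{Q_n}$ itself. The only technical justification required is interchanging expectation with the $\cH_k$-valued sum, which \Cref{ass:kernel_moment} guarantees.
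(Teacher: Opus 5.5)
Your split of the sum at index $n_1$ is exactly the paper's proof of the second equality. Your remark that the $Y$-block must be re-indexed to match $Q_{2,n_2}$ is fair; the paper glosses over it. For the first equality, the paper's proof reads the middle expression as $\frac{n_1}{n}\E_{X\sim Q_{1,n_1}}k(\cdot,X)+\frac{n_2}{n}\E_{Y\sim Q_{2,n_2}}k(\cdot,Y)$, i.e.\ as expectations under the empirical measures, which is just a restatement of the definition (so $Q_1,Q_2$ in the statement are effectively typos). Your Bochner-expectation reading instead proves a different, though correct, unbiasedness claim, $\E\,\hat{\xi}_{Q_n}=\xi_Q$ with $\rho=n_1/n$.
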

Since $\hat{\xi}_{Q_n}$ can be decomposed into the kernel mean embeddings based on $Q_{1,n_1}$ and $Q_{2,n_2}$, and since $Q_{1,n_1}$ and $Q_{1,_n2}$ are mutually independent, the asymptotic properties of $\hat{\xi}_{Q_n}$ can be readily derived from classic results on kernel mean embeddings of independent samples. We will leverage this observations in most of our proofs, such as \Cref{sec:proof_bootstrap_convergence}.

\begin{proof}[Proof of \Cref{lem:merged_sample_kme}]
    This follows immediately from the definition of kernel mean embeddings of empirical samples:
    \begin{align*}
        \hat{\xi}_{Q_n}
        = \frac{1}{n}\sum_{i=1}^n Z_i
        &= \frac{1}{n}\sum_{i=1}^{n_1} X_i + \frac{1}{n}\sum_{i=n_1+1}^{n} Y_i
        = \frac{n_1}{n} \E_{X \sim Q_{1,n_1}}k(\cdot, X) + \frac{n_2}{n} \E_{Y \sim Q_{2,n_2}}k(\cdot, Y)
        \\
        &= \frac{n_1}{n} \hat{\xi}_{n_1} + \frac{n_2}{n} \hat{\xi}_{n_2}.
    \end{align*}
\end{proof}

\subsection{Validity with merged samples}
\label{sec:conditional_validity}
This section contains some auxiliary results that will be used to show the validity of the overall procedure in \Cref{cor:validity}; to be more specific, we show that \eqref{eq:causal_test_validity} is met.  These results are extensions of \Cref{thm:partial_boots_validity} and \Cref{thm:partial_perm_validity} to the case where we condition on the event that the fusion test, $\cH^f_0$, is rejected, namely that historical and current controls are merged. We present the results for both partial bootstrap (\Cref{app:partial_boot_validity_conditional}) and partial permutation (\Cref{app:partial_perm_validity_conditional}).

\subsubsection{Partial bootstrap}
\label{app:partial_boot_validity_conditional}

We first show that the same weak limit in \Cref{prop:bootstrap_convergence} holds conditional on the event that the fusion test is rejected.

\begin{proposition}[Weak convergence of the partial bootstrap provided merged controls under non-identical control groups]
\label{prop:bootstrap_convergence_conditional} 
Assume \Cref{ass:kernel_moment} holds under $R = Q_c, Q_h, Q_t$, and \Cref{ass:fixed_ratios} is satisfied. Let $\cA_n = \big\{\phi_f(\Qcm, \Qhl) = 1\big\}$ denote the event of rejecting the fusion null, and assume $\tau \coloneqq\lim_{n\to\infty}\Pr(\cA_n) > 0$. Under the causal null $\cH_0:Q_c=Q_t$, the following holds conditionally on $\cA_n$:
\begin{align*}
   \Delta \rightsquigarrow_{\Pr} \cN\big(0,4(1+1/c_1)\sigma_c^2\big)\quad \text{and} \quad \Delta^\ast \rightsquigarrow_{\Pr} \cN\big(0,4(1+1/c_1)\sigma_c^2\big) 
   ,
\end{align*}
where \begin{align*}
    \sigma_c^2 &\coloneqq\Var_{X\sim Q_c}\Big[\E_{Y\sim Q_f}k(X, Y) - \E_{Z\sim Q_c}k(X, Z)\Big].
\end{align*}
We thus obtain
\begin{align*}
    \sup_{z\in\R} \Big|{\Pr}^\ast\big\{\Delta^\ast\leq z\big\} - \Pr\big\{\Delta\leq z\big\} \Big|\overset{a.s.}{\longrightarrow}0.
\end{align*}
\end{proposition}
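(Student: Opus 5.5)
The plan is to reduce the conditional statement to the unconditional one in \Cref{prop:bootstrap_convergence}, using the fact that $\cA_n$ has asymptotically non-vanishing probability. The basic tool is the elementary bound
\begin{align*}
\big|\Pr(E \mid \cA_n) - \Pr(E)\big| \le \frac{\Pr(\cA_n^c)}{\Pr(\cA_n)}\wedge 1 \quad\text{and}\quad \Pr(E\mid\cA_n)\le \Pr(E)/\Pr(\cA_n),
\end{align*}
valid for any event $E$. The second inequality shows immediately that any $\opr(1)$ statement remains $\opr(1)$ under $\Pr(\cdot\mid\cA_n)$, since $\Pr(\cA_n)\to\tau>0$. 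I would therefore first recall the decomposition behind \Cref{prop:bootstrap_convergence}. Using \Cref{lem:merged_sample_kme}, write $\xi_{\Qfml}=\gamma\hat\xi_c+(1-\gamma)\hat\xi_h$. Under $\xi_t=\xi_c$, expanding $\tnml-\tildet$ gives
\begin{align*}
\Delta=\sqrt n\big(\|\hat\xi_t\|^2-\|\hat\xi_c\|^2-2\langle \hat\xi_t-\hat\xi_c,\xi_{\Qfml}\rangle\big).
\end{align*}
This equals a linear term $-2\sqrt n\langle \hat\xi_t-\hat\xi_c, \xi_f-\xi_c\rangle$ plus a remainder. The remainder involves $\sqrt n\|\hat\xi_t-\xi_c\|^2$, $\sqrt n\|\hat\xi_c-\xi_c\|^2$ and cross terms with $\hat\xi_h-\xi_h$, and it is $\Opr(n^{-1/2})$. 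The linear term is a sum of independent centred terms from the treatment and current-control samples. It has limit $\cN(0,4(1+1/c_1)\sigma_c^2)$, which matches the variance in the statement.

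The key point is that the leading term depends only on $\Qtn$ and $\Qcm$. The conditioning event $\cA_n$ depends on $\Qcm$ and $\Qhl$ and is independent of $\Qtn$, but it is \emph{not} independent of $\Qcm$. Hence I will not claim independence. Instead I will split on the value of $D(Q_h,Q_c)$ and use the fusion behaviour from \Cref{thm:mmd_equiv_test}.

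Since $\tau>0$, the case $D(Q_h,Q_c)>\theta$ is excluded. If $D(Q_h,Q_c)<\theta$, then $\Pr(\cA_n)\to1$. The first bound above then gives $\sup_z|\Pr(\Delta\le z\mid\cA_n)-\Pr(\Delta\le z)|\to0$, and the conditional limit follows from \Cref{prop:bootstrap_convergence}.

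The boundary case $D(Q_h,Q_c)=\theta$, where $\tau=a\le\alpha_f$, is the main obstacle. There $\cA_n$ is asymptotically a nontrivial event driven by the fluctuation $\sqrt m(\hat\xi_c-\xi_c)$, which also drives $\Delta$. My first attempt is a joint-CLT argument. I would linearise $D(\Qcm,\Qhl)-\theta$ as $\langle \hat\xi_c-\xi_c-(\hat\xi_h-\xi_h), u\rangle+\Opr(n^{-1})$ with $u=(\xi_c-\xi_h)/\theta$, and rescale by $\sqrt n$. Then $(\sqrt n(\theta-D(\Qcm,\Qhl)),\Delta)$ converges jointly to a bivariate Gaussian by the Hilbert-space CLT applied to $(\hat\xi_c,\hat\xi_h,\hat\xi_t)$. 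The scaled fusion critical value converges to a constant by \cite[Lemma 7]{liu2026kerneltestsequivalence}. The conditional law of $\Delta$ given a positive-probability half-space event is then explicit. It coincides with the unconditional normal exactly when the covariance between the two components vanishes. If that covariance does not vanish, I would record the conditional limit as a skewed normal. In that case I would show that the treatment-sample part still dominates the upper tail, or else I would restrict the claim to $D(Q_h,Q_c)<\theta$, which is the only case where this rescue is clean.

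For the bootstrap half, $\Delta^\ast$ is computed by resampling from the empirical measures. Its conditional law given $\Dnml$ converges almost surely to the same normal on sample paths in a probability-one set. The argument is the same as in \Cref{prop:bootstrap_convergence}, using bootstrap CLTs for Hilbert-space means and the remainder bounds. Restricting to $\cA_n$ does not change a statement that holds for almost every data sequence. Finally, Pólya's theorem, which applies because the limit is continuous since $\sigma_c^2>0$ when $Q_h\ne Q_c$, converts the two weak limits into the uniform Kolmogorov-distance statement. Combining them through the triangle inequality gives the final display.
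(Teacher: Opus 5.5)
Your treatment of the case $D(Q_h,Q_c)<\theta$ is complete and correct. There $\tau=1$, the bound $|\Pr(E\mid\cA_n)-\Pr(E)|\le\Pr(\cA_n^c)/\Pr(\cA_n)$ transfers the conclusion of \Cref{prop:bootstrap_convergence} uniformly in $z$, and the bootstrap law, being a statement about almost every data sequence, is unaffected by the conditioning.

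The paper argues differently. It claims that everything in the unconditional proof survives except $A_n-A=\Opr(n^{-1/2})$, and it repairs that single step with your second inequality $\Pr(E\mid\cA_n)\le\Pr(E)/\Pr(\cA_n)$. That inequality transfers only tightness and $\opr$ statements, not weak limits. The paper also silently keeps the unconditional CLT for $(\hat\psi_t^n,\hat\psi_c^m)$, even though $\hat\psi_c^m=\langle A,\hat\xi_{c,m}\rangle$ is built from the same sample $\Qcm$ that drives $\cA_n$. That is precisely the dependence you flagged, so on this point your analysis is sharper than the paper's.

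The gap you leave open is the boundary case $D(Q_h,Q_c)=\theta$, which is admissible because there $\tau=a>0$. It cannot be closed, because the covariance you propose to compute is never zero when the claimed limit is non-degenerate. Write $\xi_c-\xi_h=\theta u$ with $\|u\|=1$ and $u(x)=\langle u,k(\cdot,x)\rangle$. Let $\epsilon_c=\hat\xi_{c,m}-\xi_c$, and define $\epsilon_h,\epsilon_t$ analogously. Then
\begin{align*}
\Delta&=2(1-\gamma)\theta\sqrt{n}\,\langle u,\epsilon_t-\epsilon_c\rangle+\opr(1),\\
\sqrt{n}\big(\theta-D(\Qcm,\Qhl)\big)&=\sqrt{n}\,\langle u,\epsilon_h-\epsilon_c\rangle+\opr(1),
\end{align*}
and the asymptotic covariance is $2(1-\gamma)\theta\Var_{X\sim Q_c}[u(X)]/c_1=2\sigma_c^2/\big((1-\gamma)\theta c_1\big)>0$.

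Let $(F,\Delta_\infty)$ be the bivariate normal limit of these two quantities, and let $q_\infty$ be the limit of $\sqrt{n}$ times the fusion critical value. Then $\Pr(\Delta\le z\mid\cA_n)\to\Pr(\Delta_\infty\le z\mid F>q_\infty)$. By strict positive quadrant dependence of a bivariate normal with positive correlation, this is strictly smaller than $\Pr(\Delta_\infty\le z)$ for every $z$. Meanwhile the bootstrap law of $\Delta^\ast$ still converges to the unconditional normal. Hence the final display fails, and the conditional rejection probability at the bootstrap critical value tends to a limit strictly above $\alpha$.

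Intuitively, merging is triggered when $\hat\xi_{c,m}$ happens to fall toward $\xi_h$. That is exactly what shrinks $D^2(\Qfml,\Qcm)$ relative to $D^2(\Qfml,\Qtn)$. Your fallback, that the treatment-sample part ``dominates the upper tail'', therefore cannot work: the distortion is anti-conservative. What is provable is the restricted statement with $D(Q_h,Q_c)<\theta$ (equivalently $\tau=1$), and that is exactly what your argument proves.

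A minor point: $\sigma_c^2>0$ does not follow from $Q_h\neq Q_c$ alone, since $\xi_h-\xi_c$ can be $Q_c$-a.s.\ constant (e.g.\ when $Q_c$ is a point mass). You need it as an explicit assumption for the P\'olya step.
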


\begin{proof}
    Most parts in the proof of \Cref{prop:bootstrap_convergence} (present in \Cref{sec:proof_bootstrap_convergence}) remain valid, except the proof of $A_n-A = \Opr(n^{-1/2})$. We can not we cannot apply the standard CLT in Hilbert space directly on $A_n = (1-\gamma)\big(\hat{\xi}_{h,n}-\hat{\xi}_{c,m}\big)$, as it is no longer weighted average of independent samples when conditioning on the event $\cA_n = \big\{\phi_f(\Qcm, \Qhl) = 1\big\}$.  Nevertheless, we show that this claim still holds, so long as $\lim_{n\to\infty}\Pr(\cA_n) > 0$.

    We aim to show that $A_n-A = \Opr(n^{-1/2})$ holds under the sequence of conditional probabilities $\Pr( \cdot \;|\; \cA_n)$, namely there exists a constant $c$ such that
    \begin{align*}
        \Pr\Big(\big\|A_n - A\big\| > c n^{-\frac{1}{2}} \,\big|\, \cA_n\Big) \to 0 .
    \end{align*}
    Using the CLT argument in \Cref{prop:bootstrap_convergence}, we know that this holds \emph{unconditionally}. That is, we can find a constant $c$ such that
    \begin{align}
        \Pr\Big(\big\|A_n - A\big\| > c n^{-\frac{1}{2}}\Big) \to 0 .
        \label{eq:An_unconditional_convergence}
    \end{align}
    Define $\tau \coloneqq \lim\nolimits_{n\to\infty} \Pr\big(\cA_n\big)$, which exists and is positive by assumption. In particular,  there exists $N\in\mathbb{N}_+$ such that $\Pr(\cA_n)>0$ for all $n\ge N$. In what follows, we restrict attention to  $n\ge N$, so that $\Pr( \cdot \;|\; \cA_n)$ is well-defined. For such $n$,
     \begin{align*}
        \Pr\Big(\big\|A_n - A\big\| > c n^{-\frac{1}{2}} \,\big|\, \cA_n\Big)
        &= \frac{\Pr\Big(\big\{\big\|A_n - A\big\| > c n^{-\frac{1}{2}}\big\}\cap\cA_n\Big)}{\Pr\big(\cA_n\big)}\\
        &\le\frac{\Pr\Big(\big\|A_n - A\big\| > c n^{-\frac{1}{2}}\Big)}{\Pr\big(\cA_n\big)}\\
        &\to \frac{\lim\nolimits_{n\to\infty} \Pr\Big(\big\|A_n - A\big\| > c n^{-\frac{1}{2}}\Big)}{\tau}
        = 0
        ,
    \end{align*}
    where in the last line we have used the assumption that $\tau > 0$ and \eqref{eq:An_unconditional_convergence}. This shows that $A_n-A = \Opr(n^{-1/2})$ holds conditionally under $\Pr(\cdot \;|\; \cA_n)$.
\end{proof}

Using this result, we can then show a conditional version of the partial bootstrap validity.

\begin{theorem}[Partial bootstrap validity provided merged controls under non-identical control groups]
\label{thm:partial_boots_validity_conditional}
    Suppose \Cref{ass:kernel_moment} holds under $R = Q_c, Q_h, Q_t$, and \Cref{ass:fixed_ratios} is satisfied. Let $\hat{q}_{1-\alpha,B}^{m+\ell,n}$ be the partial bootstrap critical value defined in \eqref{eq:partial_boot_quantile}. Moreover, let $\cA_n = \big\{\phi_f(\Qcm, \Qhl) = 1\big\}$ denote the event of rejecting the fusion null, and assume $\tau \coloneqq\lim_{n\to\infty}\Pr(\cA_n) > 0$. Under the causal null hypothesis $\cH_0:Q_c=Q_t$ and assuming $D(Q_c, Q_h) \leq \theta$, the test, conditionally on merging, has asymptotic level $\alpha$:
\begin{align*}
    \limsup_{n\to\infty}\lim_{B\to\infty}\Pr \Big\{\Delta>\hat{q}_{1-\alpha,B}^{m+\ell,n}\,\Big|\,\cH_0, \cA_n\Big\}\leq\alpha.
\end{align*}
\end{theorem}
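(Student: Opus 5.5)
The argument mirrors the proof of \Cref{thm:partial_boots_validity}, with \Cref{prop:bootstrap_convergence_conditional} substituted for \Cref{prop:bootstrap_convergence}.

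First we dispose of the degenerate case $Q_c = Q_h$. The conditional weak limit is only available for $Q_c \neq Q_h$, where $\sigma_c^2>0$. If the assumption $Q_c\neq Q_h$ from \Cref{thm:partial_boots_validity} is meant to carry over, we simply inherit it. So we work under $\cH_0$ with $0<D(Q_c,Q_h)\le\theta$. Since $\tau>0$, we also restrict to $n\ge N$, so that $\Pr(\cdot\mid\cA_n)$ is well defined.

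Next we invoke \Cref{prop:bootstrap_convergence_conditional}. Conditionally on $\cA_n$, $\Delta \rightsquigarrow \cN(0,V)$ with $V=4(1+1/c_1)\sigma_c^2>0$. The bootstrap law of $\Delta^\ast$ given $\Dnml$ converges in Kolmogorov distance to the same limit on $\cA_n$. Because the limit is continuous, Pólya's theorem upgrades the convergence of $\Delta$ to uniform convergence of CDFs. Letting $B\to\infty$, $\hat q^{m+\ell,n}_{1-\alpha,B}$ converges to the population bootstrap quantile $\hat q^{m+\ell,n}_{1-\alpha,\infty}$, and hence $\hat q^{m+\ell,n}_{1-\alpha,\infty}\to z_{1-\alpha}\sqrt V$ in conditional probability. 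This uses the standard quantile-convergence lemma: uniform convergence of distribution functions to a continuous, strictly increasing limit implies convergence of quantiles.

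Then we combine the two statements with a Slutsky-type argument. For any $\epsilon>0$,
\[
\Pr\big(\Delta>\hat q\mid\cA_n\big)\le \Pr\big(\Delta>z_{1-\alpha}\sqrt V-\epsilon\mid\cA_n\big)+\Pr\big(|\hat q-z_{1-\alpha}\sqrt V|>\epsilon\mid\cA_n\big).
\]
The second term vanishes. The first tends to $1-\Phi\big((z_{1-\alpha}\sqrt V-\epsilon)/\sqrt V\big)$. Letting $\epsilon\downarrow0$ gives a $\limsup$ of at most $\alpha$.

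The main obstacle is making sure the bootstrap approximation holds conditionally. The almost-sure statement in \Cref{prop:bootstrap_convergence_conditional} concerns the resampling law given the data, which is unaffected by conditioning on the data-measurable event $\cA_n$. The only concern is that the null-set and in-probability statements survive the conditioning. I would handle this exactly as in the proof of \Cref{prop:bootstrap_convergence_conditional}: any event $E_n$ with $\Pr(E_n)\to0$ satisfies
\[
\Pr(E_n\mid\cA_n)\le\Pr(E_n)/\Pr(\cA_n)\to0,
\]
because $\tau>0$. This transfers both the quantile convergence and any remainder terms to the conditional measure.
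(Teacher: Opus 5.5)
Your route is the paper's: rerun the proof of \Cref{thm:partial_boots_validity} with \Cref{prop:bootstrap_convergence_conditional} in place of \Cref{prop:bootstrap_convergence}. Your quantile-convergence/Slutsky step is a harmless substitute for the Kolmogorov-distance and DKW bookkeeping. The gap is in your last paragraph, where you say the only issue is whether null-set and in-probability statements survive conditioning on $\cA_n$.

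The bound $\Pr(E_n\mid\cA_n)\le\Pr(E_n)/\Pr(\cA_n)$ does transfer every $\opr$/$\Opr$ statement, including the remainders and the convergence of the bootstrap critical value to $z_{1-\alpha}\sqrt V$. It cannot transfer the \emph{weak limit} of the leading term $-2\sqrt n\langle A,B_n\rangle$, which you import from \Cref{prop:bootstrap_convergence_conditional}. That proposition's own proof likewise only transfers $A_n-A=\Opr(n^{-1/2})$.

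If $D(Q_c,Q_h)<\theta$, then $\tau=1$ and $\sup_E|\Pr(E\mid\cA_n)-\Pr(E)|\le\Pr(\cA_n^c)/\Pr(\cA_n)\to0$. The unconditional CLT therefore carries over, and your argument is complete in that case. At the boundary $D(Q_c,Q_h)=\theta$, however, $\tau=a\in(0,\alpha_f]$, and $\cA_n$ stays asymptotically correlated with $\Delta$.

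Concretely, put $u=\xi_h-\xi_c$ (so $\|u\|=\theta$) and define
$U_c=\sqrt n\langle u,\hat\xi_{c,m}-\xi_c\rangle$, $U_t=\sqrt n\langle u,\hat\xi_{t,n}-\xi_t\rangle$, $U_h=\sqrt n\langle u,\hat\xi_{h,\ell}-\xi_h\rangle$.
\begin{itemize}
\item Under $\cH_0$, the expansion in \Cref{sec:proof_bootstrap_convergence} gives $\Delta=2(1-\gamma)(U_c-U_t)+\opr(1)$.
\item Linearising the fusion statistic gives $\theta-D(\Qcm,\Qhl)=(U_c-U_h)/(\theta\sqrt n)+\Opr(n^{-1})$. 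Since $\sqrt n\,q^{m,\ell}_{1-\alpha_f}\to\kappa>0$, merging is asymptotically the event $\{U_c-U_h>\theta\kappa\}$.
\end{itemize}
Both quantities load positively on $U_c$. You merge exactly when the current control sample has drifted toward $Q_h$, which is also when $\tildet$ is atypically small relative to $\tnml$.

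Hence, conditionally on $\cA_n$, $\Delta$ converges to the law of $2(1-\gamma)(G_c-G_t)$ given $G_c-G_h>\theta\kappa$, where $G_c,G_t,G_h$ are the independent centred Gaussian limits of $U_c,U_t,U_h$. This law is strictly stochastically larger than $\cN(0,V)$, because $\operatorname{Cov}(G_c-G_t,\,G_c-G_h)=\Var(G_c)>0$. The bootstrap critical value, by contrast, still tends to $z_{1-\alpha}\sqrt V$, since the bootstrap law of $\Delta^\ast$ is centred given the data.

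So the limiting conditional level strictly exceeds $\alpha$ at the boundary. It can be arbitrarily close to $1$, e.g.\ with $\alpha_f=\alpha$ and $c_1\to0$ for fixed $c_2$. The statement as written therefore fails at $D(Q_c,Q_h)=\theta$, and your proof, like the paper's, is valid only for $D(Q_c,Q_h)<\theta$.

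A smaller point: $Q_c\neq Q_h$ does not by itself give $\sigma_c^2>0$. For example, $\sigma_c^2=0$ whenever $\xi_h-\xi_c$ is $Q_c$-a.s.\ constant, such as when $Q_c$ is a point mass. P\'olya's theorem and the quantile lemma need $\sigma_c^2>0$ as an explicit nondegeneracy assumption.
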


\begin{proof}
    The claim holds by following the same proof in \Cref{thm:partial_boots_validity} and using \Cref{prop:bootstrap_convergence_conditional} instead of \Cref{prop:bootstrap_convergence}.
\end{proof}

\subsubsection{Partial permutation}
\label{app:partial_perm_validity_conditional}

The following result shows that the partial permutation also remains valid conditionally on the event that the fusion test is rejected.

\begin{theorem}[Partial permutation validity provided merged controls]
\label{thm:partial_perm_validity_conditional}
Suppose \Cref{ass:kernel_moment} holds under $R = Q_c, Q_h, Q_t$, and \Cref{ass:fixed_ratios} is satisfied. Let  $\cA_n = \big\{\phi_f(\Qcm, \Qhl) = 1\big\}$ denote the probability of rejecting the fusion null, and assume $\tau \coloneqq\lim_{n\to\infty}\Pr(\cA_n) > 0$. Under the null hypothesis $H_0: Q_c=Q_t$,
\begin{align*}
\Pr\Big\{\tnml \ge\hat{c}_{1-\alpha,B}^{m+\ell,n}\,\big|\, \cA_n\Big\} \le\alpha.
\end{align*}
\end{theorem}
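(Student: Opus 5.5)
The plan is to reduce the conditional statement to the exchangeability argument behind \Cref{thm:partial_perm_validity}, working conditionally on the historical sample. First I would fix $\Qhl$ and condition on it throughout. Under $\cH_0: Q_c=Q_t$, the pooled current-arm observations $Z=(x^c_1,\dots,x^c_m,x^t_1,\dots,x^t_n)$ are then i.i.d.\ from $Q_c$ and independent of $\Qhl$. The partial permutation statistic $\tnml^b$ depends on the data only through $\Qhl$ and the permuted vector $Z^{\pi^b}$, because the fused control is built as $\frac{m}{m+\ell}Q^m_{c,\pi^b}+\frac{\ell}{m+\ell}\Qhl$. The unconditional proof of \Cref{thm:partial_perm_validity} then goes through the standard rank argument: $(\tnml^0,\dots,\tnml^B)$ is exchangeable given $\Qhl$, so the rank of $\tnml^0$ is (sub)uniform and $\Pr\{\tnml\ge \hat c^{m+\ell,n}_{1-\alpha,B}\mid \Qhl\}\le\alpha$. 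I would reuse exactly this step.

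The new ingredient is the event $\cA_n=\{\phi_f(\Qcm,\Qhl)=1\}$, which depends on $\Qhl$, on the \emph{first $m$ coordinates} of $Z$, and on the independent bootstrap weights used for $q^{m,\ell}_{1-\alpha_f,B}$. Crucially it does not depend on $\Qtn$. I would write
\begin{align*}
\Pr\{\tnml\ge\hat c\mid\cA_n\}=\frac{\E\big[\ind\{\tnml\ge\hat c\}\,\ind_{\cA_n}\big]}{\Pr(\cA_n)} .
\end{align*}
I would then try to show that the numerator is at most $\alpha\Pr(\cA_n)$. The tool is a conditioning argument: take $\Qhl$ and the unordered multiset of $Z$ as given, so that $Z$ is a uniformly random arrangement of that multiset. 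Under this conditioning, the pair (event $\cA_n$, statistic $\tnml$) is a function of the arrangement, and the permuted copies $\tnml^b$ are obtained by re-randomising the arrangement.

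The main obstacle is that $\cA_n$ is \emph{not} invariant under the permutations $\pi^b$, since it depends on which elements are labelled ``control''. So conditioning on $\cA_n$ breaks the exact exchangeability of $(\tnml^0,\dots,\tnml^B)$, and the finite-sample rank argument does not transfer verbatim. My first attempt would be to look for a group of permutations under which both $\cA_n$ and the law of $Z$ are invariant. Permutations within the control block and within the treatment block qualify, but they leave $\tnml$ unchanged, so they are useless.

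If that fails, I would fall back to an asymptotic version, $\limsup_n\le\alpha$, in the spirit of \Cref{prop:bootstrap_convergence_conditional}. The steps would be:
\begin{enumerate}
\item Show that the permutation critical value $\hat c$ converges in probability to a limit determined by $Q_c$ and $Q_h$ alone, after centring at $(1-\gamma)^2D^2(Q_h,Q_c)$ and rescaling.
\item Show that $\tnml$, suitably centred, is asymptotically a linear functional of $\hat\xi_{c,m}$, $\hat\xi_{t,n}$ and $\hat\xi_{h,\ell}$.
\item Show that $\cA_n$ is asymptotically measurable with respect to $\sqrt m(\hat\xi_{c,m}-\xi_c)$, $\sqrt\ell(\hat\xi_{h,\ell}-\xi_h)$ and the independent bootstrap weights.
\end{enumerate}
I would then use the joint Gaussian limit of these embeddings to compute the conditional rejection probability directly. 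The case $D(Q_h,Q_c)<\theta$ is easy, because there $\Pr(\cA_n)\to1$ and conditioning is asymptotically vacuous. The delicate case is $D(Q_h,Q_c)=\theta$, with $\tau\le\alpha_f$, where the correlation between the control embedding and the fusion decision must be controlled. I expect that step to carry essentially all the difficulty.
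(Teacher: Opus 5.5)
Your proposal does not prove the statement. You abandon the exact rank argument, and your fallback leaves its decisive step open. The obstacle you isolate is genuine, though, and it is exactly where the paper's own proof breaks down.

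The paper argues by the tower property. It quotes $\Pr\{\tnml\ge\hat{c}^{m+\ell,n}_{1-\alpha,B}\mid Z,\Qhl\}\le\alpha$ from the unconditional proof, asserts $\sigma(\cA_n)\subset\sigma(Z,\Qhl)$, and averages over $\cA_n$. Neither reading of $Z$ makes both facts true.
\begin{itemize}
\item If $Z$ is the \emph{unordered} pooled sample (the permutation orbit), the bound holds. But $\cA_n$ depends on which points carry the control label, so it is not measurable with respect to the orbit.
\item If $Z$ is the \emph{ordered} vector, $\cA_n$ is measurable. But then $\tnml$ is fixed, only $\pi^1,\dots,\pi^B$ remain random, and the conditional rejection probability can equal $1$.
\end{itemize}
Your remark that the only permutations preserving $\cA_n$ are within-block ones, which leave $\tnml$ unchanged, is the right diagnosis of why this cannot be patched.

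Your fallback cannot close the gap either. At best it gives $\limsup_n\le\alpha$, which is weaker than the finite-$n$ bound claimed. Worse, in the boundary case $D(Q_h,Q_c)=\theta$ (where $\tau=a\in(0,\alpha_f]$), carrying out your steps gives the opposite conclusion.

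Set up the following notation.
\begin{itemize}
\item $u=\xi_h-\xi_c$, and $\hat\xi_d$ is the embedding of the pooled sample $Z$.
\item $v=(1-\gamma)(\hat\xi_{h,\ell}-\hat\xi_d)$ and $\kappa=\gamma+m/n$.
\end{itemize}
Every permuted statistic, including the observed one, equals $\|v+\kappa(\hat\xi_{c,\pi}-\hat\xi_d)\|^2$. Here $\hat\xi_{c,\pi}$ is the embedding of the permuted control block, $v$ is permutation-invariant, and $v\to(1-\gamma)u\ne0$.

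This has two consequences.
\begin{itemize}
\item To first order, the rejection event is $\{\sqrt n\langle u,\hat\xi_{c,m}-\hat\xi_{t,n}\rangle>\mathrm{const}\}$.
\item Your step 1 must centre at the random quantity $\|v\|^2$, not at $(1-\gamma)^2D^2(Q_h,Q_c)$.
\end{itemize}
To first order, the merging event $\cA_n$ is $\{\sqrt n\langle u,(\hat\xi_{c,m}-\xi_c)-(\hat\xi_{h,\ell}-\xi_h)\rangle>\theta q^\star\}$. Here $q^\star>0$ is the limit of $\sqrt n\,q^{m,\ell}_{1-\alpha_f}$.

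Both events are increasing in the same asymptotically Gaussian coordinate $\sqrt n\langle u,\hat\xi_{c,m}-\xi_c\rangle$. So the limiting pair is bivariate normal with positive correlation. As $B\to\infty$, the conditional rejection probability therefore converges to a value strictly above $\alpha$, whenever $\Var_{X\sim Q_c}[\E_{Y\sim Q_h}k(X,Y)-\E_{X'\sim Q_c}k(X,X')]>0$.

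In words: merging selects control samples that drift towards $\Qhl$. That drift also pushes $\Qfml$ away from $\Qtn$, and re-randomising the current-arm labels does not undo the selection. The correlation you hoped to control is real and inflates the conditional size. So the statement itself fails in this case, and neither your route nor the paper's can be completed there.

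What survives is the asymptotic claim in the interior case $D(Q_h,Q_c)<\theta$, where $\tau=1$. There your "easy case" is a one-liner. For the rejection event $E$, $\limsup_n\Pr(E\mid\cA_n)\le\limsup_n\Pr(E)/\Pr(\cA_n)\le\alpha$.
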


\begin{proof}
Since $\tau>0$, there exists $N\in\mathbb{N}_+$ such that $\Pr(\cA_n)>0$ for all $n\ge N$. In what follows, we restrict attention to  $n\ge N$, so that $\Pr( \cdot \;|\; \cA_n)$ is well defined.
Following the proof in \Cref{sec:proof_partial_perm_validity}, we have \eqref{eq:partial_perm_conditional_validity}. Given $\cA_n$ is  $\sigma\big(Z,\Qhl\big)$-measurable. We have
\begin{align*}
    \Pr\Big\{\tnml \ge\hat{c}_{1-\alpha,B}^{m+\ell,n}\,\big|\, \cA_n\Big\} &= \E\bigg\{\Pr\Big\{\tnml \ge\hat{c}_{1-\alpha,B}^{m+\ell,n}\,\big|\, \cA_n, Z,\Qhl\Big\}\Big|\cA_n\bigg\}\\
    &= \E\bigg\{\Pr\Big\{\tnml \ge\hat{c}_{1-\alpha,B}^{m+\ell,n}\,\big|\, Z,\Qhl\Big\}\,\Big|\,\cA_n\bigg\}\\
    &\leq \E\big[\alpha\,\big|\,\cA_n\big]\\
    &=\alpha.
\end{align*}
The first equation builds on tower property; the second equation is due to $\sigma(\cA_n) \subset \sigma\big(Z,\Qhl\big)$. Taking limits as $n\to\infty$ gives the asymptotic statement
\begin{align*}
    \limsup_{n\to\infty}\Pr\Big\{T_n^{m+\ell}\ge \hat c_{1-\alpha,B}^{m+\ell,n}\,\big|\, \cA_n\Big\}\le\alpha.
\end{align*}

\end{proof}

\subsection{Consistency with merged samples}
\label{sec:conditional_consistency}
Similar to \Cref{sec:conditional_validity}, we show that \Cref{thm:partial_boots_consistency} and \Cref{thm:partial_perm_consistency} can be extended to case where we condition on the event $\cA_n=\big\{\phi(\Qcm,\Qhl)=1\big\}$. Since both the partial bootstrap and partial permutation satisfy condition \eqref{eq:unconditional_consistency}, as shown in \Cref{thm:partial_boots_consistency,thm:partial_perm_consistency}, the following result applies to both methods.

\begin{proposition}[Consistency provided merged controls]
\label{prop:consistency_conditional_merged}
 Let $\cA_n = \big\{\phi_f(\Qcm, \Qhl) = 1\big\}$ denote the event of rejecting the fusion null, and assume $\tau \coloneqq\lim_{n\to\infty}\Pr(\cA_n) > 0$. Suppose 
 \begin{equation}
 \label{eq:unconditional_consistency}
\lim_{n\to\infty}\Pr\Big(\phi_\ast\big(\Qcm,\Qhl,\Qtn\big)=1\Big)=1
 \end{equation}under $\cH_1$. Then 
\begin{align*}
\lim_{n\to\infty}\Pr\Big(\phi_\ast\big(\Qcm,\Qhl,\Qtn\big)=1\,|\,\cA_n\Big)=1.
\end{align*}
\end{proposition}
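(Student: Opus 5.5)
The argument mirrors the proof of \Cref{prop:bootstrap_convergence_conditional}: we pass from an unconditional statement to a conditional one by dividing by $\Pr(\cA_n)$. Work with the complementary (failure) event $E_n \coloneqq \{\phi_\ast(\Qcm,\Qhl,\Qtn)=0\}$. By hypothesis \eqref{eq:unconditional_consistency}, under $\cH_1$ we have $\Pr(E_n)\to 0$. The target statement is equivalent to $\Pr(E_n\,|\,\cA_n)\to 0$.

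First, since $\tau=\lim_{n\to\infty}\Pr(\cA_n)>0$, there is an $N$ such that $\Pr(\cA_n)\ge \tau/2>0$ for all $n\ge N$. Restricting to such $n$, the conditional probability $\Pr(\cdot\,|\,\cA_n)$ is well defined. For these $n$,
\begin{align*}
    \Pr\big(E_n \,\big|\, \cA_n\big) = \frac{\Pr(E_n\cap \cA_n)}{\Pr(\cA_n)} \le \frac{\Pr(E_n)}{\Pr(\cA_n)} \le \frac{2}{\tau}\Pr(E_n) \longrightarrow 0 .
\end{align*}
Hence
\begin{align*}
    \Pr\big(\phi_\ast(\Qcm,\Qhl,\Qtn)=1\,\big|\,\cA_n\big) = 1-\Pr\big(E_n\,\big|\,\cA_n\big) \to 1,
\end{align*}
which is the claim.

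There is essentially no obstacle here. The only point needing care is that the hypothesis is used in its unconditional form. This is legitimate because \Cref{thm:partial_boots_consistency,thm:partial_perm_consistency} give exactly \eqref{eq:unconditional_consistency}, under their stated geometric conditions, when the merged-sample statistic is always used. The event $\cA_n$ only selects a subset of the sample space, so the monotonicity bound $\Pr(E_n\cap\cA_n)\le\Pr(E_n)$ suffices, and no independence between $\cA_n$ and the causal statistic is needed. The assumption $\tau>0$ is what keeps the denominator bounded away from zero.
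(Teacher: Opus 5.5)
Your proof is correct and follows the paper's argument: both bound $\Pr(\{\phi_\ast = 0\}\cap\cA_n) \le \Pr(\phi_\ast = 0)$, divide by $\Pr(\cA_n)$, and use $\tau>0$ to keep the denominator away from zero. Your explicit bound $\Pr(\cA_n)\ge\tau/2$ for large $n$ is just a slightly more careful version of the paper's step of passing to the limit of the ratio directly.
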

\begin{proof}
There exists $N\in\mathbb{N}_+$ such that $\Pr(\cA_n)>0$ for all $n\ge N$. In what follows, we restrict attention to  $n\ge N$, so that $\Pr( \cdot \;|\; \cA_n)$ is well defined. Using the definition of conditional probability, we have
\begin{align*}
    \Pr\Big(\phi_\ast\big(\Qcm,\Qhl,\Qtn\big)=1\,|\,\cA_n\Big) 
    &=  \frac{\Pr\Big(\big\{\phi_\ast\big(\Qcm,\Qhl,\Qtn\big)=1\big\} \cap\cA_n\Big)}{\Pr(\cA_n)}\\ 
    &= 1 -\frac{\Pr\Big(\big\{\phi_\ast\big(\Qcm,\Qhl,\Qtn\big)=0\big\} \cap\cA_n\Big)}{\Pr(\cA_n)}\\
    &\geq 1 -\frac{\Pr\Big(\phi_\ast\big(\Qcm,\Qhl,\Qtn\big)=0\Big)}{\Pr(\cA_n)}\\
    &\to 1 - \frac{1-\lim_{n\to\infty}\Pr\Big(\phi_\ast\big(\Qcm,\Qhl,\Qtn\big)=1\Big)}{\tau} =1,
\end{align*}
where the last line is resulted in by $\tau>0$ and \eqref{eq:unconditional_consistency}, which is not conditional on $\cA_n$. We thus show the consistency provided the controls are merged. 
\end{proof}

\section{Main proofs}

\subsection{Proof of \Cref{thm:mmd_equiv_test}}
\label{sec:proof_mmd_equiv_test}
\begin{proof}
\Cref{thm:mmd_equiv_test} differs slightly from \citet[Theorem 8]{liu2026kerneltestsequivalence} in that we use the quantile $q_{1-\alpha_f}^{m,\ell}$ of $S_{m,\ell}$ as the critical rule, whereas \citet[Theorem 8]{liu2026kerneltestsequivalence} is stated using a bootstrapped estimate of this quantity. Specifically, their critical value is the corresponding quantile of the conditional of $D_W^2(Q_c^m)$ given $Q_c^m$, where $D_W^2(Q_c^m)$ is defined in \eqref{eq:mmd_equiv_boot_sample}. Nevertheless, a closer inspection of their proof in \citet[Appendix B.2.3]{liu2026kerneltestsequivalence} reveals that they first prove the result using $q_{1-\alpha_f}^{m,\ell}$, and then show that the bootstrapping estimation errors arising from using its bootstrapped estimate instead of $q_{1-\alpha_f}^{m,\ell}$ vanishes as $m \to \infty$ \citep[Lemma 7]{liu2026kerneltestsequivalence}. Consequently, \Cref{thm:mmd_equiv_test} follows by applying the same proof strategy for \citet[Theorem 8]{liu2026kerneltestsequivalence}.
\end{proof}

\subsection{Proof of \Cref{prop:bootstrap_convergence}}
\label{sec:proof_bootstrap_convergence}
\begin{proof}

We define $A \coloneqq \xi_f-\xi_c =  (1-\gamma)\big(\xi_h-\xi_c\big)$, $B \coloneqq \xi_t-\xi_c$. Then $B = 0$, since we assume $Q_c=Q_t$. Furthermore, define $A_n \coloneqq \hat{\xi}_{f,m+\ell} - \hat{\xi}_{c,m}=(1-\gamma)\big(\hat{\xi}_{h,n}-\hat{\xi}_{c,m}\big)$, $B_n \coloneqq \hat{\xi}_{t,n} - \hat{\xi}_{c,m}$. 

With these notations, we have $T=\|A\|^2$, $T_{n,m,\ell}=\|A_n-B_n\|^2$,  $\widetilde{T}_{n,m,\ell}=\|A_n\|^2$. Thus
\begin{align*}
    T_{n,m,\ell}-\widetilde{T}_{n,m,\ell}=\|A_n-B_n\|^2 - \|A_n\|^2 = -2\langle A_n, B_n\rangle+\|B_n\|^2.
\end{align*}
By the central limit theorem (CLT) in Hilbert space (\citealp[Section 9.1]{berlinet2011reproducing}; \citealp[Section 1.2]{wolfer2025variance}):
\begin{align*}
   B_n= B_n - B =\Opr(n^{-1/2})\implies \sqrt{n}\big\|B_n\big\|^2  = \opr(1).
\end{align*}
Similarly, $A_n-A = \Opr(n^{-1/2})$. Thus
\begin{align}
    \Delta=\sqrt{n}\big(T_{n,m,\ell}-\widetilde{T}_{n,m,\ell}\big) &= -2\sqrt{n}\big\langle A_n, B_n\big\rangle+\sqrt{n
    }\big\|B_n\big\|^2\nonumber\\
    &= -2\sqrt{n}\big\langle A_n-A, B_n\big\rangle-2\big\langle A, \sqrt{n}B_n\big\rangle+\opr(1)\nonumber\\
    &=-2\sqrt{n}\big\langle A, B_n\big\rangle+\opr(1) \label{eq:Delta_decomp}.
\end{align}
Note that
$B_n = \frac{1}{n}\sum_{i=1}^n k\big(\cdot, x_i^t\big) - \frac{1}{m} \sum_{j=1}^m k\big(\cdot, x_j^c\big)$, $A = \E_{Y \sim Q_f}k(\cdot, Y) - \E_{Z\sim Q_c}k(\cdot, Z)$. Thus
\begin{align*}
     \big\langle A, B_n \big\rangle =\underbrace{\frac{1}{n}\sum_{i=1}^n\Big[ \E_{Y \sim Q_f}k(x_i^t, Y) - \E_{Z\sim Q_c}k(x_i^t, Z)\Big]}_{\hat{\psi}_t^n} - \underbrace{\frac{1}{m} \sum_{j=1}^m \Big[\E_{Y \sim Q_f}k(x_j^c, Y) - \E_{Z\sim Q_c}k(x_j^c, Z)\Big]}_{\hat{\psi}_c^m}.
\end{align*}
Denote
\begin{align*}
    \psi_t &\coloneqq\E_{X\sim Q_t}\Big[\E_{Y\sim Q_f}k(X, Y) - \E_{Z\sim Q_c}k(X, Z)\Big],\\
    \psi_c &\coloneqq \E_{X\sim Q_c}\Big[\E_{Y\sim Q_f}k(X, Y) - \E_{Z\sim Q_c}k(X, Z)\Big];\\
    \sigma_t^2 &\coloneqq\Var_{X\sim Q_t}\Big[\E_{Y\sim Q_f}k(X, Y) - \E_{Z\sim Q_c}k(X, Z)\Big],\\
    \sigma_c^2 &\coloneqq\Var_{X\sim Q_c}\Big[\E_{Y\sim Q_f}k(X, Y) - \E_{Z\sim Q_c}k(X, Z)\Big].
\end{align*}
We have $\psi_t = \E_{X\sim Q_t}\hat{\psi}_t^n$,  $\psi_c = \E_{X\sim Q_c}\hat{\psi}_c^m$.  Under the assumption $Q_t=Q_c$, we also have $\psi_t = \psi_c$, $\sigma_t^2=\sigma_c^2$. 

By the classical central limit theorem(CLT) and independence of samples,
\begin{align*}
    \Big(\sqrt{n}\big(\hat{\psi}_t^n - \psi_t\big),\, \sqrt{n}\big(\hat{\psi}_c^m - \psi_c\big)\Big)\rightsquigarrow \cN\Big(0,\operatorname{diag}\big(\sigma^2_{c}, \sigma_{c}^2/c_1\big)\Big).
\end{align*}

Note that $\sqrt{n} \big\langle A, B_n \big\rangle= \sqrt{n}\big(\hat{\psi}_t^n - \psi_t\big)-\sqrt{n}\big(\hat{\psi}_c^m - \psi_c\big)$ is a linear continuous map of $\Big(\sqrt{n}\big(\hat{\psi}_t^n - \psi_t\big),\, \sqrt{n}\big(\hat{\psi}_c^m - \psi_c\big)\Big)$. By the continuous mapping theorem, we obtain 
\begin{align}
\label{eq:normal_convergence}
    \sqrt{n} \big\langle A, B_n \big\rangle \rightsquigarrow \cN\big(0,\sigma^2_t + \sigma_c^2/c_1\big).
\end{align}
 
With $\sigma_c^2 = \sigma_t^2$ we get
\begin{align*}
    \Delta \rightsquigarrow \cN\Big(0,4\big(1+1/c_1\big)\sigma_c^2\Big).
\end{align*}

For $\Delta^\ast\coloneqq \sqrt{n}\big(\tnmlstar-\tildetstar\big)$, we follow a similar procedure. Denote $A_n^\ast \coloneqq \hat{\xi}_{f,m+\ell}^\ast - \hat{\xi}_{c,m}^\ast = (1-\gamma)\big(\hat{\xi}_{h,\ell}^\ast - \hat{\xi}_{c,m}^\ast\big)$, $B_n^\ast \coloneqq \hat{\xi}_{t,n}^\ast - \hat{\xi}_{c,m}^\ast$. Thus
\begin{align*}
    \tnmlstar - \tildetstar = \big\|A_n^\ast-B_n^\ast\big\|^2 -\big\|A_n^\ast\big\|^2 = -2\big\langle A_n^\ast , B_n^\ast\big\rangle + \big\|B_n^\ast\big\|^2.
\end{align*}

Conditional on $\Dnml$, we get
\begin{align*}
    B_n^\ast - \E^\ast B_n^\ast =B_n^\ast - \big(\hat{\xi}_{c,m} -\hat{\xi}_{c,m}\big)= B_n^\ast=\operatorname{O}_{{\Pr}^\ast}\big(n^{-1/2}\big).
\end{align*}
This is because both $ \hat{\xi}_{t,n}^\ast$ and $ \hat{\xi}_{c,m}^\ast$ are embeddings of bootstrap samples around  $ \hat{\xi}_{c}^m$. Again we have $\sqrt{n}\|B_n^\ast\|^2 = \operatorname{o}_{{\Pr}^\ast}(1)$. 
Similarly, $A_n^\ast - A_n = \operatorname{O}_{{\Pr}^\ast}(n^{-1/2})$. So
\begin{align*}
    \Delta^\ast = -2\sqrt{n}\big\langle A_n,B_n^\ast\big\rangle + \operatorname{o}_{{\Pr}^\ast}(1).
\end{align*}
Note that
$B_n^\ast = \frac{1}{n}\sum_{i=1}^n k\big(\cdot, x_i^{t,\ast}\big) - \frac{1}{m} \sum_{j=1}^m k\big(\cdot, x_j^{c,\ast}\big)$, $A_n = \frac{1}{m+\ell}\Big[\sum_{i=1}^m k\big(\cdot, x_i^{c}\big)+\sum_{j=1}^\ell k\big(\cdot, x_j^{h}\big)\Big] - \frac{1}{m}\sum_{i=1}^m k\big(\cdot, x_i^{c}\big)$. Thus
\begin{align*}
     \big\langle A_n, B_n^\ast \big\rangle = \hat{\psi}_t^{n,\ast} - \hat{\psi}_c^{m,\ast}
\end{align*}
where
\begin{align*}
    \hat{\psi}_t^{n,\ast} \coloneqq \frac{1}{n}\sum_{i'=1}^n\bigg\{\frac{1}{m+\ell}\Big[\sum_{i=1}^m k(x_{i'}^{t,\ast}, x_i^{c})+\sum_{j=1}^\ell k(x_{i'}^{t,\ast}, x_j^{h})\Big] - \frac{1}{m}\sum_{i=1}^m k(x_{i'}^{t,\ast}, x_i^{c})\bigg\},\\
    \hat{\psi}_c^{m,\ast} \coloneqq \frac{1}{m}\sum_{i'=1}^m\bigg\{\frac{1}{m+\ell}\Big[\sum_{i=1}^m k(x_{i'}^{c,\ast}, x_i^{c})+\sum_{j=1}^\ell k(x_{i'}^{c,\ast}, x_j^{h})\Big] - \frac{1}{m}\sum_{i=1}^m k(x_{i'}^{c,\ast}, x_i^{c})\bigg\}.
\end{align*}
Denote
\begin{align*}
    \hat\psi\coloneqq \E^\ast_{X\sim \Qcm}\bigg\{ \frac{1}{m+\ell}\Big[\sum_{i=1}^m k(\cdot, x_i^{c})+\sum_{j=1}^\ell k(\cdot, x_j^{h})\Big] - \frac{1}{m}\sum_{i=1}^m k(\cdot, x_i^{c})\bigg\},
\end{align*}
we apply bootstrap CLT conditional on $\Dnml$:
\begin{align*}
    \Big(\sqrt{n}(\hat{\psi}_t^{n,\ast}-\hat{\psi}), \sqrt{n}(\hat{\psi}_c^{m,\ast}-\hat{\psi})\Big) \rightsquigarrow_{\Pr^\ast} \mathcal{N}\Big(0, \operatorname{diag}\big({\sigma_c^\ast}^2,{\sigma_c^\ast}^2/c_1\big)\Big),
\end{align*}
where ${\sigma_c^\ast}^2 \coloneqq \Var^\ast_{X\sim \Qcm}\bigg\{ \frac{1}{m+\ell}\Big[\sum_{i=1}^m k\big(X, x_i^{c}\big)+\sum_{j=1}^\ell k\big(X, x_j^{h}\big)\Big] - \frac{1}{m}\sum_{i=1}^m k\big(X, x_i^{c}\big)\bigg\}$, when the number of bootstrap samples $B\to\infty$.
With continuous mapping theorem we obtain
\begin{align*}
    \Delta^\ast \rightsquigarrow_{\Pr^\ast} \cN\Big(0,4\big(1+1/c_1\big){\sigma_c^\ast}^2\Big). 
\end{align*}
 
The convergence of $\Delta$ and $\Delta^\ast$, as well as ${\sigma_c^\ast}^2\rightarrow \sigma_c^2$ together give us
\begin{align*}
    \sup_{z\in\R} \bigg|{\Pr}^\ast\Big\{\sqrt{n}\big(\tnmlstar-\tildetstar\big)\leq z\Big\} - \Pr\Big\{\sqrt{n}\big(\tnml-\tildet\big)\leq z\Big\} \bigg| 
    \overset{a.s.}{\longrightarrow}0.
\end{align*}

\end{proof}

\subsection{Proof of \Cref{thm:partial_boots_validity}}
\label{sec:proof_partial_boots_validity}
\begin{proof}
We write $F_n^\ast(z) = {\Pr}^\ast\Big\{\sqrt{n}\big(\tnmlstar-\tildetstar\big)\leq z\Big\}$ as the conditional bootstrap CDF, and $F_n(z) = \Pr\Big\{\sqrt{n}\big(\tnml-\tildet\big)\leq z\Big\}$ as the unconditional CDF. Also write $F(\cdot)$ as the CDF of $\mathcal{N}(0, 4(1+1/c_1)\sigma_c^2)$. By \Cref{prop:bootstrap_convergence} and Portmanteau theorem, we have 
\begin{equation*}
F_n(z)\overset{a.s.}{\longrightarrow}F(z)
\end{equation*}
at every continuity point $z$ of $F$. Moreover, we have the uniform consistency
\begin{equation}
    \sup_{z\in \R} \Big|F_n^\ast(z)-F_n(z)\Big| \overset{a.s.}{\longrightarrow}0.
    \label{eq:cdf_uniform_consistency}
\end{equation}
Define $q_{n,1-\alpha}^\ast \coloneqq \inf_{z\in\R}\{z:F_n^\ast(z)\geq1-\alpha\}$, $q_{n,1-\alpha} \coloneqq \inf_{z\in\R}\{z:F_n(z)\geq1-\alpha\}$ and $q_{1-\alpha} \coloneqq \inf_{z\in\R}\{z:F(z)\geq1-\alpha\}$.

Decompose
\begin{align}
    \Pr(\Delta > q_{1-\alpha,B}^{m+\ell,n})
    &\leq 
    \underbrace{\Pr\big\{\Delta > q_{1-\alpha,B}^{m+\ell,n}\big\} - \Pr\nolimits^\ast\big\{\Delta^\ast > q_{1-\alpha,B}^{m+\ell,n}\big\} }_{\epsilon_{n,B}}
    + \Pr\nolimits^\ast\big\{\Delta^\ast > q_{1-\alpha,B}^{m+\ell,n}\big\}.
    \label{eq:empirical_cdf_decompose}
\end{align}

We have for any fixed $B$:
\begin{equation}
    \epsilon_{n,B}\leq \epsilon_n\coloneqq  \sup_{z\in \R} \Big|F_n^\ast(z)-F_n(z)\Big|,
    \label{eq:error_bound}
\end{equation}
while $\epsilon_n \overset{a.s.}{\longrightarrow}0$ as $n\to\infty$ by \eqref{eq:cdf_uniform_consistency}.

Recall that $F_n^\ast$ is the population CDF of the bootstrap draws $\Delta^\ast$; correspondingly, we denote $F_{n,B}$ as the empirical CDF of $\Delta^\ast$. Define $\delta_B' = \sqrt{\frac{1}{2B}\log(2B)}$. By the Dvoretzky–Kiefer–Wolfowitz inequality \citep{dvoretzky1956asymptotic}, the event $\cA = \Big\{ \sup_u \big| F_{n,B}(u) - F_n^\ast(u) \big| \leq \delta_B'\Big\}$ occurs with probability $\Pr^\ast(\cA) \geq 1 - B^{-1}$. Moreover, on $\cA$,
\begin{align*}
    q_{1-\alpha,B}^{m+\ell,n}
    = 
    \inf\big\{ u \in \R: 1-\alpha \leq F_{n,B}(u) \big\}
    &\geq\big\{ u \in \R: 1-\alpha \leq F_n^\ast(u) + \delta_B' \big\}
    \\
    &=\inf\big\{ u \in \R: 1-\alpha-\delta_B' \leq F_n^\ast(u) \big\}
    \\
    &=q_{n,1-\alpha-\delta_B'}.
\end{align*}
Thus 
\begin{equation}
    \Pr\nolimits^\ast\big\{\Delta^\ast > q_{1-\alpha,B}^{m+\ell,n} \big\}
    \leq 
    \Pr\nolimits^\ast\Big(\big\{\Delta^\ast > q_{n,1-\alpha-\delta_B'} \big\}\,,\, \cA\Big) + \Pr\nolimits^\ast(\cA)
    \leq \alpha + \delta_B' + \frac{1}{B}
    \eqqcolon \alpha + \delta_B,
    \label{eq:bootstrap_cdf_bound}
\end{equation}
where $\delta_B\to 0$ as $B\to \infty$.

Combining \eqref{eq:empirical_cdf_decompose}, \eqref{eq:error_bound} and \eqref{eq:bootstrap_cdf_bound}, we have 
\begin{align*}
\limsup_{n\to\infty}\lim_{B\to\infty}\Pr\big(\Delta > q_{1-\alpha,B}^{m+\ell,n} \big)
    \leq \alpha,
\end{align*}
which completes the proof.
\end{proof}

\subsection{Proof of \Cref{thm:partial_boots_consistency}}
\label{sec:proof_partial_boots_consistency}
\begin{proof}
    We start by using $q_{1-\alpha}^\ast$, which is the $(1-\alpha)$-quantile of the distribution of $\sqrt{n}\big(\tnmlstar-\tildetstar\big)$. We write $ T \coloneqq D^2\big(Q_f,Q_t\big) =\xi_f-\xi_t$. Using the notations in \Cref{sec:proof_bootstrap_convergence}, we denote $B \coloneqq \xi_t-\xi_c$, $A \coloneqq \xi_f - \xi_c$. Define the population gap
    \begin{align*}
        \Delta_\infty \coloneqq T-\widetilde{T} = \big\|\xi_f - \xi_t\big\|^2-\big\|\xi_f-\xi_c\big\|^2 = \|A-B\|^2-\|A\|^2 = -2\langle A,B \rangle + \|B\|^2.
    \end{align*}
    
Assume the alternative hypothesis and that $2(1-\gamma)D\big(Q_h,Q_c\big)\cos\beta < \,D(Q_c, Q_t)$. Since $A = \xi_f-\xi_c = (1-\gamma)(\xi_h-\xi_c)$,  we have $2(1-\gamma)\|A\|\cos\beta <\|B\|$. Consequently,
\begin{align*}
    \Delta_\infty =\|B\|^2-2\langle A,B\rangle = \|B\|^2-2\|A\|\|B\|\cos\beta>0.
\end{align*}

With the assumption that all kernel mean embeddings have finite first RKHS moment, the Bochner integral is well defined and the empirical embedding satisfies a strong law of large numbers (\citealp[Theorem 1.1]{cltbanach}; \citealp[Corollary 7.10]{ledoux2013probability}). Together with \Cref{lem:merged_sample_kme}, we have $\tnml \rightarrow T$, $\tildet\rightarrow \widetilde{T}$ almost surely. Hence
\begin{align*}
    \big(\tnml-\tildet\big) \to \Delta_{\infty} >0.
\end{align*}
Thus, 
\begin{align*}
    \Delta = \sqrt{n}\big(\tnml-\tildet\big) \to \infty.
\end{align*}
As proved in \Cref{sec:proof_bootstrap_convergence}, the distribution of $\sqrt{n}(\tnmlstar-\tildetstar)$ converges to $\cN\big(0,4(1+1/c_1)\sigma_c^2\big)$. Thus $q_{1-\alpha}^\ast = \Opr(1)$. By the conditional Glivenko-Cantelli, we have the quantile of bootstrap samples $\hat{q}_{1-\alpha,B}^{m+\ell,n} \rightarrow q_{1-\alpha}^\ast$. We thus obtain
\begin{align*}
    \lim_{n \to \infty} {\Pr}\big(\Delta>\hat{q}_{1-\alpha,B}^{m+\ell,n}\,\big|\,\cH_1\big) =1.
\end{align*}

Note that $\Delta_\infty>0$ is the sufficient and necessary condition for consistency to hold. We next show that if $\Delta_\infty = 0$, the test is not consistent.

Using the same notations in \Cref{sec:proof_partial_boots_validity}, we have
\begin{align*}
    A_n-A = \Opr(n^{-1/2}), \quad B_n-B=\Opr(n^{-1/2}),
\end{align*}
and 
\begin{align*}
\widetilde{V}_n &\coloneqq \Delta - \Delta_\infty\\
&= \sqrt{n}\Big(-2\big\langle A_n,B_n\big\rangle+\big\|B_n\big\|^2+2\big\langle A,B\big\rangle-\|B\|^2\Big)\\
&= -2\sqrt{n}\big\langle A_n-A,B\big\rangle
-2\sqrt{n}\big\langle A,B_n-B\big\rangle
+2\sqrt{n}\big\langle B,B_n-B\big\rangle
+\opr(1)\\
&= 2\sqrt{n}\Big(
\underbrace{\big\langle B-A,\hat{\xi}_{t,n}-\xi_t\big\rangle}_{\widetilde{\psi}_t^n-\widetilde{\psi}_t}-\underbrace{\big\langle B,\hat{\xi}_{f,m+\ell}-\xi_f\big\rangle}_{\widetilde{\psi}_f^{m+\ell}-\widetilde{\psi}_f}
+\underbrace{\big\langle A,\hat{\xi}_{c,m}-\xi_c\big\rangle}_{\widetilde{\psi}_c^m-\widetilde{\psi}_c}\Big)+\opr(1),
\end{align*}
where we have 
\begin{align*}
    \widetilde{\psi}_t(X) \coloneqq \E_{Y\sim Q_t}k(X,Y) - \E_{Z\sim Q_f}k(X,Z),\\
    \widetilde{\psi}_f(X) \coloneqq \E_{Y\sim Q_t}k(X,Y) - \E_{Z\sim Q_c}k(X,Z),\\
    \widetilde{\psi}_c(X) \coloneqq \E_{Y\sim Q_f}k(X,Y) - \E_{Z\sim Q_c}k(X,Z).
\end{align*}
% With the notations above, we also have 
% \begin{align*}
%     \widetilde{\psi}_t = \E_{X\sim Q_t} \widetilde{\psi}_t(X), \quad \widetilde{\psi}_f = \E_{X\sim Q_f} \widetilde{\psi}_f(X), \quad   \widetilde{\psi}_c = \E_{X\sim Q_c} \widetilde{\psi}_c(X).
% \end{align*}
Denote the corresponding variances 
\begin{align*}
    \widetilde{\sigma}_t^2 \coloneqq \Var_{X\sim Q_t} \widetilde{\psi}_t(X), \quad \widetilde{\sigma}_f^2 \coloneqq \Var_{X\sim Q_f} \widetilde{\psi}_f(X), \quad \widetilde{\sigma}_c^2 = \Var_{X\sim Q_c} \widetilde{\psi}_c(X).
\end{align*}
Let the empirical counterparts be
\begin{align*}
    \widetilde{\psi}_t^n\coloneqq\frac{1}{n}\sum_{i=1}^n\widetilde{\psi}_t(x_i^t), \quad \widetilde{\psi}_f^{m+\ell}\coloneqq\frac{1}{m+\ell}\Big[\sum_{i=1}^{m}\widetilde{\psi}_c(x_i^c)+ \sum_{i=1}^{\ell}\widetilde{\psi}_h(x_i^h)\Big], \quad \widetilde{\psi}_c^m\coloneqq\frac{1}{m}\sum_{i=1}^m\widetilde{\psi}_c(x_i^c).    
\end{align*}
By the CLT in Hilbert space, the linearity and continuous mapping, we get
\begin{align*}
    \widetilde{V}_n \rightsquigarrow_{\Pr} \cN\Big(0, 4\big(\widetilde{\sigma}_t^2+\widetilde{\sigma}_c^2/c_1 + \widetilde{\sigma}_f^2/(c_1+c_2)\big)\Big).
\end{align*}

Thus when $\Delta_\infty=0$, we have $\Delta \rightsquigarrow_{\Pr} \cN(0,\sigma_1^2)$ where $\sigma_1^2\coloneqq 4\big(\widetilde{\sigma}_t^2+\widetilde{\sigma}_c^2/c_1 + \widetilde{\sigma}_f^2/(c_1+c_2)\big)$ can be different from $\sigma_0^2\coloneqq4(1+1/c_1)\sigma_c^2$. As a result, we have
\begin{align*}
    \Pr\big\{\Delta >q_{1-\alpha}^\ast\big\} =1-\Phi\Big(\frac{\sigma_0}{\sigma_1}z_{1-\alpha}\Big)
\end{align*}
where $\Phi(\cdot)$ is the standard Gaussian CDF and $z_{1-\alpha}$ is the $(1-\alpha)$-quantile of standard Gaussian. Since $\frac{\sigma_0}{\sigma_1}z_{1-\alpha}>0$, $1-\Phi\big(\frac{\sigma_0}{\sigma_1}z_{1-\alpha}\big) < 0.5$. By replacing $q_{1-\alpha}^\ast$ with $\hat{q}_{1-\alpha,B}^{m+\ell,n}$ and taking $n\to\infty$, we obtain $ \lim_{n \to \infty} {\Pr}\big(\Delta>\hat{q}_{1-\alpha,B}^{m+\ell,n}\,\big|\,\cH_1\big) < 0.5$, thus the test is not consistent.
\end{proof}

\subsection{Proof of \Cref{thm:partial_perm_validity}}
\label{sec:proof_partial_perm_validity}
\begin{proof}

 Write
$Z:=\{Z_1,\dots,Z_{m+n}\}$ for the pooled sample $Q_c^m \cup Q_t^n$. Let $G$ be the set of all permutations $\pi$ of $I_Z\coloneqq\{1,\ldots,m+n\}$. Denote 
the observed control indices as $I_0$. 

For any $\pi \in G$, write the permuted control index set $I_c(\pi)$, permuted treatment index set $I_t(\pi)$, with $\big|I_c(\pi)\big|=m$, $\big|I_t(\pi)\big|=n$. Define the permuted empirical measures  
\begin{align*}
    Q_{c,\pi}^m\coloneqq\frac{1}{m}\sum_{i\in I_c(\pi)}\delta_{Z_i}, \quad  Q_{t,\pi}^n\coloneqq\frac{1}{n}\sum_{i\in I_t(\pi)}\delta_{Z_i}, 
\end{align*}
with the fused control 
\begin{align*}
    Q_{f,\pi}^{m+\ell}:=\frac{m}{m+\ell}Q_{c,\pi}^m+\frac{\ell}{m+\ell}Q_{h}^\ell,
\end{align*}
and let $T_{n,m,\ell}^\pi:=D^2(Q_{f,\pi}^{m+\ell}, Q_{t,\pi}^n)$. The observed statistic is $\tnml= T_{n,m,\ell}^{\pi^0}$ where $\pi^0$ is the identity $\pi^0(i)=i$.

Under $H_0:Q_c=Q_t$, the pooled sample $Z_1,\ldots,Z_{m+n}\iid Q_c$ has a joint distribution that is invariant under permutation. Moreover, since $\Qhl$ is independent of $Z$ and not permuted, the joint distribution, on the population level, satisfies
\begin{align*}
    \big(Z,\Qhl\big) \stackrel{d}{=}\big(\pi\cdot Z,\Qhl\big), \quad \forall \pi\in G.
\end{align*}
This is precisely the randomization hypothesis: the joint law of the data is invariant under the group $G$. Since $\tnml^\pi$ is a measurable function of $(\pi \cdot Z, \Qhl)$, the collection $\big\{\tnml^\pi: \pi \in G\big\}$ is invariant under the group so satisfies the assumptions of the randomized test theorem \citep[Definition 17.2.1; Theorem 17.2.2]{lehmann2022testing}.

% Let the $(1-\alpha)$–quantile of group $G$ be
% \begin{align*}
%     c_{1-\alpha}^{m+\ell,n} \coloneqq \inf\Big\{u:\ \frac{1}{|G|}\sum_{\pi\in G}\mathbbm{1}\big\{\tnml^\pi\le u\big\}\ \ge\ 1-\alpha\Big\}.
% \end{align*}

% By the classical randomization result \citep[Theorem 17.2]{lehmann2022testing}, any test that compares the observed statistic $\tnml$ to the group $(1-\alpha)-$quantile of $\big\{\tnml^\pi: \pi \in G\big\}$ under an invariant distribution has size at most $\alpha$. Therefore, under $\cH_0$, 
% \begin{align*}
%     \Pr\big\{\tnml \geq c_{1-\alpha}^{m+\ell,n} \leq \alpha\big\},
% \end{align*}
% for every $n$, $m$, $\ell$.

In our algorithm, we only samples a subset of permutations $\pi^1, \ldots, \pi^B$ of size $B$ uniformly from $G$, and calculate $\hat{c}_{1-\alpha,B}^{m+\ell,n}$ as the empirical quantile of $\big\{\tnml^{\pi_b}\big\}_{b=0}^B$. Conditional on the data $\Dnml$, the collection $\{\tnml^{\pi_b}\}_{b=1}^B$ is exchangeable. By exchangeability, the rank of $\tnml=\tnml^{\pi_0}$ among $\big\{\tnml^{\pi_b}\big\}_{b=1}^B$ is uniform on $\{1,\ldots, B+1\}$. We choose $\hat{c}_{1-\alpha,B}^{m+\ell,n}$ so that at most an $\alpha-$fraction of the sample with value larger than or equal to it. Hence, by the Monte Carlo randomization tests \citep{lehmann2022testing}, we have
\begin{align}
    \Pr\Big\{\tnml \geq \hat{c}_{1-\alpha,B}^{m+\ell,n} \,\big|\, Z, \Qhl\Big\}\leq\alpha.
\label{eq:partial_perm_conditional_validity}
\end{align}
Finally, take expectations over $\big(Z,\Qhl\big)$ we get the unconditional result under $\cH_0$:
\begin{align*}
        \Pr\big\{\tnml \geq \hat{c}_{1-\alpha,B}^{m+\ell,n} \big\}\leq\alpha,
\end{align*}
for all $n$, $m$, $\ell$, $B$.
\end{proof}

\subsection{Proof of \Cref{thm:partial_perm_consistency}}
\label{sec:proof_partial_perm_consistency}
We first introduce the Hadamard differentiability of the squared MMD.
\begin{lemma}[Hadamard differentiability of the squared MMD]
\label{lem:hadamard_mmd}
For $\gamma \in (0,1)$, the squared MMD function 
$D^2\big((1-\gamma)Q_c+\gamma Q_h, Q_t\big)$ is Hadamard differentiable with respect to probability measures $Q_c$, $Q_h$ and $Q_t$.
\end{lemma}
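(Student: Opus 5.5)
We factor the map through the kernel mean embedding. Write $\Phi(Q_c,Q_h,Q_t) \coloneqq D^2\big((1-\gamma)Q_c+\gamma Q_h, Q_t\big)$. Under \Cref{ass:kernel_moment} (and, to be safe, for a bounded kernel), the embedding $R \mapsto \xi_R = \int k(\cdot,x)\,dR(x)$ is defined on the space of finite signed measures $\mathcal{M}$ with finite first $\sqrt{k(x,x)}$-moment. This is the natural domain containing the empirical perturbations $\sqrt{n}(Q^n - Q)$. On $\mathcal{M}$ we use the norm $\|\mu\|_{\mathcal{M}} \coloneqq \int \sqrt{k(x,x)}\,d|\mu|(x)$, or the total variation norm when $k$ is bounded.

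By the Bochner integral bound $\|\xi_\mu\|_{\cH_k} \le \int \|k(\cdot,x)\|_{\cH_k} d|\mu|(x) = \|\mu\|_{\mathcal{M}}$, the embedding is a bounded linear operator $\mathcal{M}\to\cH_k$. We can therefore write
\[
\Phi(Q_c,Q_h,Q_t) = \big\| L(Q_c,Q_h,Q_t)\big\|_{\cH_k}^2, \qquad L(Q_c,Q_h,Q_t) \coloneqq (1-\gamma)\xi_{Q_c} + \gamma \xi_{Q_h} - \xi_{Q_t}.
\]
The map $L : \mathcal{M}^3 \to \cH_k$ is continuous and linear, so it is Hadamard (indeed Fréchet) differentiable with derivative equal to itself. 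The map $N(h) = \|h\|_{\cH_k}^2$ on the Hilbert space is Fréchet differentiable, with $N'_h(g) = 2\langle h, g\rangle_{\cH_k}$, since $N(h+g)-N(h)-2\langle h,g\rangle = \|g\|^2 = o(\|g\|)$. The chain rule for Hadamard differentiable maps (e.g.\ van der Vaart, Lemma 3.9.3) then gives Hadamard differentiability of $\Phi = N\circ L$, with derivative
\[
\Phi'_{(Q_c,Q_h,Q_t)}(h_c,h_h,h_t) = 2\big\langle (1-\gamma)\xi_{Q_c}+\gamma\xi_{Q_h}-\xi_{Q_t},\; (1-\gamma)\xi_{h_c}+\gamma\xi_{h_h}-\xi_{h_t}\big\rangle_{\cH_k}.
\]
Tangentially, this derivative is needed only along directions $h_t \to h$ with $Q+t h_t \in \mathcal{M}$.

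To make Hadamard differentiability explicit, we check the definition directly. Take $t_j\to0$ and $h^{(j)}\to h$ in $\mathcal{M}^3$. Then
\[
\frac{\Phi(Q+t_jh^{(j)})-\Phi(Q)}{t_j} = 2\langle L(Q), L(h^{(j)})\rangle + t_j\|L(h^{(j)})\|^2 .
\]
By boundedness of $L$, the first term converges to $2\langle L(Q), L(h)\rangle$ and the second term vanishes.

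The only real obstacle is the choice of domain and norm. Weak convergence of empirical processes (needed later in \Cref{sec:proof_partial_perm_consistency} for the functional delta method) must hold in a norm that dominates $\|\xi_\mu\|_{\cH_k}$. The cleanest route is to identify measures with their embeddings directly and regard $\Phi$ as a function of $(\xi_{Q_c},\xi_{Q_h},\xi_{Q_t}) \in \cH_k^3$. The Hilbert-space CLT used in \Cref{sec:proof_bootstrap_convergence} then supplies the required weak convergence, and differentiability reduces to that of the quadratic form $N\circ L$ on $\cH_k^3$, which holds trivially.
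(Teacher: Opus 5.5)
Your proposal is correct and follows essentially the same route as the paper: write the squared MMD as the squared $\cH_k$-norm of a linear combination of kernel mean embeddings, observe that this quadratic form on $\cH_k^3$ is Fr\'echet (hence Hadamard) differentiable, and compose with the linear embedding map $R \mapsto \xi_R$. Your version is more careful. It makes the domain and norm explicit, ultimately identifying measures with their embeddings, as the paper implicitly does. It also uses the squared norm with weights matching the statement, whereas the paper's $\Psi$ omits the square and swaps $\gamma$ and $1-\gamma$.
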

\begin{proof}[Proof of \Cref{lem:hadamard_mmd}]
    Define $\Psi:\cH_k^3\times[0,1]\rightarrow \R$,
\begin{align*}
    \Psi(\xi_c,\xi_h,\xi_t;\gamma) \coloneqq \Big\|\xi_t- \big(\gamma\xi_c+(1-\gamma)\xi_h\big)\Big\|_{\cH_k}.
\end{align*}
  Since $\Psi$ is a quadratic form on a Hilbert product space, it is Fréchet, hence Hadamard, differentiable with respect to $(\xi_c,\xi_h,\xi_t)$. Composing with the linear embedding map $R \mapsto \xi_R$,  where $R$ is a probability measure, gives Hadamard differentiability of $(Q_c,Q_t,Q_h) \mapsto T(Q_c,Q_t,Q_h)= \Psi(\xi_c,\xi_h,\xi_t;\gamma)$.
\end{proof}
Now we provide the proof for  \Cref{thm:partial_perm_consistency}.

\begin{proof}[Proof of \Cref{thm:partial_perm_consistency}]
 Denote $Q_d \coloneqq\lambda Q_c+ (1-\lambda)Q_t$, $Q_{f,d} = \gamma Q_d+(1-\gamma)Q_h$. The corresponding population kernel mean embeddings are written as $\xi_d \coloneqq (1-\lambda) \xi_c + \lambda\xi_t$,  $\xi_{f,d} \coloneqq \gamma \xi_d + (1-\gamma)\xi_h$.

Denote the pooled empirical measure of $Z\coloneqq(Z_1,\ldots,Z_{m+n})$ (defined in \Cref{sec:proof_partial_perm_validity}) as $Q_{d}^{m+n} \coloneqq (1-\lambda) \Qcm+ \lambda \Qtn$. In each permutation $b$, let $Q_{c,\pi^b}^m$ be the empirical measure of the $m$ observations drawn without replacement from the $m+n$ pooled observations, and similarly $Q_{t,\pi^b}^n$ for the remaining $n$. The fused control in permutation $b$ is thus $Q_{f,\pi^b}^{m+\ell}=\gamma Q_{c,\pi^b}^m + (1-\gamma)Q_{h,b}^{\ell}$; the empirical measure of the $b$-th pooled data is denoted $Q_{d,\pi^b}^{m+n} \coloneqq (1-\lambda)Q_{c,\pi^b}^m + \lambda Q_{t,\pi^b}^n$. The statistic $T_{n,m,\ell}^{b}$ of the $b$-th permutation is thus $D^2(Q_{f,\pi^b}^{m+\ell}, Q_{t,\pi^b}^{n})$.

Conditional on the pooled sample, by the extended Glivenko--Cantelli theorem for finite populations \citep[Corollary 3.2]{motoyama2024extended}, the permutation empirical measures satisfy
\begin{align*}
Q_{c,\pi^b}^m \Rightarrow Q_d^{m+n}, 
\qquad 
Q_{t,\pi^b}^n \Rightarrow Q_d^{m+n},
\end{align*}
almost surely as $m,n\to\infty$, where $Q_d^{m+n}$ denotes the empirical distribution of the pooled data.

For $Q_h$, the empirical kernel mean embedding satisfies $\xi_h^\ell \to \xi_h$ almost surely. By linearity of kernel mean embeddings and the continuous mapping theorem, it follows that
\begin{align*}
    Q_{f,\pi^b}^{m+\ell} \Rightarrow Q_{f,d}.
\end{align*}

As $D^2(\cdot,\cdot)$ is Hadamard differentiable according to \Cref{lem:hadamard_mmd}, together with continuous mapping theorem we get
\begin{align*}
T_{n,m,\ell}^{b}=D^2\big(Q_{f,\pi^b}^{m+\ell}, Q_{t,\pi^b}^{n}\big) \to D^2\big(Q_{f,d}, Q_d\big).
\end{align*}

So the partial permutation yields $T_{n,m,\ell}^b$ that converges to 
\begin{align*}
    D^2\big(Q_{f,d}, Q_d\big)&=\Big\|\frac{m}{m+\ell}\xi_d+\frac{\ell}{m+\ell}\xi_h-\xi_d\Big\|^2\\
    &=\big\|(1-\gamma)(\xi_h-\xi_d)\big\|^2\\
    & =\big\|(1-\gamma)(\xi_h-\xi_c) - \lambda(1-\gamma) (\xi_t-\xi_c)\big\|^2.
\end{align*}
While $\tnml$ converges to 
\begin{align*}
    \big\|(1-\gamma)(\xi_h-\xi_c) - (\xi_t-\xi_c)\big\|^2.
\end{align*}
We want them to be separable, which means
\begin{align*}
    \underbrace{\big\|(1-\gamma)(\xi_h-\xi_c) - \lambda(1-\gamma) (\xi_t-\xi_c)\big\|^2}_{(1)}<\underbrace{\big\|(1-\gamma)(\xi_h-\xi_c) - (\xi_t-\xi_c)\big\|^2}_{(2)}.
\end{align*}
 We have
\begin{align*}
  (1)-(2) &=  \big(1-\lambda(1-\gamma)\big)\Big\{2(1-\gamma)\big\langle\xi_h-\xi_c,\xi_t-\xi_c\big\rangle-\big(1+\lambda(1-\gamma)\big)\big \|\xi_t-\xi_c\|^2\Big\}.
\end{align*}
Since $\lambda \in (0,1)$, $\gamma \in (0,1)$, we have $1-\lambda(1-\gamma)>0$. Thus   
\begin{align*}
(1)<(2) \iff 2(1-\gamma)\big\langle\xi_h-\xi_c,\xi_t-\xi_c\big\rangle-\big(1+\lambda(1-\gamma)\big)\big \|\xi_t-\xi_c\|^2 <0.
\end{align*}
Since we have $\big\|\xi_t-\xi_c\big\|\neq 0$, the above inequality stands if and only if $\xi_h-\xi_c = 0$ or $2\cos\beta\, \big\|\xi_h-\xi_c\big\|< \Big(\frac{1}{1-\gamma}+\lambda \Big) \big\|\xi_t-\xi_c\big\|$.
\end{proof}

\subsection{Proof of \Cref{thm:validity_general}}
\label{pf:thm:validity_general}
\begin{proof}
    Assuming first that $\cH_0^f: D\big(Q_h,Q_c\big) \geq \theta$ holds, we can rewrite the Type-I error rate using law of total probability as
    \begin{align*}
        &\,
        \Pr\Big(\phi_\ast\big(\Qcm, \Qhl, \Qtn\big) = 1 \,\big|\, \cH_0,\, \cH_0^f\Big) 
        \\
        &=
        \underbrace{\Pr\Big(\phi_\ast\big(\Qcm, \Qhl, \Qtn) = 1 \,\big|\, \cH_0,\, \cH_0^f,\, \phi_f\big(\Qcm, \Qhl\big) = 1\Big)}_{T_{11}} 
        \underbrace{\Pr\Big(\phi_f\big(\Qcm, \Qhl\big) = 1 \,\big|\, \cH_0,\, \cH_0^f\Big)}_{T_{12}}
        \\
        &\quad
        + \underbrace{\Pr\Big(\phi_\ast\big(\Qcm, \Qhl, \Qtn\big) = 1 \,\big|\, \cH_0,\, \cH_0^f,\, \phi_f\big(\Qcm, \Qhl\big) = 0\Big)}_{T_{13}} 
        \underbrace{\Pr\Big(\phi_f\big(\Qcm, \Qhl\big) = 0 \,\big|\, \cH_0,\, \cH_0^f\Big)}_{T_{14}}
        \\
        \,&=\,
        T_{11} T_{12} + T_{13} \big(1 - T_{12}\big)
        ,
    \end{align*}
    where the last line holds since $T_{14} = 1 - T_{12}$. Intuitively, $T_{11}$ is the probability of rejecting the causality test given that the fusion test is also rejected, i.e., $\Qcm$ and $\Qhl$ are merged, and $T_{12}$ is the probability of rejecting the fusion test. The terms $T_{13}$ and $T_{14}$ have similar interpretations. We will bound each term separately. We first observe that the limiting value of $T_{12}$ is
    \begin{align}
        T_{12}^\infty \,\coloneqq\, 
        \lim_{n \to \infty} T_{12} 
        \,\stackrel{(a)}{=}\, \lim_{n \to \infty} \Pr\Big(\phi_f\big(\Qcm, \Qhl\big) = 1 \,\big|\, \cH_0^f \Big) 
        \,\stackrel{(b)}{=}\,
        \begin{cases}
            0 \,,   & \quad D\big(Q_c, Q_h\big) > \theta \,, \\
            a, & \quad D\big(Q_c, Q_h\big) = \theta ,
        \end{cases}
        \label{eq:T12_limit}
    \end{align}
    where $a$ is defined in \eqref{eq:fusion_test_validity}, step $(a)$ holds since the events $\big\{\phi_f(\Qcm, \Qhl) = 1\big\}$ and $\big\{\cH_0: Q_c = Q_t\big\}$ are independent, and $(b)$ holds by noting that the LHS is the limiting Type-I error rate of the fusion test under $\cH_0^f: D\big(Q_h,Q_c\big) \geq \theta$, which is either $0$ or $\alpha$ by the assumption \eqref{eq:fusion_test_validity}. Moreover, we can bound $T_{13}$ as
    \begin{align*}
        \limsup_{n \to \infty} T_{13}
        &=
        \limsup_{n \to \infty} \Pr\Big(\phi_\ast(\Qcm, \Qhl, \Qtn) = 1 \,\big|\, \cH_0\Big)
        \leq
        \alpha
        ,
    \end{align*}
    where the equality holds since $\phi_f\big(\Qcm, \Qhl\big) = 0$ means no merging, so $\phi_\ast\big(\Qcm, \Qhl, \Qtn\big)$ does not depend on $\Qhl$, and the last step follows by the assumed Type-I error control of the causality test $\phi_\ast$, namely \eqref{eq:causal_test_validity_no_merge}. We now bound $T_{11}$, the Type-I error of the causality test given merging, by treating it separately for $D\big(Q_h,Q_c\big) > \theta$ and $D\big(Q_h,Q_c\big) = \theta$. When $D\big(Q_h,Q_c\big) > \theta$, we bound the probability $T_{11}$ by 1 and combine with the above results to yield
    \begin{align}
        \limsup_{n \to \infty} \Pr\Big(\phi_\ast\big(\Qfml, \Qtn\big) = 1 \,\big|\, \cH_0,\, D\big(Q_c, Q_h\big) > \theta\Big) 
        &\leq
        1 \times \lim_{n \to \infty} T_{12} + \limsup_{n \to \infty} T_{13} \times \lim_{n \to \infty} (1 - T_{12}) \nonumber \\
        &\leq
        0 + \alpha \times 1 \nonumber \\
        &=
        \alpha.
        \label{eq:bound_case_1}
    \end{align}
    where in the first step we have used the inequality $\limsup_{n \to \infty}f_ng_n \leq \big(\limsup_{n \to \infty}f_n\big)\big(\limsup_{n \to \infty}g_n\big)$ for non-negative sequences $f_n,g_n$, and the second line holds due to \eqref{eq:T12_limit} and that we have assumed $D(Q_c, Q_h) > \theta$.
    
    When instead $D\big(Q_c, Q_h\big) = \theta$, we invoke assumption \eqref{eq:causal_test_validity} to give
    \begin{align*}
        \limsup_{n \to \infty} T_{11}
        =
        \limsup_{n \to \infty} \Pr\Big(\phi_\ast\big(\Qcm, \Qhl, \Qtn\big) = 1 \,\big|\, \cH_0, \, D\big(Q_c, Q_h\big)=\theta, \, \phi_f\big(\Qcm, \Qhl\big)=1\Big)
        \leq
        \alpha.
    \end{align*}
    In this case, the overall Type-I error rate can be bounded as
    \begin{align*}
        &
        \limsup_{n \to \infty} \Pr\Big(\phi_\ast(\Qfml, \Qtn) = 1 \,\big|\, \cH_0,\, D\big(Q_h,Q_c\big) = \theta\Big) 
        \\
        &\leq
        \limsup_{n \to \infty} T_{11} \times \lim_{n \to \infty} T_{12}
        + \limsup_{n \to \infty} T_{13} \times \lim_{n \to \infty} \big(1 - T_{12}\big)
        \\
        &\leq
        \alpha T_{12}^\infty + \alpha \big(1 - T_{12}^\infty\big)
        \\
        &=
        \alpha
        .
        \tagaligneq
        \label{eq:bound_case_2}
    \end{align*}
    This proves the desired result when $\cH_0^f: D\big(Q_h,Q_c\big) \geq \theta$.

    We now prove the claimed result assuming $\cH_1^f: D\big(Q_h,Q_c\big) < \theta$. Decomposing the Type-I error probability in a similar way as before, we have
    \begin{align*}
        &
        \Pr\Big(\phi_\ast\big(\Qcm, \Qhl, \Qtn\big) = 1 \,\big|\, \cH_0,\, \cH_1^f\Big) 
        \\
        \,&=\,
        \underbrace{\Pr\Big(\phi_\ast\big(\Qcm, \Qhl, \Qtn\big) = 1 \,\big|\, \cH_0,\, \cH_1^f,\, \phi_f\big(\Qcm, \Qhl\big) = 1\Big)}_{T_{21}}
        \underbrace{\Pr\Big(\phi_f\big(\Qcm, \Qhl\big) = 1 \,\big|\, \cH_0,\, \cH_1^f\Big)}_{T_{22}}
        \\
        &\,\quad
        + \underbrace{\Pr\Big(\phi_\ast\big(\Qcm, \Qhl, \Qtn\big) = 1 \,\big|\, \cH_0,\, \cH_1^f,\, \phi_f\big(\Qcm, \Qhl\big) = 0\Big)}_{T_{23}}
        \underbrace{\Pr\Big(\phi_f\big(\Qcm, \Qhl\big) = 0 \,\big|\, \cH_0,\, \cH_1^f\Big)}_{T_{24}}
        \\
        &=
        T_{21} + (T_{23} - T_{21}) T_{24}
        ,
    \end{align*}
    where in the last line we have used $T_{22} = 1 - T_{24}$. Under the assumed condition \eqref{eq:fusion_test_validity} on the fusion test, we have $\lim_{n \to \infty} T_{24} = 0$. Since furthermore $\big|T_{23} - T_{21}\big|$ is bounded, this implies that $\lim_{n \to \infty} \big(T_{23} - T_{21}\big) T_{24} = 0$. Moreover, under the event $\big\{\phi_f\big(\Qcm, \Qhl\big) = 1\big\}$, the empirical sample $\Qfml$ is the merged sample based on the independent samples $\Qcm$ and $\Qhl$. This implies that 
    \begin{align*}
        \limsup_{n \to \infty}T_{21}
        =
        \limsup_{n \to \infty} \Pr\Big(\phi_\ast\big(\Qcm, \Qhl, \Qtn\big) = 1 \,\big|\, \cH_0, \, D\big(Q_c, Q_h\big) < \theta \Big) 
        \leq 
        \alpha,
    \end{align*}
    where the inequality holds by the validity of the causality test assumed in \eqref{eq:causal_test_validity}. Combining these results gives
    \begin{align*}
        \limsup_{n \to \infty} \Pr\Big(\phi_\ast\big(\Qcm, \Qhl, \Qtn\big) = 1 \,\big|\, \cH_0,\, \cH_1^f\Big) 
        \leq
        \limsup_{n \to \infty}T_{21} + \lim_{n \to \infty}\big(T_{23} - T_{21}\big)\,T_{24}
        \leq
        \alpha.
    \end{align*}
    The above inequality, together with \eqref{eq:bound_case_1} and \eqref{eq:bound_case_2}, completes the proof.
\end{proof}

\subsection{Proof of \Cref{thm:consistency_general}}
\label{sec:proof_thm_consistency_general}
\begin{proof}
Fix $Q_c, Q_h, Q_t$ with $Q_c \neq Q_t$. Decomposing the probability of rejection in the same way as in \Cref{pf:thm:validity_general}, we have
\begin{align*}
    \Pr\Big(\phi_\ast\big(\Qcm, \Qhl, \Qtn\big) = 1\Big) 
    \,&=\,
    \underbrace{\Pr\Big(\phi_\ast\big(\Qcm, \Qhl, \Qtn\big) = 1 \,\big|\, \phi_f\big(\Qcm, \Qhl\big) = 1\Big)}_{T_{1}} 
    \underbrace{\Pr\Big(\phi_f\big(\Qcm, \Qhl\big) = 1\Big)}_{T_{2}}
    \\
    &\,\quad
    + \underbrace{\Pr\Big(\phi_\ast\big(\Qcm, \Qhl, \Qtn\big) = 1 \,\big|\, \phi_f\big(\Qcm, \Qhl\big) = 0\Big)}_{T_{3}} 
    \underbrace{\Pr\Big(\phi_f(\Qcm, \Qhl) = 0\Big)}_{1-T_2}
    \\
    \,&=\,
    T_{1} T_{2} + T_{3} \big(1 - T_{2}\big)
    \tagaligneq \label{eq:consistency_decomposition}
    .
\end{align*}
The assumed condition \eqref{eq:fusion_test_validity} on the fusion test implies that $\lim_{n\to\infty} T_{2} = T_2^\infty$, where 
\begin{align*}
    T_2^\infty = \begin{cases}
        0, \qquad &\textrm{if } D(Q_c, Q_h ) > \theta, \\
        a, \qquad &\textrm{if } D(Q_c, Q_h ) = \theta, \\
        1, \qquad &\textrm{if } D(Q_c, Q_h ) < \theta,
    \end{cases}
\end{align*}
for some constant $a \in (0, \alpha_f]$. 
Moreover, since $T_1$ (resp., $T_{3}$) is the probability of rejecting the causal null hypothesis $\cH_0$ given that $\Qcm$ and $\Qhl$ are merged (resp., \emph{not} merged), the assumed consistency of the causality test, namely \eqref{eq:causal_test_consistency_no_merge} and \eqref{eq:causal_test_consistency}, implies $\lim_{n \to \infty}T_1 = \lim_{n \to \infty}T_{3} = 1$. Combining these arguments with the previous decomposition gives
\begin{align*}
    \lim\nolimits_{n \to \infty} \Pr\Big(\phi_\ast\big(\Qcm, \Qhl, \Qtn\big) = 1\Big)
    = T_2^\infty \lim\nolimits_{n \to \infty} T_1  + \big(1-T_2^\infty\big) \lim\nolimits_{n \to \infty} T_3
    = 1 .
\end{align*}
\end{proof}

\subsection{Proof of \Cref{cor:validity}}
\label{pf:cor:validity}
\begin{proof}
It suffices to verify the conditions in \Cref{thm:validity_general} hold for the proposed fusion test and the causality tests in \Cref{alg:equiv_ttp}. 

Firstly, since the fusion test is an MMD equivalence test, \Cref{thm:mmd_equiv_test} immediately shows that the condition \eqref{eq:fusion_test_validity} holds. The second condition \eqref{eq:causal_test_validity_no_merge} requires the causality test to be well-calibrated when the fusion test is \emph{not} rejected. In this case, the causality test is a standard MMD test between $\Qcm$ and $\Qtn$. Hence, \eqref{eq:causal_test_validity_no_merge} follows from existing result on the validity of the standard MMD permutation test; see, e.g., \citet[Proposition 1]{schrab2023mmd}.

It remains to verify \eqref{eq:causal_test_validity} for the causality test with either the partial bootstrap approach described in \Cref{alg:partial_bootstrap} or with the partial permutation approach in \Cref{sec:partial_permutation}. Note that \Cref{thm:partial_boots_validity} and \Cref{thm:partial_perm_validity} cannot be used directly to show this condition, since the probability in \eqref{eq:causal_test_validity} is \emph{conditional} on the event $\{\phi_f(\Qcm, \Qhl) = 1\}$, whereas \Cref{thm:partial_boots_validity} and \Cref{thm:partial_perm_validity} concern the \emph{unconditional} probabilities. Nevertheless, \Cref{thm:partial_boots_validity_conditional} and \Cref{thm:partial_perm_validity_conditional} show that \eqref{eq:causal_test_validity} still holds even after conditioning on that event, so long as the fusion test satisfies the following condition whenever $D(Q_c, Q_h) \leq \theta$, 
\begin{align*}
    \tau \coloneqq \lim\nolimits_{n \to \infty} \Pr(\phi_f(\Qcm, \Qhl) = 1) > 0 .
\end{align*}
Since the fusion test is an MMD equivalence test, \Cref{thm:mmd_equiv_test} shows that there exists some $a \in (0, \alpha_f]$ with
\begin{align*}
    \tau = \begin{cases}
        a, & D(Q_c, Q_h) = \theta , \\
        1 , & D(Q_c, Q_h) < \theta.
    \end{cases}
\end{align*}
Therefore, \eqref{eq:causal_test_validity} holds, and all conditions in \Cref{thm:validity_general} are met, thus proving \Cref{cor:validity}. 
\end{proof}

\subsection{Proof of \Cref{cor:consistency}}
\label{sec:proof_cor_consistency}
\begin{proof}
   As we conduct the MMD equivalence test for the fusion test, condition \eqref{eq:fusion_test_validity} is satisfied. 
   
   When the assumed conditions hold, under $\cH_1$, if $\phi_f\big(\Qcm,\Qhl\big)=0$, the historical control is not fused. In this case, both partial permutation and partial bootstrap revert back to the classical two sample MMD testing between $\Qcm$ and $\Qhl$. \citet{gretton2006kernel} shows that the MMD computed with characteristic kernel converges to a positive constant under a fixed alternative when two distributions are different. Consistency of the resulting permutation or bootstrap calibrated MMD test then follows from general results on permutation and bootstrap tests (\citealp[Theorem 15.1]{lehmann2022testing}; \citealp[Theorem 23.1]{van2000asymptotic}). Another way to prove the consistency is by setting $\gamma = 1$ in both \Cref{sec:proof_partial_boots_consistency} and \Cref{sec:proof_partial_perm_consistency}. Thus, \eqref{eq:causal_test_consistency_no_merge} is met.
   
   When conditioning on $\{\phi_f\big(\Qcm,\Qhl\big)=1\}$, given $D(Q_c,Q_h)\leq \theta$, we know $\tau\coloneqq \Pr\big(\phi_f\big(\Qcm,\Qhl\big)=1\big)>0$. 
   \Cref{thm:partial_boots_consistency} shows that the causality test with partial bootstrap obtains \eqref{eq:unconditional_consistency}  if and only if $Q_c=Q_h$, or $Q_c \neq Q_h$ and $2(1-\gamma)D\big(Q_h,Q_c\big)\cos \beta < D(Q_c, Q_t)$. Similarly, with partial permutation, \eqref{eq:unconditional_consistency} stands if and only if $Q_c=Q_h$, or $Q_c \neq Q_h$ and $2\cos\beta\, D\big(Q_h,Q_c\big)<\Big(\frac{1}{1-\gamma}+\lambda\Big)D\big(Q_t,Q_c\big)$.  Hence, all conditions in \Cref{prop:consistency_conditional_merged} are met, thus we obtain \eqref{eq:causal_test_consistency} for both partial permutation and partial bootstrap.
   
   Applying \Cref{thm:consistency_general} then shows the claimed result in \Cref{cor:consistency}.
\end{proof}

\subsection{Proof of \Cref{thm:partial_perm_consistency} with converging sample size ratios}
\label{sec:example_varying_ratio}
In this section we illustrate \Cref{remark:generalize_fixed_ratio}, building on
\Cref{sec:proof_partial_perm_consistency}, by showing how to adapt the argument
to the case where the sample ratios merely converge to constants, rather than
being exactly fixed. Concretely, we now assume
$\ell/n \to c_1$ and $m/n \to c_2$. Correspondingly, we define
\begin{align*}
  \gamma_n \coloneqq \frac{m}{m+\ell}, \qquad
  \lambda_n \coloneqq \frac{n}{m+n},    
\end{align*}
and denote by $\gamma$ and $\lambda$ the fixed ratios appearing in the main
text. Under the convergence of the sample ratios we then have
$\gamma_n \to \gamma$ and $\lambda_n \to \lambda$.

\begin{proof}
To adapt the proof in \Cref{sec:proof_partial_perm_consistency} to this
setting, it suffices to show, for example, that
\begin{align*}
    \tnml = \big\|(1-\gamma_n)\big(\hat{\xi}_{h,\ell}-\hat{\xi}_{c,m}\big)-\big(\hat{\xi}_{t,n}-\hat{\xi}_{c,m}\big)\big\|^2
\end{align*}
converges to 
\begin{align*}
    \big\|(1-\gamma)\big(\xi_{h}-\xi_{c}\big)-\big(\xi_t-\xi_{c}\big)\big\|^2. 
\end{align*}
One could view $D^2(\Qcm,\Qtn,\Qhl,\gamma_n)$ as a function of not only $\Qcm$, $\Qtn$, $\Qhl$, but also $\gamma_n$, and
invoke \Cref{lem:hadamard_mmd} directly. Here we instead pursue a simple
decomposition which makes the technical steps explicit and can be reused when
extending other results to converging ratios.

We rewrite $T_{n,m,\ell}$ as
\begin{align*}
  T_{n,m,\ell}
  &= \big\|(1-\gamma_n)\big(\hat{\xi}_{h,\ell}-\hat{\xi}_{c,m}\big)
         -\big(\hat{\xi}_{t,n}-\hat{\xi}_{c,m}\big)\big\|^2 \\
  &= \big\|(1-\gamma_n)\big(\hat{\xi}_{h,\ell}-\hat{\xi}_{c,m}\big)
      -(1-\gamma)\big(\hat{\xi}_{h,\ell}-\hat{\xi}_{c,m}\big) \\
  &\hspace{4em}
      + (1-\gamma)\big(\hat{\xi}_{h,\ell}-\hat{\xi}_{c,m}\big)
      -\big(\hat{\xi}_{t,n}-\hat{\xi}_{c,m}\big)\big\|^2 \\
  &= \big\|\underbrace{(\gamma-\gamma_n)
        \big(\hat{\xi}_{h,\ell}-\hat{\xi}_{c,m}\big)}_{E_n}
      + \underbrace{(1-\gamma)\big(\hat{\xi}_{h,\ell}-\hat{\xi}_{c,m}\big)
        -\big(\hat{\xi}_{t,n}-\hat{\xi}_{c,m}\big)}_{F_n}\big\|^2.
\end{align*}
The term $F_n$ is exactly the finite sample expression appearing in the
fixed-ratio case (with $\gamma$ fixed).

Therefore, with Cauchy-Schwartz, we obtain
\begin{align*}
    \|E_n\|  &=|\gamma-\gamma_n| \big\|\hat{\xi}_{h,\ell}-\hat{\xi}_{c,m}\big\|\\
    &\leq |\gamma-\gamma_n| \sup\Big(\big\|\hat{\xi}_{h,\ell}\|+\|\hat{\xi}_{c,m}\big\|\Big)\\
    &\leq 2K|\gamma - \gamma_n|,
\end{align*}
where $K$ is a finite
constant such that the kernel embeddings are bounded by it given \Cref{ass:kernel_moment}. With $|\gamma-\gamma_n|\to0$, we obtain $\|E_n\|\to 0$.

From the fixed ratio analysis in \Cref{sec:proof_partial_perm_consistency} we
know that $F_n \to (1-\gamma)(\xi_h-\xi_c) - (\xi_t-\xi_c)$ in $\cH_k$, hence
$\|F_n\|$ is almost surely bounded for all large $n$. Together with
$\|E_n\|\to 0$ this implies
\begin{align*}
   \Big|T_{n,m,\ell} - \big\|F_n\big\|^2\Big| \to 0 
\end{align*}
almost surely. Finally, since we already established in the fixed-ratio case
that
\begin{align*}
\|F_n\|^2 \longrightarrow
  \big\|(1-\gamma)\big(\xi_h-\xi_c\big)
      -\big(\xi_t-\xi_c\big)\big\|^2,    
\end{align*}
we conclude that
\begin{align*}
    T_{n,m,\ell}
  \longrightarrow
  \big\|(1-\gamma)\big(\xi_h-\xi_c\big)
      -\big(\xi_t-\xi_c\big)\big\|^2,
\end{align*}
as claimed.
\end{proof}

\section{Additional experiment results on normal approximation}
\label{sec:further_exp_normal_approx}
We provide additional experimental results to examine the normal approximation. As discussed in \Cref{sec:normal_approximation}, the linear kernel is unbounded, so its asymptotic behavior emerges only when the sample size is sufficiently large.

In \Cref{fig:test_statistic_approximation_clt}, the approximation accuracy with the normal approximation method improves with larger sample sizes.  As shown in \Cref{fig:synthetic_mean_shift_clt}, which uses the same mean shift setting as \Cref{fig:synthetic_mean_shift}, the normal approximation underperforms relative to the partial bootstrap when the sample size is small, leading to reduced statistical power.

\begin{figure}[t]
    \centering
    \includegraphics[width=1\linewidth]{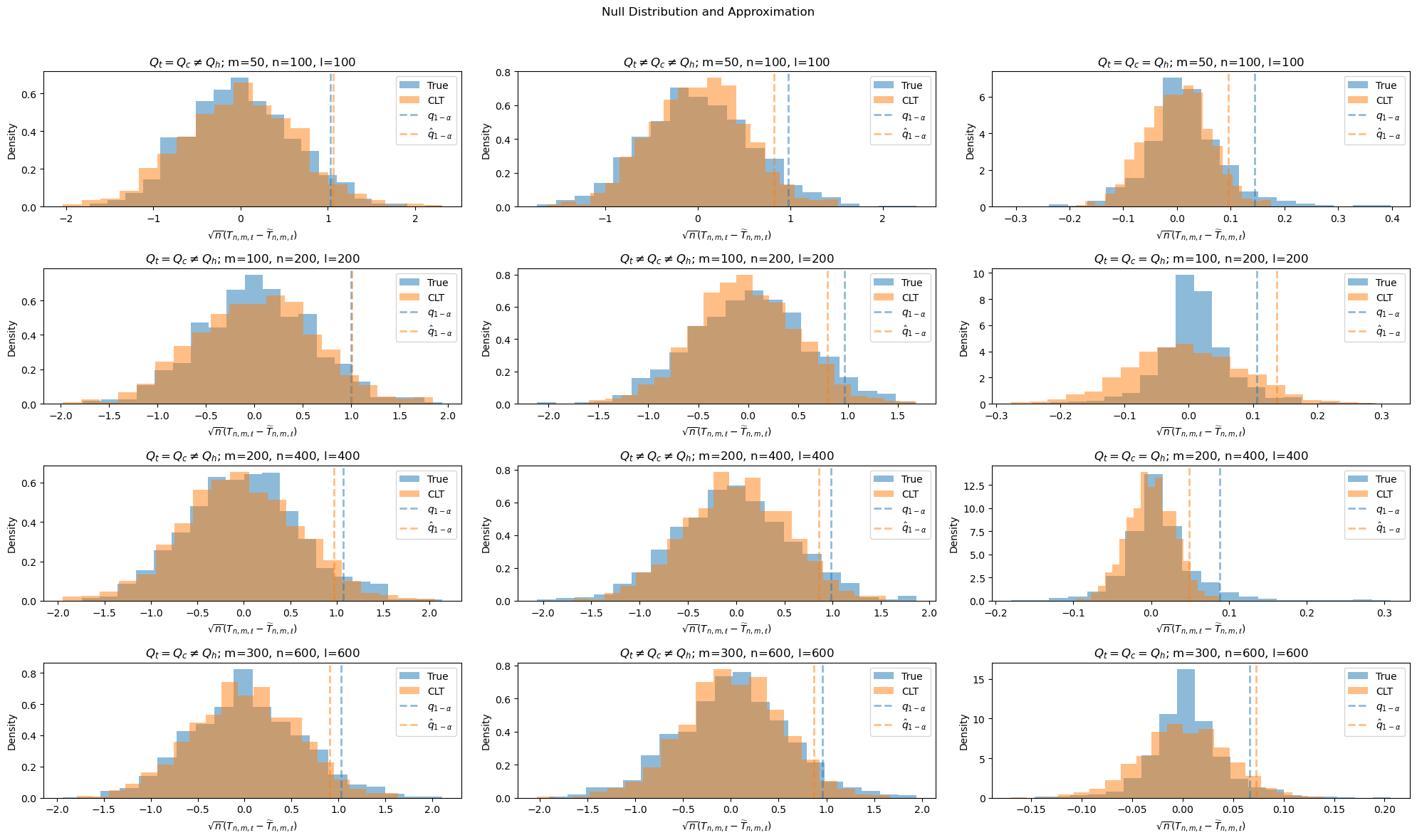}
    \caption{Comparison of the true and approximate distributions of the test statistic under partial bootstrap using normal approximation, under different sample sizes. Results from $1000$ simulations. Within each simulation, $B=1000$ bootstraps are performed.}    \label{fig:test_statistic_approximation_clt}
\end{figure}

\begin{figure}[!t]
    \centering
    \includegraphics[width=1\linewidth]{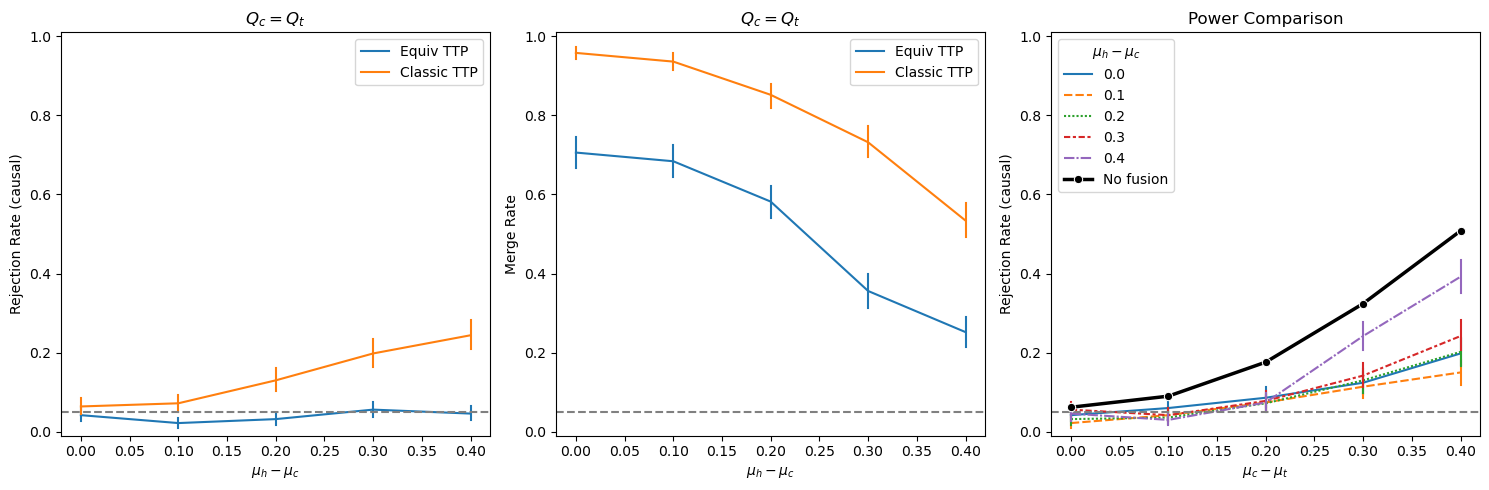}
    \caption{$\theta$ is fixed to be $0.4$, RBF kernel, mean shift, normal approximation.}
    \label{fig:synthetic_mean_shift_clt}
\end{figure}

\section{Extending Equivalence-TTP to U-statistic Estimators}
\label{app:u_stats}
We discuss how to extend our proposed TTP framework when the U-statistic \eqref{eq:mmd_u_stat} is used to estimate MMD. Throughout this work, we have relied on the V-statistic estimator \eqref{eq:mmd_v_stat} because it is guaranteed to be non-negative. This property is essential in the fusion test, namely the equivalence MMD test described in \Cref{sec:equivalence_test}. Specifically, the test statistic in the fusion test is 
\begin{align*}
    \Delta_{m,\ell}^f \,\coloneqq\, \theta - D(Q_c^m, Q_h^\ell) ,
\end{align*}
which requires computing the \emph{non-squared} MMD $D(Q_c^m, Q_h^\ell)$. With the V-statistic \eqref{eq:mmd_v_stat}, this is computed by simply taking the square root, which is always well-defined due to non-negativity. However, with a U-statistic \eqref{eq:mmd_u_stat}, the MMD estimate can take negative values, particularly when the current control $Q_c$ and the historical control $Q_h$ are similar so that $D\big(Q_h,Q_c\big) \approx 0$, or when the sample sizes are small so that \eqref{eq:mmd_u_stat} has a large variance. In those cases, the square root required to compute $\Delta_{m,\ell}^f$ is not well-defined. A simple workaround is to never reject the null hypothesis \eqref{eq:equiv_fusion_hypothesis} whenever the U-statistic is negative. However, this would prevent merging historical samples even when $D\big(Q_h,Q_c\big)$ is essentially zero, which would lead to loss of test power. 

For these practical reasons, we adopted the V-statistics estimator in the fusion test. In contrast, the \emph{causality test} can be readily extended to U-statistics, because neither the partial bootstrap nor the partial permutation procedures require taking square roots of the MMD estimator.

\bibliography{paper-ref}
\end{document}